 \pgfplotsset{compat=1.18}
\colorlet{lenzGray}{black!5!}
\tikzstyle{block} = [rectangle, draw, text centered, rounded corners,fill=lenzGray, minimum height=2em]
\tikzstyle{bullet} = [circle, fill=black, minimum width=0.05cm, inner sep=0.05cm]
\tikzset{>=latex}
\newtheorem{thm}{Theorem}[section]
\newtheorem{defn}[thm]{Definition}
\newtheorem{lemma}[thm]{Lemma}
\newtheorem{corollary}[thm]{Corollary}
\newtheorem{example}[thm]{Example}
\newtheorem*{example*}{Example}
\newtheorem{remark}[thm]{Remark}
\newtheorem*{remark*}{Remark}
\newtheorem{proposition}[thm]{Proposition}
\newcommand{\imu}{\mathrm{i}}
\newcommand{\V}{\mathcal{V}}
\newcommand{\E}{\mathcal{E}}
\newcommand{\lang}{\mathcal{L}}
\newcommand{\init}{\mathsf{init}}
\newcommand{\final}{\mathsf{term}}
\newcommand{\ve}[1]{\boldsymbol{#1}}
\newcommand{\adj}{\mathsf{adj}}
\newcommand{\rank}{\mathsf{rank}}
\newcommand{\tr}{\mathsf{tr}}
\newcommand{\N}{\mathbb{N}}
\newcommand{\R}{\mathbb{R}}
\newcommand{\DNAA}{\mathsf{A}}
\newcommand{\DNAC}{\mathsf{C}}
\newcommand{\DNAG}{\mathsf{G}}
\newcommand{\DNAT}{\mathsf{T}}
\definecolor{blueish}{HTML}{007F99}
\definecolor{purpleish}{HTML}{72177A}
\definecolor{blue}{HTML}{000099}
\tikzstyle{block} = [rectangle, draw, text centered, rounded corners, minimum height=2em,fill=black!5!]
\tikzset{>=latex}
\newcommand{\br}{\boldsymbol{r}}
\begin{document}

\title{Multivariate Analytic Combinatorics for\\ Cost Constrained Channels}

\author{
	Andreas Lenz, \thanks{Andreas Lenz  was with the  Institute for Communications Engineering, Technical University of Munich, Germany.  He  was supported in part by  the European Research Council (ERC) under the European Union's Horizon 2020 research and innovation programme (grant agreement No 801434) (email: andreas\_lenz@gmx.de).}
	Stephen Melczer,\thanks{Stephen Melczer is with the Department of Combinatorics \& Optimization, University of Waterloo, 200 University Avenue West, Waterloo, ON N2L 3G1, Canada (email: smelczer@uwaterloo.ca).  He was supported by NSERC Discovery Grant RGPIN-2021-02382.}
	Cyrus Rashtchian,\thanks{Cyrus Rashtchian is with Google Research (email: cyroid@google.com). This work was partially completed while he was a postdoctoral scholar with the Department of Computer Science and Engineering and the Qualcomm Institute at the University of California, San Diego.}
	Paul H. Siegel\thanks{Paul H. Siegel is with the Department of Electrical and Computer Engineering, Center for Memory and Recording Research, University of California, San Diego, La Jolla, CA 92093 (email: psiegel@ucsd.edu).  He was supported in part by NSF Grant CCF-2212437.} 
	\thanks{This work, under the title ``Exact Asymptotics for Discrete Noiseless Channels,'' was presented in part at the IEEE International Symposium on Information Theory, Taipei, Taiwan, June 25-30, 2023, pp. 2494-2498 [DOI:10.1109/ISIT54713.2023.10206623].}
}

%\author{IEEE Publication Technology,~\IEEEmembership{Staff,~IEEE,}
        % <-this % stops a space
%\thanks{This paper was produced by the IEEE Publication Technology Group. They are in Piscataway, NJ.}% <-this % stops a space
%\thanks{Manuscript received April 19, 2021; revised August 16, 2021.}}

% The paper headers
%\markboth{Submitted to the IEEE Transactions in Information Theory}%
{}

%\IEEEpubid{0000--0000/00\$00.00~\copyright~2021 IEEE}
% Remember, if you use this you must call \IEEEpubidadjcol in the second
% column for its text to clear the IEEEpubid mark.

\maketitle

\begin{abstract}
Analytic combinatorics in several variables is a branch of mathematics that deals with deriving the asymptotic behavior of combinatorial quantities by analyzing multivariate generating functions. We study information-theoretic questions about sequences in a discrete noiseless channel under cost constraints. Our main contributions involve the relationship between the graph structure of the channel and the singularities of the bivariate generating function whose coefficients are the number of sequences satisfying the constraints. We use these new results to invoke theorems from multivariate analytic combinatorics to obtain the asymptotic behavior of the number of cost-limited strings that are admissible by the channel. This builds a new bridge between analytic combinatorics in several variables and labeled weighted graphs, bringing a new perspective and a set of powerful results to the literature of cost-constrained channels. Along the way, we show that the cost-constrained channel capacity is determined by a cost-dependent singularity of the bivariate generating function, generalizing Shannon's classical result for unconstrained capacity, and provide a new proof of the equivalence of the combinatorial and probabilistic definitions of the cost-constrained capacity. 
\end{abstract}

\begin{IEEEkeywords}
Channel capacity, costly constrained channels, noiseless channels, Perron-Frobenius theory, analytic combinatorics
\end{IEEEkeywords}

%!TEX root = bare_jrnl.tex
\section{Introduction}
Since their introduction in Part I of Shannon's landmark 1948 paper, \textit{A Mathematical Theory of Communication}~\cite{shannon_mathematical_1948}, discrete noiseless channels have been an important subject of research for information theorists and coding theorists. They  have also found practical use in the design of transmission codes for digital communication systems and recording codes for data storage systems~\cite{marcus_introduction_2001}.

In this paper, we consider discrete noiseless channels under an  \textit{average cost constraint}. Such a constraint can arise from limitations on the transmission power in an optical fiber~\cite{djordjevic_coding_2010}, the programming  voltage in a non-volatile memory~\cite{liu_coding_2020, liu_rate-constrained_2020,liu_rate-constrained_2022}, or the synthesis time per nucleotide in a DNA-based storage system~\cite{lenz_coding_2020}. 

\subsection{Background}
We begin with some background on costly constrained channels and their capacity.

\subsubsection{Constrained channels with cost}
The labeled directed graph $G$ in Fig.~\ref{fig:synthesis} represents an example of a discrete noiseless channel describing the synthesis of DNA strands using the alternating synthesis sequence 
$\DNAA \DNAC \DNAG \DNAT ~\DNAA \DNAC \DNAG \DNAT \ldots$ (see~\cite{lenz_coding_2020}). 
The channel graph generates sequences of symbols over the alphabet $\Sigma = \{\DNAA, \DNAC,\DNAG,\DNAT\}$ by following  paths through the directed graph and reading off the symbols $\sigma(e) \in \Sigma$ labeling the edges $e$ in the path. Each edge $e$ also  has an associated positive weight or cost $\tau(e)\in {\mathbb N}$, denoting  the synthesis time of the edge label $\sigma(e)$. %(here, meaning its equivalent number of bit transmissions)
The edge labels and costs are shown in the figure as $\sigma(e)|\tau(e)$. The cost is assumed to be additive, so the cost of a sequence generated by a path in the graph is the sum of its edge costs.

Discrete noiseless channels in which all edges have unit cost are well studied \cite{marcus_introduction_2001}; here we are interested in the more general setting of varying edge costs, as in Fig.~\ref{fig:synthesis}.
	
	\begin{figure}[t]%{.5\columnwidth}
		\begin{center}
			\scalebox{1}{
				\begin{tikzpicture}
					%%		\draw[-{>[scale=1.5,
						%%          length=2,
						%%          width=3]},line width=0.4pt] (0,0) to (1,0);
					%%
					%%
					%%\draw[-{>[scale=1.5,
						%%          length=2,
						%%          width=6]},line width=0.4pt] (0,-1) to (1,-1);
					%%
					%%
					%%\draw[-{>[scale=1.5,
						%%          length=6,
						%%          width=3]},line width=0.4pt] (2,0) to (3,0);
					\node  (v0){};
					\node[block,right of=v0, text width =0.5cm,node distance=2cm] (v1) {$\DNAA$};
					\node[block, text width =0.5cm,right of=v1,node distance=4cm] (v2) {$\DNAC$};
					\node[block, text width =0.5cm,below of=v2,node distance=2.5cm] (v3) {$\DNAG$};
					\node[block, text width =0.5cm,left of=v3,node distance=4cm] (v4) {$\DNAT$};
					
					\draw[-{Triangle[scale=1.5,length=2, width=3]}] (v1) to [bend left=15] node[above] {$\DNAC|1$} (v2);
					\draw[-{Triangle[scale=1.5,length=2, width=3]}] ($(v1.east)+(0,-0.2)$) to [bend left=15] node[xshift=-.05cm,yshift=-.2cm,pos=.45,below] {2} (v3);
					\draw[-{Triangle[scale=1.5,length=2, width=3]}] (v1) to [bend left=15] node[right] {$\DNAT|3$} (v4);
					\draw[-{Triangle[scale=1.5,length=2, width=3]}] ($(v1.west)+(0,0.2)$) to [out=150,in = 210, looseness=5] node[left] {$\DNAA|4$} ($(v1.west)+(0,-0.1)$);
					
					\draw[-{Triangle[scale=1.5,length=2, width=3]}] (v2) to [bend left=15] node[right] {$\DNAG|1$} (v3);
					\draw[-{Triangle[scale=1.5,length=2, width=3]}] (v2) to [bend left=15] node[left] {} ($(v4.east)+(0,0.2)$);
					\draw[-{Triangle[scale=1.5,length=2, width=3]}] (v2) to [bend left=15] node[above] {$\DNAA|3$} (v1);
					\draw[-{Triangle[scale=1.5,length=2, width=3]}] ($(v2.east)+(0,0.2)$) to [out=30,in = -30, looseness=5] node[right] {$\DNAC|4$} ($(v2.east)+(0,-0.1)$);
					
					\draw[-{Triangle[scale=1.5,length=2, width=3]}] (v3) to [bend left=15] node[below] {$\DNAT|1$} (v4);
					\draw[-{Triangle[scale=1.5,length=2, width=3]}] ($(v3.west)+(0,0.2)$) to [bend left=15] node[above] {} (v1);
					\draw[-{Triangle[scale=1.5,length=2, width=3]}] (v3) to [bend left=15] node[left] {$\DNAC|3$} (v2);
					\draw[-{Triangle[scale=1.5,length=2, width=3]}] ($(v3.east)+(0,0.2)$) to [out=30,in = -30, looseness=5] node[right] {$\DNAG|4$} ($(v3.east)+(0,-0.1)$);
					
					\draw[-{Triangle[scale=1.5,length=2, width=3]}] (v4) to [bend left=15] node[left] {$\DNAA|1$} (v1);
					\draw[-{Triangle[scale=1.5,length=2, width=3]}] (v4) to [bend left=15] node[left] {} ($(v2.west)+(0,-0.2)$);
					\draw[-{Triangle[scale=1.5,length=2, width=3]}] (v4) to [bend left=15] node[below] {$\DNAG|3$} (v3);
					\draw[-{Triangle[scale=1.5,length=2, width=3]}] ($(v4.west)+(0,0.2)$) to [out=150,in = 210, looseness=5] node[left] {$\DNAT|4$} ($(v4.west)+(0,-0.1)$);	
				\end{tikzpicture}
			}
		\end{center}
		\caption{Channel graph for DNA synthesis using the alternating sequence $\DNAA \DNAC \DNAG \DNAT \; \DNAA \DNAC \DNAG \DNAT \; \ldots$.}
		\label{fig:synthesis}   
	\end{figure} 

\subsubsection{Cost-constrained capacity}
Shannon introduced the concept of \emph{(combinatorial) capacity} of a discrete noiseless channel as the asymptotic growth rate of the number of sequences (of \textit{variable} length) as a function of the sequence cost. In the case of Fig.~\ref{fig:synthesis}, this represents the maximum rate at which information can be encoded into the synthesized DNA strands per unit of synthesis time. Under the assumption of integer edge costs, Shannon analyzed a system of difference equations and derived the now classical result that the capacity is equal to logarithm of the largest root of a determinantal equation associated with the channel. For a channel represented by a graph $G$, we denote this capacity as $C_G$.

Khandekar et al.~\cite{khandekar_discrete_2000} extended Shannon's result to non-integer symbol costs under a mild assumption about the density of sequence costs, and expressed the capacity $C_G$ in terms of the radius of convergence of a series that can be interpreted as a generating function for the  sequence $N(t)$ representing the number of sequences with cost equal to $t$.  Their results were extended to a more general class of channels by B\"{o}cherer et al.~\cite{bocherer_capacity_2010-1}, who expressed  the capacity in terms of a singularity of a complex generating function $F(x)$ for $N(t)$. They interpreted this as a generalization of results from  analytic combinatorics in a single variable~\cite{flajolet_analytic_2009}, a connection that was   established by  B\"{o}cherer~\cite{bocherer_analytic_2007}, who used it to analyze the sub-exponential asymptotics of $N(t)$. Khandekar et al.~\cite{khandekar_discrete_2000} also clarified and extended Shannon's proof of the equivalence of the combinatorial capacity and the  \emph{probabilistic capacity} defined as the maximum entropy rate of a Markov process generating the sequences of the channel. This relationship was further addressed in the setting of more general channels in \cite{bocherer_mximum_entropy_rate_2008,bocherer_capacity_2010}.

In this paper, we consider generalizations of these results to discrete noiseless channels {\bf subject to an  average cost constraint}. In the context of Fig.~\ref{fig:synthesis}, this corresponds to a constraint on the average synthesis time per nucleotide. 

\noindent
\subsection{Contributions}

Our results stem from an integration of contributions within and across three disparate areas, including (a) new results in the spectral theory pertaining to eigenvalues and eigenvectors of graphs and matrices associated with discrete noiseless channels; (b) a geometric and functional analysis of singularities of the complex bivariate generating function that encodes cost and length properties of the channel sequences; and (c) combining these results with a novel application of methods from analytic combinatorics in several variables (ACSV)~\cite{melczer_invitation_2021,PemantleWilsonMelczer2024} to precisely evaluate the asymptotic behavior of the diagonal coefficients of the bivariate generating function. Part of this work was presented at the 2023 IEEE International Symposium on Information Theory~\cite{lenz_isit_2023}.

As a by-product of our main results, we obtain two interesting contributions to information theory. First, we show that the cost-constrained capacity (Definition~\ref{def:capacity:costly}) of the discrete noiseless channel can be expressed explicitly as a  simple function of a specific two-dimensional singularity of the bivariate generating function, thereby generalizing Shannon's classical formula for the channel capacity without cost constraint. This determines the exact asymptotics of the number of fixed-length sequences with limited cost, and fully characterizes the capacity-cost function. Second, the expression for the cost-constrained capacity provides a direct proof of the equivalence of the combinatorial definition of cost-constrained capacity and the probabilistic definition. This equivalence was first established in 2006 via converse inequalities~\cite{soriaga_design_2006}.  

\subsection{Technical Overview}

We next give a brief sketch of the technical underpinnings of our results. Formal definitions of some of the terms used are provided in~Section~\ref{sec:preliminaries}.

\paragraph{Spectral analysis of channel graphs} 
Central to our analysis is a set of new results about a strongly-connected graph $G$. These results  illuminate properties of the spectral radius $\rho_G(x)$ of the cost-enumerator matrix $\ve{P}_G(x)$, which reflects the edge connections and edge costs in $G$. We highlight two graph properties that play a key role in our analysis, \emph{cost diversity} and \emph{cost periodicity}. A cost-diverse graph has at least one pair of equal-length paths with different costs that connect the same pair of vertices. In a cost-periodic graph, for each pair of vertices the costs of all connecting paths of the same length are congruent modulo a fixed integer. We show that cost periodicity is equivalent to a useful formulation of the edge cost function called a \emph{periodic coboundary condition}. Then, we use these definitions to prove structural properties of the eigenvalues of $\ve{P}_G(x)$ on the complex unit circle and log-convexity properties of the spectral radius. We also show that the complement of cost-diverse graphs, namely \emph{cost-uniform} graphs, plays a role in our characterization of costly constrained channels.

\paragraph{Generating functions and singularity analysis for cost-diverse graphs}
We define a generating function $\ve{F}_G(x,y)$ whose coefficients encode information about the number of paths of given length $n$ and cost $t$ emanating from vertices of $G$, denoted $N(t,n)$. For a cost constraint $W$, we let $\alpha=W^{-1}$. To study the asymptotics of $N(t, \lfloor \alpha t \rfloor)$ with the methods of ACSV, we use the previously derived properties of the spectral radius of $\ve{P}_G(x)$ to characterize the  singularities of  $\ve{F}_G(x,y)$. We first determine the set of \emph{minimal} singularities.  For a cost-diverse graph, we further identify those singular points that are \emph{smooth},  satisfy the \emph{critical equation} for $\alpha$, and are \emph{nondegenerate}.

\paragraph{Asymptotic expansions via ACSV}
Our singularity analysis allows the application of a fundamental result in ACSV regarding asymptotic properties of coefficients of multivariate generating functions~\cite[Cor. 5.2]{melczer_invitation_2021}. This leads to our main contribution (Theorem~\ref{thm:fixed-length}), which gives a complete characterization of the asymptotic behavior of $N_G(t,\lfloor \alpha t \rfloor)$ for cost-diverse graphs in two regions of $\alpha$. These two regions provide a characterization of the capacity-cost function $C_G(\alpha)$ in a concise form that elegantly generalizes Shannon's formula (Theorem~\ref{thm:fixed-length:capacity}): one corresponds to a linear scaling of $C_G(\alpha)$, and the other to a non-linear concave behavior. We identify the threshold value of $\alpha$ separating the two regions, as well as the value $\alpha^*$ in the concave region corresponding to the classical combinatorial capacity, $C_G(\alpha^*)=C_G$. The expression for $C_G(\alpha)$ maps directly to the known formula for the cost-constrained capacity $C_G(W)$ in its probabilistic interpretation~\cite{mceliece_maximum_1983,justesen_maxentropic_1984,khandekar_discrete_2000,soriaga_design_2006}. From these results, we extract an asymptotic expansion for the number of paths generated by an arbitrary strongly connected graph and an exact representation of the number of the cost-constrained paths (Theorem~\ref{thm:variable-length:exact}).

\paragraph{Applications}
Subsequence enumeration is a problem that arises in many areas (bioinformatics, information theory, and coding theory). However, developing tight and explicit formulas is an open question in general. Current results are either unwieldy~\cite{mercier_number_2008} or only apply to special cases, such as the alternating sequence~\cite{hirschberg_tight_2000}. Our motivation comes from theoretical models of DNA synthesis in DNA-based storage systems, specifically for a parallel array-based synthesis process.\footnote{A description of the biochemical synthesis process can be found in~\cite{makarychev_batch_2021, kosuri_large-scale_2014,yazdi_dna-based_2015}.} In~\cite{lenz_coding_2020}, the authors provide a connection between costly constrained channels, subsequence enumeration, and efficient DNA synthesis. They show that the capacity of a suitably defined channel characterizes the information rate of synthesized sequences.
\begin{remark}
The tools of ACSV have found other applications in coding theory. They have been used to 
 study asymptotic properties of  runlength-limited sequences with bit-shift correcting properties~\cite{kovacevic_runlength-limited_2019} and those with constraints on their weight and/or number of runs~\cite{KovVuk2022}. They have also been used to determine  Gilbert-Varshamov (GV) bounds for the sticky insertion channel and for the DNA synthesis channel in~\cite{Goyal2023}, as well as for  optimal codes in $L_1$ (or Manhattan) metric in~\cite{Goyal2025}.
\end{remark}

%\begin{remark}
%The tools of ACSV have also been applied to study asymptotic properties of  runlength-limited sequences with bit-shift correcting properties~\cite{kovacevic_runlength-limited_2019} and those with constraints on their weight and/or number of runs~\cite{KovVuk2022}. They have also been used to determine the Gilbert-Varshamov (GV) bound for the sticky insertion channel and for the DNA synthesis channel in~\cite{Keshav2023}
%\end{remark}

%
%
\section{Preliminaries} \label{sec:preliminaries}
We start by setting up basic notation on labeled and weighted graphs, followed by an introduction to generating functions of general integer sequences and a presentation of the generating function of the number of paths with limited cost.
\subsection{Labeled and Weighted Graphs}
Consider a labeled directed graph $G = (\V,\E,\sigma,\tau)$ with vertices $\V$ and edges $\E$. Each edge $e \in \E$ has an initial vertex $\init(e) \in \V$ and a terminal vertex $\final(e) \in \V$. Furthermore, the edges are labeled via a symbol mapping $\sigma : \E \mapsto \Sigma$, where $\Sigma$ is a finite symbol alphabet, and have positive integer weights or costs defined by a cost mapping $\tau: \E \mapsto \N$. A path $\ve{p} = (e_{1},\dots,e_{n})$ of length $n$ is a sequence of edges $e_{1},\dots,e_{n} \in \E$ such that, for all $i \in\{1,\dots,n-1\}$, the final vertex $\final(e_i)$ of the $i$-th edge is the same as the initial vertex $\init(e_{i+1})$ of the next edge. The path starts in $\init(e_1)$ and ends in $\final(e_n)$. A path generates a word $\sigma(\ve{p}) = (\sigma(e_1),\dots,\sigma(e_n)) \in \Sigma^n$ and has cost $\tau(\ve{p}) = \tau(e_1)+\dots+\tau(e_n)$.  For convenience, we sometimes refer to $G = (\V,\E,\sigma,\tau)$ simply as a graph, when the context is clear.

\begin{defn} \label{def:strongly:connected}
	A graph $G = (\V,\E,\sigma,\tau)$ is \emph{strongly connected} if for any two vertices $v_i,v_j \in \V$ there exists a directed path that connects $v_i$ with $v_j$.
\end{defn}

A desirable graph property is that all distinct paths emerging from a vertex generate distinct words. This is guaranteed by the following notion of a graph being \emph{deterministic}~\cite{marcus_introduction_2001}, also known as \emph{right-resolving}~\cite{lind_introduction_1995}.
\begin{defn} \label{def:deterministic}
	A graph $G = (\V,\E,\sigma,\tau)$ is \emph{deterministic} if for all vertices $v \in \V$ the symbol labels $\sigma(e)$ of all edges $e \in \E$ with the same initial vertex $\init(e) = v$ are distinct.
\end{defn}
We use the terms \emph{constrained channel with cost} or \emph{costly constrained channel} to refer to a graph $G = (\V,\E,\sigma,\tau)$ that is strongly connected and deterministic. 

Note that confining to deterministic graphs, as we will do in the sequel, does not restrict the underlying system of constrained sequences, as any labeled graph $G_{\rm lab}=(V,E,\sigma)$ can be represented by an equivalent deterministic graph $G_{\rm lab}'=(V',E',\sigma '$) \cite[Prop. 2.2]{marcus_introduction_2001}. However, we also note that there may not be a cost function $\tau'$ for the equivalent graph such that $G=(V,E,\sigma,\tau)$ is equivalent to $G'=(V',E',\sigma',\tau')$ as a costly constrained channel (see Example 5 in~\cite{abusini_shortmers_2023}).

Periodicity properties of graphs are essential for the subsequent analysis. We start with the notion of the period of a graph.

\begin{defn} \label{def:period}
	Let $G = (\V,\E,\sigma,\tau)$ be a strongly connected graph. We say that $G$ has \emph{period} $d$ if $d$ is the largest integer with the property that for each pair of vertices $v_i$ and $v_j$ the lengths of all paths $\ve{p}$ connecting $v_i$ and $v_j$ are congruent modulo $d$.
\end{defn}
Note that our definition differs from that in \cite[Section 3.3.2]{marcus_introduction_2001}, where the period is defined as the greatest common divisor of all cycle lengths. However, as proven in Lemma~\ref{lemma:gcd:cycle:lengths} in Appendix \ref{appendix}, any graph that has period $d$ in the sense of Definition~\ref{def:period} also has period $d$ in the sense of \cite{marcus_introduction_2001}.

We next establish the notions of uniformity and periodicity of the path costs in a strongly connected graph.

\begin{defn} \label{def:cost:diverse}
	A strongly connected graph $G = (\V,\E,\sigma,\tau)$ is \emph{cost-uniform} if for each pair of vertices $v_i$ and $v_j$, and each length $m$, the costs of all length-$m$ paths $\ve{p}$ connecting $v_i$ and $v_j$ are the same. If $G$ is not cost-uniform, then we say that $G$ is \emph{cost-diverse}.
\end{defn}

\begin{defn} \label{def:cost:period}
For a cost-diverse graph $G = (\V,\E,\sigma,\tau)$ we define the \emph{cost period} $c \in \N$ to be the largest integer with the property that for each pair of vertices $v_i$ and $v_j$, and each length $m$, the costs $\tau(\ve{p})$ of all length-$m$ paths $\ve{p}$ connecting $v_i$ and $v_j$ are congruent modulo $c$.  For a cost uniform graph the congruence holds for all positive integers and we refer to it as a graph with cost period 0. 
\end{defn}

Fig.~\ref{fig:examples:cost:diversity} illustrates the properties discussed above.
\begin{figure*}
	\subfloat[Cost-diverse graph with a single vertex.\label{fig:single:vertex:graph}]{%
		\begin{tikzpicture}
			\node[block, text width =0.5cm,node distance=2cm] (v1) {};
			\draw[->] ($(v1.west)+(0,0.1)$) to [out=150,in = 210, looseness=20] node[left] {$a|1$} ($(v1.west)+(0,-0.1)$);
			\draw[->] ($(v1.east)+(0,0.1)$) to [out=30,in = 330, looseness=20] node[right] {$b|2$} ($(v1.east)+(0,-0.1)$);
		\end{tikzpicture}
	}\hfill%
	\subfloat[Cost-uniform graph.\label{fig:example:constant:cost}]{%
		\begin{tikzpicture}		
			\node[block, text width =0.5cm,node distance=2cm] (v1) {$\vphantom{a}$};
			\node[block, right of=v1, text width =0.5cm,node distance=2cm] (v2) {$\vphantom{a}$};
			\draw[-{Triangle[scale=1.5,length=2, width=3]}] ($(v1.west)+(0,0.1)$) to [out=150,in = 210, looseness=15] node[left] {$a|1$} ($(v1.west)+(0,-0.1)$);
			\draw[-{Triangle[scale=1.5,length=2, width=3]}] (v1) to [bend left] node[above] {$b|1$} (v2);
			\draw[-{Triangle[scale=1.5,length=2, width=3]}] (v2) to [bend left] node[above] {$a|1$} (v1);
			%\draw[-{Triangle[scale=1.5,length=2, width=3]}] ($(v2.east)+(0,0.1)$) to [out=30,in = 330, looseness=20] node[right] {$2$} ($(v2.east)+(0,-0.1)$);	
		\end{tikzpicture}
	}\hfill%
	\subfloat[Cost-diverse graph.\label{fig:example:cost:diverse}]{%
		\begin{tikzpicture}		
			\node[block, text width =0.5cm,node distance=2cm] (v1) {$\vphantom{a}$};
			\node[block, right of=v1, text width =0.5cm,node distance=2cm] (v2) {$\vphantom{a}$};
			\draw[-{Triangle[scale=1.5,length=2, width=3]}] ($(v1.west)+(0,0.1)$) to [out=150,in = 210, looseness=15] node[left] {$a|2$} ($(v1.west)+(0,-0.1)$);
			\draw[-{Triangle[scale=1.5,length=2, width=3]}] (v1) to [bend left] node[above] {$b|1$} (v2);
			\draw[-{Triangle[scale=1.5,length=2, width=3]}] (v2) to [bend left] node[above] {$a|1$} (v1);
			%\draw[-{Triangle[scale=1.5,length=2, width=3]}] ($(v2.east)+(0,0.1)$) to [out=30,in = 330, looseness=20] node[right] {$2$} ($(v2.east)+(0,-0.1)$);	
		\end{tikzpicture}
	}\hfill%
	\subfloat[Cost-uniform graph.\label{fig:example:complete:non:cost:diverse}]{%
		\begin{tikzpicture}
			
			\node[block, text width =0.5cm,node distance=2cm] (v1) {$\vphantom{a}$};
			\node[block, right of=v1, text width =0.5cm,node distance=2cm] (v2) {$\vphantom{a}$};
			\draw[-{Triangle[scale=1.5,length=2, width=3]}] ($(v1.west)+(0,0.1)$) to [out=150,in = 210, looseness=15] node[left] {$a|2$} ($(v1.west)+(0,-0.1)$);
			\draw[-{Triangle[scale=1.5,length=2, width=3]}] (v1) to [bend left] node[above] {$b|1$} (v2);
			\draw[-{Triangle[scale=1.5,length=2, width=3]}] (v2) to [bend left] node[above] {$a|3$} (v1);
			\draw[-{Triangle[scale=1.5,length=2, width=3]}] ($(v2.east)+(0,0.1)$) to [out=30,in = 330, looseness=15] node[right] {$b|2$} ($(v2.east)+(0,-0.1)$);	
		\end{tikzpicture}
	}\hfill%
	\subfloat[Cost-diverse graph with cost period $2$.\label{fig:example:cost:period:2}]{%
		\begin{tikzpicture}
			\node[block, text width =0.5cm,node distance=2cm] (v1) {$\vphantom{a}$};
			\node[block, right of=v1, text width =0.5cm,node distance=2cm] (v2) {$\vphantom{a}$};
			\draw[-{Triangle[scale=1.5,length=2, width=3]}] ($(v1.west)+(0,0.1)$) to [out=150,in = 210, looseness=15] node[left] {$a|2$} ($(v1.west)+(0,-0.1)$);
			\draw[-{Triangle[scale=1.5,length=2, width=3]}] (v1) to [bend left] node[above] {$b|1$} (v2);
			\draw[-{Triangle[scale=1.5,length=2, width=3]}] (v2) to [bend left] node[above] {$a|1$} (v1);
			\draw[-{Triangle[scale=1.5,length=2, width=3]}] ($(v2.east)+(0,0.1)$) to [out=30,in = 330, looseness=15] node[right] {$b|2$} ($(v2.east)+(0,-0.1)$);	
		\end{tikzpicture}
	}\hfill%
	\subfloat[Graph with period $2$ and cost period $3$.\label{fig:example:cost:period:3}]{%
		\begin{tikzpicture}
			\node[block, text width =0.5cm,node distance=2cm] (v1) {$\vphantom{a}$};
			\node[block, right of=v1, text width =0.5cm,node distance=2cm] (v2) {$\vphantom{a}$};
			\node[block, right of=v2, text width =0.5cm,node distance=2cm] (v3) {$\vphantom{a}$};
			\draw[-{Triangle[scale=1.5,length=2, width=3]}] (v1) to [bend left] node[above] {$a|1$} (v2);
			\draw[-{Triangle[scale=1.5,length=2, width=3]}] (v2) to [bend left] node[above] {$b|1$} (v1);
			\draw[-{Triangle[scale=1.5,length=2, width=3]}] (v2) to [bend left] node[above] {$a|2$} (v3);
			\draw[-{Triangle[scale=1.5,length=2, width=3]}] (v3) to [bend left] node[above] {$b|3$} (v2);
		\end{tikzpicture}
	}\hfill%
	
%	\begin{subfigure}{.5\textwidth}
%		\centering
%		\begin{tikzpicture}
%			
%			\node[block, text width =0.5cm,node distance=2cm] (v1) {$\vphantom{a}$};
%			\node[block, right of=v1, text width =0.5cm,node distance=2cm] (v2) {$\vphantom{a}$};
%			\node[block, right of=v2, text width =0.5cm,node distance=2cm] (v3) {$\vphantom{a}$};
%			\draw[-{Triangle[scale=1.5,length=2, width=3]}] (v1) to [bend left] node[above] {$c|1$} (v2);
%			\draw[-{Triangle[scale=1.5,length=2, width=3]}] (v2) to [bend left] node[above] {$b|1$} (v1);
%			\draw[-{Triangle[scale=1.5,length=2, width=3]}] (v2) to [bend left] node[above] {$b|2$} (v3);
%			\draw[-{Triangle[scale=1.5,length=2, width=3]}] (v3) to [bend left] node[above] {$a|3$} (v2);
%		\end{tikzpicture}
%		\caption{Another graph with period $2$ and cost period $3$}
%		\label{fig:example:cost:period:3}
%	\end{subfigure}
	
	\caption{Examples illustrating graph properties}
	\label{fig:examples:cost:diversity}
	
\end{figure*}
\begin{example}
	All graphs in Fig.~\ref{fig:examples:cost:diversity} are easily verified to be deterministic. Fig.~\ref{fig:single:vertex:graph} is a cost-diverse graph with a single vertex. Fig.~\ref{fig:example:constant:cost} is a graph with constant edge cost equal to 1, thus all paths of length $m$ have cost exactly $m$ and the graph is cost-uniform. Fig.~\ref{fig:example:cost:diverse}, on the other hand, is cost-diverse: there are two paths of length $2$ from the left vertex to itself having costs $2$ and $4$, respectively. Fig.~\ref{fig:example:complete:non:cost:diverse} shows a cost-uniform graph, since any cycle of length $m$ from the left vertex to itself has cost $2m$, any cycle of length $m$ from the right vertex to itself has cost $2m$, any path of length $m$ from the left to the right vertex has cost $2m-1$, and  any path of length $m$ from the right to the left vertex has cost $2m+1$. The graph in Fig.~\ref{fig:example:cost:period:2} has cost period $2$, because the cost of any cycle in the graph is even, i.e., a multiple of $2$, and the costs of all paths connecting the left and right vertex are odd, i.e.,  congruent to $1$ modulo $2$.  Similarly, all length-$m$ cycles in the graph in Fig.~\ref{fig:example:cost:period:3} have costs congruent to $m$ modulo $3$. The same is true for length-$m$ paths connecting the left vertex to the middle vertex, and vice versa.  Length-$m$ paths from these vertices to the rightmost vertex have costs congruent to $m+1$ modulo $3$ and length-$m$ paths from the rightmost vertex to either the leftmost or middle vertices have costs congruent to $m+2$ modulo $3$. Thus, the graph has cost period $3$. In addition, the graph has period 2.  This is because all cycles have even length, and this includes cycles of length 2;  all paths between adjacent vertices have odd length; and all paths from the leftmost vertex to the rightmost vertex, and vice versa, have even length.
\end{example}

We now proceed with a novel property that significantly facilitates our analysis and which we will prove to be equivalent to  cost periodicity in Lemma~\ref{lemma:equiv}.

\begin{defn} \label{def:coboundary:condition}
	A strongly connected graph $G = (\V,\E,\sigma,\tau)$ satisfies the \emph{$c$-periodic coboundary condition} if $c$ is the largest integer such that there exists a function $B: \V \rightarrow \mathbb{Q}$ and a constant $b \in \mathbb{Q}$ such that if $e \in  \E$ is an edge from vertex $v_i$ to vertex $v_j$ then the edge cost satisfies
	$$
	\tau(e) \equiv b + B(v_j)-B(v_i) \pmod c.
	$$
	We say that a graph satisfies the \emph{coboundary condition} if the congruence above holds without the modulo operation.
\end{defn}

For many of our results, we analyze the spectrum of an adjacency matrix associated with the labeled and weighted graph $G$.
In fact, we consider a family of adjacency matrices $\ve{P}_G(x)$, parameterized by a value $x$.

\begin{defn} \label{def:cost:enumerator:spectral:radius}
	Given a strongly connected graph $G = (\V,\E,\sigma,\tau)$ with vertices $\V = \{v_1,\dots,v_{|\V|}\}$, the \emph{cost enumerator} matrix $\ve{P}_G(x)$ of $G$ is the $|\V| \times |\V|$ matrix with entries
	$$[\ve{P}_G(x)]_{ij} = \sum_{e \in \E:~ \begin{subarray}{l}\init(e)=v_i,\\ \final(e)=v_j\end{subarray}} x^{\tau(e)}.$$
	We also denote the spectral radius of $\ve{P}_G(x)$ by
	$$ \rho_G(x) = \max \{ |\lambda(x)|: \lambda(x) \text{ is an eigenvalue of } \ve{P}_G(x) \}. $$
\end{defn}
We treat the parameter $x$ as either real-valued or complex-valued, depending on the context. Later we will see that $\rho_G(x)$ plays a central role in the asymptotic behavior of the number of limited-weight paths through $G$. An important quantity is the number of distinct words that are contained in the language of a system. 
\begin{defn}\label{defn:numberwords}
	Given a graph $G = (\V,\E,\sigma,\tau)$,
	for an arbitrary vertex $v \in \V$ we define $\lang_{G,v}(t)$ to be the cost-$t$ \emph{follower set} of $v$, i.e., the set of all words that are generated by some path of cost at most $t$ that starts at $v$. The size of the cost-$t$ follower set is denoted by $N_{G,v}(t) \triangleq |\lang_{G,v}(t)|$. Accordingly, we define $\lang_{G,v}(t,n) \triangleq \lang_{G,v}(t) \cap \Sigma^n$ to be the length-$n$ follower set with size $N_{G,v}(t,n) \triangleq |\lang_{G,v}(t,n)|$.
\end{defn}
 The central quantity of interest for a costly constrained channel is the exponential growth rate of the size of the follower set. This term is often referred to as its \emph{capacity}. The capacity of a channel is independent of the starting vertex, and we omit this in the definition. 
\begin{defn} \label{def:capacity:costly}
	The \textbf{variable-length capacity} of a costly constrained channel $G$ is
	$$
	C_{G} = \limsup_{t\rightarrow\infty}\frac{\log(N_{G,v}(t)) }{t},
	$$
	while the \textbf{fixed-length capacity} is
	$$
	C_G(\alpha) = \limsup_{t\rightarrow\infty}\frac{\log(N_{G,v}(t,\lfloor \alpha t \rfloor)) }{t}.
	$$
\end{defn}

\noindent

Here we use the terms \emph{variable-length} and \emph{fixed-length} capacity to stress their defining nature. In the literature the two quantities are often referred to as (combinatorial) capacity and cost-constrained capacity.

Shannon also introduced a natural probabilistic definition of capacity in terms of the entropy of stationary Markov chains  on the channel graph. A Markov chain $\mathcal{P}$  on the graph $G = (\V,\E,\sigma,\tau)$  defines an edge probability distribution 
$\mathcal{P}:\mathcal{E}\mapsto [0,1]$  and a vertex  probability distribution $\pi:\mathcal{V}\mapsto [0,1]$ where $\pi(v) = \sum_{e: \init(e)=v} \mathcal{P}(e)$.   %We assume that $\mathcal{P}(e) >0$, for all $e\in\mathcal{E}$.
The Markov chain is stationary if $\sum_{e: \final(e)=v} \mathcal{P}(e) = \pi(v)$.
\linebreak
The conditional edge probabilities $q:\mathcal{E}\mapsto [0,1]$ are defined by 
$$
q(e) = \left\{\begin{array}{cl}
\mathcal{P}(e)/\pi(\init(e)) & {\rm if} \; \pi(\init(e))>0\\
0& {\rm otherwise}.
\end{array}
\right.
$$
The entropy $\mathsf{H}(\mathcal{P})$ of the Markov chain $\mathsf{H}(\mathcal{P})$ is defined by 
$$
\mathsf{H}(\mathcal{P}) = -\sum_{v\in\mathcal{V}} \pi(v) \sum_{e: \init(e)=v} q(e).
$$
The Markov chain $\mathcal{P}$ also has a associated average  edge cost $\mathsf{T}(\mathcal{P})$ defined by
$$
\mathsf{T}(\mathcal{P}) = \sum_{e\in\mathcal{E}} \mathcal{P}(e)\tau(e).
$$

%Shannon also introduced a natural probabilistic definition of capacity in terms of the entropy $H(\mathcal{P})$ of stationary Markov chains $\mathcal{P}$ on the channel graph. The Markov chain defines a probability distribution on the edges of the graph,  and a corresponding average  edge cost $T(\mathcal{P})$.

\begin{defn}\label{defn:cprob}
The \textbf{probabilistic capacity} of a costly constrained channel $G$   is
\begin{equation*}
    \label{eq:cprob}
    C_{\rm prob} = \sup_{\mathcal{P}} \frac{\mathsf{H}(\mathcal{P})}{\mathsf{T}(\mathcal{P})},
\end{equation*}
where the supremum is taken over all stationary Markov chains on $G$.
\end{defn}

Shannon remarked that $ C_{\rm prob} \leq C_{G}$, and then
explicitly identified a Markov chain that achieves a normalized entropy equal to $C_G$, thus proving the fundamental equivalence
$
    C_{G} = C_{\rm prob}.
$

As with combinatorial capacity, there is a natural generalization of the probabilistic capacity to the costly constrained setting in which the supremum is taken over Markov chains with average cost at most $W$.

\begin{defn}\label{defn:dprobW}
The  \textbf{cost-constrained probabilistic capacity} of a costly constrained channel $G$ with average symbol cost constraint $W$ is 
\begin{equation*}
    \label{eq:cprobW}
    C_{\rm prob}(W) = \sup_{\mathcal{P}: \mathsf{T}(\mathcal{P})\leq W} \frac{\mathsf{H}(\mathcal{P})}{\mathsf{T}(\mathcal{P})},
\end{equation*}
where the supremum is over stationary Markov chains on $G$ with average cost no more than $W$.
\end{defn}

A concise parametric characterization of $C_{\rm prob}(W)$, found by  constrained optimization methods, is stated in~\cite{justesen_maxentropic_1984,khandekar_discrete_2000, mceliece_maximum_1983}.

\subsection{Generating Functions} \label{sec:generating:functions}
The methods of \emph{analytic combinatorics} derive asymptotic properties of a sequence from analytic properties of its generating function~\cite{flajolet_analytic_2009,melczer_invitation_2021}. Throughout this section, the sequences of interest are the bivariate\footnote{The term bivariate refers to the fact that the integer sequences $N_{G,v}(t,n)$ depend on two variables $t$ and $n$.} sequences $N_{G,v}(t,n)$, whose generating functions we denote
$$ F_{G,v}(x,y) = \sum_{n\geq0} \sum_{t\geq 0} N_{G,v}(t,n) x^ty^n $$
for complex variables $x$ and $y$. As the sequence $N_{G,v}(t,n)$ admits a linear recursion in the variables $t$ and $n$, which we will elaborate on in Section~\ref{sec:derivation:gen}, the generating function $F_{G,v}(x,y)$ is a rational function, and we write
$$ F_{G,v}(x,y) = \frac{Q_{G,v}(x,y)}{H_G(x,y)} $$
for some polynomials $Q_{G,v}(x,y)$ and $H_G(x,y)$. Since $N_{G,v}(t) = \sum_{n\geq 0} N_{G,v}(t,n)$, for the variable-length case, we will regularly abbreviate  the generating function of the integer series $N_{G,v}(t)$ as $F_{G,v}(x) \triangleq F_{G,v}(x,1)$, with numerator $Q_{G,v}(x) \triangleq Q_{G,v}(x,1)$ and denominator $H_G(x) \triangleq H_G(x,1)$.

\begin{lemma} \label{lemma:generating:function}
	Let $G$ be a deterministic graph and let $v$ be a vertex of $G$. The generating function $F_{G,v}(x,y)$ of $N_{G,v}(t,n)$ is given by the entry of
	$$ \ve{F}_G(x,y) = \frac{1}{1-x} \cdot (\ve{I}-y\ve{P}_G(x))^{-1} \ve{1}^\mathrm{T}$$
	corresponding to the vertex $v$, where $\ve{1} = (1,\dots,1) \in \mathbb{R}^{|\V|}$ and $\ve{I}$ is the $|\V| \times |\V|$ identity matrix.
\end{lemma}
We will prove Lemma~\ref{lemma:generating:function} in Section \ref{sec:derivation:gen}. Note that $\ve{I}-y\ve{P}_G(x)$ is not always invertible. However, the values of $x$ and $y$ for which $\ve{I}-y\ve{P}_G(x)$ is singular are singularities of the rational function $F$, which are precisely the objects of interest that determine the asymptotic behavior of the integer sequence $N_{G,v}(t,n)$.

We also remark that the deterministic assumption is crucial to establishing the correspondence between channel sequences and paths in the channel graph that underlies the derivation of the generating function in Lemma~\ref{lemma:generating:function}, see proof of Lemma~\ref{lemma:recursion}. The multivariate singularity analysis of this generating function, in turn, leads to our main results on channel capacity and precise asymptotics of the number of channel sequences in Section~\ref{sec:multivariate:main:results}. 
\begin{example}
	Consider the graph in Fig.~\ref{fig:single:vertex:graph}. In this case $\ve{P}_G(x) = x+x^2$, and thus the generating function of the single vertex is given by%
	$$F_G(x,y) = \frac{1}{(1-x)(1-y(x+x^2))}.$$
\end{example}
% 
%
%
%!TEX root = bare_jrnl.tex
\section{Main Results} \label{sec:multivariate:main:results}
We are now able to state our main results.
We characterize the fixed-length (i.e., cost-constrained) combinatorial capacity $C_G(\alpha)$  of a discrete noiseless channel described by a strongly connected, deterministic, cost-diverse channel graph (Theorem~\ref{thm:fixed-length:capacity}). As an immediate corollary, we recover the equivalence of the combinatorial and probabilistic capacities in the cost-constrained setting. We also recover Shannon's result on the variable-length capacity $C_G$ for a strongly connected, deterministic channel graph (Theorem~\ref{thm:variable-length:capacity}) .These results are a consequence of a precise  characterization  of the  asymptotic behavior of the number of fixed-length followers $N_{G,v}(t,\alpha t)$ (Theorem~\ref{thm:fixed-length})   and an approximation for  the number of variable-length followers $N_{G,v}(t)$ (Theorem~\ref{thm:variable-length:exact})  in the channel graph. We illustrate the capacity results by deriving the fixed-length and variable-length capacities of $q$-ary alternating sequences (Proposition~\ref{prop:alternating:sequence}).
\subsection{Combinatorial and Probabilistic Capacity}
%
%Our core results describe the number of limited-cost followers in a given costly graph, and the corresponding capacity. Theorem~\ref{thm:fixed-length:capacity}, which characterizes the capacity $C_G(\alpha)$, is derived from an explicit expression for the first order approximation of the number of fixed-length followers $N_{G,v}(t,\alpha t)$ in Theorem~\ref{thm:fixed-length}. Similarly, Theorem~\ref{thm:variable-length:capacity} describes the capacity $C_G$ using an approximation for the number of variable-length followers $N_{G,v}(t)$ given in Theorem~\ref{thm:variable-length:exact}. We begin with the characterization of the fixed-length capacity for arbitrary strongly connected and cost-diverse graphs.
%
\begin{thm}\label{thm:fixed-length:capacity} 
Let $G$ be a strongly connected, deterministic, and cost-diverse graph. Let $\alpha_{G}^{\mathsf{lo}} \triangleq {\rho_G(1)}/{\rho_G'(1)}$ and $\alpha_{G}^{\mathsf{up}} \triangleq \lim\limits_{x\to 0^+} {\rho_G(x)}/{(x\rho_G'(x))}$. For all $\alpha$ with $0 \leq \alpha \leq \alpha_{G}^{\mathsf{lo}}$, 
	$$ C_G(\alpha) = \alpha \log \rho_G(1). $$
	For all $\alpha$ with $\alpha_{G}^{\mathsf{lo}}<\alpha<\alpha_{G}^{\mathsf{up}}$, 
	$$ C_G(\alpha) = -\log x_0 + \alpha\log \rho_G(x_0), $$
	where $x_0$ is the unique real solution to $\alpha x \rho_G'(x) = \rho_G(x)$ in the interval $0<x<1$. For all $\alpha>\alpha_{G}^{\mathsf{up}}$, $C_G(\alpha)=0$.
\end{thm}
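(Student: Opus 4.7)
The plan is to derive $C_G(\alpha)$ in all three regimes by extracting the exponential growth rate from an ACSV analysis of the bivariate generating function $F_{G,v}(x,y)=Q_{G,v}(x,y)/H_G(x,y)$. Because $N_{G,v}(t,n)$ is a partial sum over cost, the denominator factors as $H_G(x,y)=(1-x)\det(\bI-y\bP_G(x))$, so the singular variety decomposes into the coordinate sheet $\{x=1\}$ together with the Perron sheet cut out by $\det(\bI-y\bP_G(x))=0$, which on the positive real branch is parametrized by $y=1/\rho_G(x)$. These two sheets meet at the non-smooth multiple point $P_{\mathsf{lo}}:=(1,1/\rho_G(1))$, and the three regimes of $\alpha$ correspond to which minimal critical point of $H_G$ controls the asymptotics in direction $(1,\alpha)$.

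For the middle regime $\alpha_G^{\mathsf{lo}}<\alpha<\alpha_G^{\mathsf{up}}$, I would apply the smooth-point ACSV machinery on the Perron sheet. The smooth critical-point equation $\alpha x\,\partial_x H = y\,\partial_y H$, combined with the implicit relation $\partial_x H/\partial_y H = y^2\rho_G'(x)$ coming from differentiating $\rho_G(x)=1/y$, collapses to
\[
\alpha x\rho_G'(x)=\rho_G(x).
\]
Setting $\phi(x):=x\rho_G'(x)/\rho_G(x)$, the strict log-convexity of $\rho_G(e^u)$ for cost-diverse graphs (established earlier in the paper from the spectral properties of $\bP_G$) makes $\phi$ strictly increasing on $(0,1)$ with $\phi(0^+)=1/\alpha_G^{\mathsf{up}}$ and $\phi(1)=1/\alpha_G^{\mathsf{lo}}$, so $\phi(x_0)=1/\alpha$ admits a unique solution $x_0\in(0,1)$ precisely in the claimed range of $\alpha$. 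Invoking \cite[Cor.~5.2]{melczer_invitation_2021} at the minimal smooth critical point $(x_0,1/\rho_G(x_0))$ then yields $N_{G,v}(t,\lfloor\alpha t\rfloor)=\Theta\bigl(x_0^{-t}\rho_G(x_0)^{\alpha t}\cdot t^{-1/2}\bigr)$, whose per-$t$ logarithm is exactly $-\log_2 x_0+\alpha\log_2\rho_G(x_0)$.

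For the boundary regimes the arguments are more direct. When $\alpha\leq\alpha_G^{\mathsf{lo}}$, the smooth Perron critical point leaves $(0,1)$ and the dominant contribution shifts to the multiple point $P_{\mathsf{lo}}$, whose coordinates alone give exponential rate $\alpha\log_2\rho_G(1)$; elementarily, one upper bounds $N_{G,v}(t,\lfloor\alpha t\rfloor)$ by the total number of length-$\lfloor\alpha t\rfloor$ paths, namely $\Theta(\rho_G(1)^{\alpha t})$, and matches this below via the entropy-maximizing (Perron) Markov chain on $G$, whose average per-step cost $1/\alpha_G^{\mathsf{lo}}$ together with a large-deviations estimate ensures that a positive fraction of random length-$\lfloor\alpha t\rfloor$ paths satisfy the cost constraint. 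When $\alpha>\alpha_G^{\mathsf{up}}$, an expansion of $\rho_G(x)$ near $x=0$ identifies $1/\alpha_G^{\mathsf{up}}$ with the minimum edge cost $c_{\min}$, so every length-$\lfloor\alpha t\rfloor$ path has cost at least $c_{\min}\lfloor\alpha t\rfloor>t$ for all sufficiently large $t$, giving $N_{G,v}(t,\lfloor\alpha t\rfloor)=0$ eventually and $C_G(\alpha)=0$.

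The main obstacle will be justifying the smooth-point ACSV application rigorously: one must verify that $(x_0,1/\rho_G(x_0))$ is a simple, non-degenerate zero of $\det(\bI-y\bP_G(x))$; that it is \emph{minimal}, in the sense that no singularity of $H_G$ has coordinatewise smaller modulus; and that no other singularities on the torus $\{|x|=x_0,\,|y|=1/\rho_G(x_0)\}$ contribute at the same exponential order. Cost-diversity is precisely what excludes unit-circle-symmetric competitors, via the earlier structural results on the eigenvalues of $\bP_G(x)$ on the unit circle, and it also underwrites the strict log-convexity that makes $\phi$ injective. A further subtlety is that the $(1-x)$ factor in $H_G$ renders $P_{\mathsf{lo}}$ a non-smooth multiple point, so obtaining the sub-exponential prefactor at the boundary $\alpha=\alpha_G^{\mathsf{lo}}$ would require the advanced ACSV methods alluded to in the introduction — though for the capacity statement here we only need the leading exponential rate.
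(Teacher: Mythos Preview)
Your overall strategy matches the paper's: \Cref{thm:fixed-length:capacity} is deduced from the full asymptotic expansion of $N_{G,v}(t,\alpha t)$ obtained via ACSV (the paper's \Cref{thm:fixed-length}), and your handling of the middle regime --- the smooth Perron-sheet critical point, the reduction of the critical equations to $\alpha x\rho_G'(x)=\rho_G(x)$, and the uniqueness of $x_0$ via strict log-log-convexity --- is essentially the paper's argument. For the regime $\alpha<\alpha_G^{\mathsf{lo}}$ you offer, in addition to the multiple-point contribution at $(1,1/\rho_G(1))$, an alternative probabilistic lower bound via the maxentropic Markov chain; the paper instead stays within ACSV and applies \Cref{prop:ACSVnonsmooth} at that multiple point. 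Both routes give the same rate.

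There is, however, a genuine gap in your treatment of $\alpha>\alpha_G^{\mathsf{up}}$. You assert that an expansion of $\rho_G(x)$ near $x=0$ identifies $1/\alpha_G^{\mathsf{up}}$ with the minimum \emph{edge} cost $c_{\min}$. This is false in general: the paper shows (\Cref{rem:alpha:comb:interp} and \Cref{sec:comb-characterize-alpha}) that $1/\alpha_G^{\mathsf{up}}=T_{\min}$, the minimum \emph{average cost per edge over cycles} in $G$, and one can have $T_{\min}>c_{\min}$ strictly (e.g.\ two vertices with edge costs $1$ and $3$ give $c_{\min}=1$ but $T_{\min}=2$). Your bound ``every length-$\lfloor\alpha t\rfloor$ path has cost at least $c_{\min}\lfloor\alpha t\rfloor$'' therefore only forces $N_{G,v}(t,\lfloor\alpha t\rfloor)=0$ once $\alpha>1/c_{\min}\geq\alpha_G^{\mathsf{up}}$, leaving the interval $(\alpha_G^{\mathsf{up}},1/c_{\min}]$ unaddressed. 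The fix is to argue that any path of length $n$ has cost at least $nT_{\min}-O(1)$ (a long path decomposes into cycles plus a bounded acyclic piece), or, as the paper does, to invoke the amoeba/support argument from ACSV (\cite[Proposition 5.7]{melczer_invitation_2021}) to conclude that the coefficients vanish eventually when no minimal critical point exists in direction $(1,\alpha)$.
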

\noindent

\begin{remark} \label{rem:alpha:comb:interp}
When $G$ is primitive, meaning strongly connected with period $1$, the delimiting values $\alpha_{G}^{\mathsf{lo}}$ and $\alpha_{G}^{\mathsf{up}}$ have natural combinatorial interpretations. The inverse of $\alpha_{G}^{\mathsf{lo}}$ is the average cost per edge, asymptotically in $n$, over all paths of length $n$ in $G$.  The inverse of $\alpha_{G}^{\mathsf{up}}$ is the minimum average cost per edge among the cycles in $G$. For details, see Proposition~\ref{prop:alpha:comb:interp}.
\linebreak
The evaluation of the fixed-length capacity at $\alpha_G^{\mathsf{up}}$ is  complicated, and this is left for  future work. Thus far, we have not found knowledge of the exact value to be relevant in practice. 
\end{remark}

Theorem~\ref{thm:fixed-length:capacity} improves over previous work \cite{mceliece_maximum_1983, justesen_maxentropic_1984,khandekar_discrete_2000,soriaga_design_2006} in several ways. First, the results of \cite{mceliece_maximum_1983, justesen_maxentropic_1984,khandekar_discrete_2000} only apply  to the cost-constrained probabilistic capacity. Next, none of them explicitly recognizes the role of cost diversity. Moreover, they do not address the full domain of the cost-constrained capacity. In contrast, our results explicitly determine the fixed-length capacity, they can be readily evaluated for cost-diverse graphs, and we consider the entire domain of the capacity function. Specifically, we identify a region for small $\alpha$ in which the capacity exhibits a linear scaling; we determine the exact slope in that region; and we explicitly find the threshold between the linear and non-linear regions. For examples illustrating Theorem~\ref{thm:fixed-length:capacity}, we refer to Proposition~\ref{prop:alternating:sequence} in Section~\ref{sec:alternating}.

\begin{remark}
Our results extend to the case of counting the number of followers of cost \emph{exactly} $t$ instead of \emph{at most} $t$. In that case, the factor $(1-x)$ in the numerator of the generating function $\ve{F}_G(x,y)$ is not present anymore, which has several effects on the results. First, the lower threshold $\alpha_{G}^{\mathsf{lo}}$ decreases to $\alpha_{G}^{\mathsf{lo}} = \lim_{x\to \infty} {\rho_G(x)}/{(x\rho_G'(x))}$. Next, for all $\alpha$ outside the two thresholds, $C_G(\alpha)=0$ and thus the linear region in $\alpha$ disappears.
\end{remark}

Interestingly, our formula for the fixed-length capacity is identical to the formula for the cost-constrained probabilistic capacity in \cite{mceliece_maximum_1983,justesen_maxentropic_1984} (up to  differences in notation and a simple argument to address the linear scaling region). Thus, an immediate corollary of Theorem~\ref{thm:fixed-length:capacity} is the equivalence between  fixed-length capacity and cost-constrained probabilistic capacity. This fundamental result was first proved in \cite{soriaga_design_2006}  using~\cite{justesen_maxentropic_1984} by establishing converse inequalities based on typical sequence arguments, optimization techniques, and a variant of Lemma~\ref{lemma:diff:perron}.
\begin{corollary}
\label{cor:comb_prob_equivalence}
For any strongly connected and cost-diverse graph $G$, the cost-constrained probabilistic capacity $ C_\mathsf{prob}(\alpha^{-1})$ is equal to the fixed-length capacity $ C_G(\alpha)$. 
\end{corollary}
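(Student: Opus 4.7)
The plan is to derive the corollary by matching the explicit formula for $C_G(\alpha)$ from \Cref{thm:fixed-length:capacity} with the classical parametric characterization of $C_{\mathsf{prob}}(W)$ obtained via constrained optimization in~\cite{mceliece_maximum_1983,justesen_maxentropic_1984,khandekar_discrete_2000}. Under the substitution $W = \alpha^{-1}$, I expect the two formulas to coincide term by term, up to the simple extension for the linear scaling regime alluded to in the discussion preceding the corollary.

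First I would handle the non-degenerate concave region $\alpha_G^{\mathsf{lo}} < \alpha < \alpha_G^{\mathsf{up}}$. In this range, the cited literature applies Lagrangian methods to the Perron root of $\bP_G(x)$: introducing a multiplier $x_0 \in (0,1)$ for the cost constraint, the Legendre transform of $\log_2 \rho_G(x)$ yields a maximum entropy rate $H^*(W) = \log_2 \rho_G(x_0) - W \log_2 x_0$, together with the stationarity condition $x_0\,\rho_G'(x_0)/\rho_G(x_0) = W$. Substituting $\alpha = W^{-1}$ and dividing by $W$ converts the stationarity condition into the critical equation $\alpha x \rho_G'(x) = \rho_G(x)$ from \Cref{thm:fixed-length:capacity}, and converts the expression for $C_{\mathsf{prob}}(W)$ into $-\log_2 x_0 + \alpha\log_2\rho_G(x_0)$, matching the concave-region formula for $C_G(\alpha)$ term by term.

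Next I would address the linear scaling region $\alpha \leq \alpha_G^{\mathsf{lo}}$, equivalently $W \geq \rho_G'(1)/\rho_G(1)$. Here the parametric curve degenerates at the boundary value $x_0 = 1$; taking the limit of the formula above gives $C_{\mathsf{prob}}(W) = \log_2 \rho_G(1)/W = \alpha\log_2\rho_G(1)$, which is the linear-region formula of \Cref{thm:fixed-length:capacity}. Alternatively, one can verify directly that Shannon's maxentropic chain, which attains $H = \log_2\rho_G(1)$ and is feasible whenever the cost constraint is inactive, realizes the supremum defining $C_{\mathsf{prob}}(W)$. The trivial regime $\alpha \geq \alpha_G^{\mathsf{up}}$ is handled by noting that both sides vanish.

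The main obstacle will be the linear-scaling region: it is not directly covered by the parametric characterization of prior work, so one must either supply a continuity argument as $x_0 \to 1^-$ or a separate extremal computation showing that the maxentropic chain is optimal when the cost constraint is slack. This is precisely the ``simple argument to address the linear scaling region'' mentioned in the discussion following \Cref{thm:fixed-length:capacity}, and once carried out the corollary follows by direct comparison of formulas across both regions.
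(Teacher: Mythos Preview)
Your proposal is correct and takes essentially the same approach as the paper: the corollary is presented there as an immediate consequence of \Cref{thm:fixed-length:capacity}, obtained by observing that the formula for $C_G(\alpha)$ coincides with the parametric characterization of $C_{\mathsf{prob}}(W)$ from \cite{mceliece_maximum_1983,justesen_maxentropic_1984,khandekar_discrete_2000} once one substitutes $W=\alpha^{-1}$, together with the separate treatment of the linear region that you describe. Your identification of the linear-scaling region as the only place requiring an additional argument matches the paper's own remark verbatim.
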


\begin{remark}
In the context of Corollary~\ref{cor:comb_prob_equivalence}, we have a probabilistic counterpart to Remark~\ref{rem:alpha:comb:interp}.  If $G$ is primitive,  $(\alpha_G^{\mathsf{lo}})^{-1}$ can be viewed as the minimum average cost for which maximum entropy can be attained by a Markov chain on $G$. This results in the same linear regime for $\alpha < \alpha_G^{\mathsf{lo}}$. Furthermore, $(\alpha_G^{\mathsf{up}})^{-1}$ can be viewed as the minimum attainable average cost of any Markov chain on $G$.
\end{remark}

We now present the results for the case of variable-length sequences. The following theorem is part of Shannon's famous results on discrete noiseless channels~\cite{shannon_mathematical_1948}.
\begin{thm}\label{thm:variable-length:capacity}
Let $G$ be a strongly connected and deterministic graph and denote by $x_0$ the unique positive solution to $\rho_G(x)=1$. Then the combinatorial capacity of $G$ satisfies
$$ C_G = - \log x_0. $$
\end{thm}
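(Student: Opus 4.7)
The plan is to apply classical univariate analytic combinatorics: write $F_{G,v}(x)$ as an explicit rational function, locate its smallest positive singularity via Perron--Frobenius, and invoke the exponential-growth formula. Because $G$ is deterministic, distinct paths from $v$ generate distinct words, so $N_{G,v}(t)$ is the number of paths from $v$ of cost at most $t$. If $m_{G,v}(t)$ counts paths from $v$ of cost exactly $t$, the standard transfer-matrix identity gives
\begin{equation*}
\sum_{t \geq 0} m_{G,v}(t)\, x^t \;=\; \bigl[(\bI - \bP_G(x))^{-1} \mathbf{1}\bigr]_v,
\end{equation*}
so $F_{G,v}(x) = \frac{1}{1-x}\bigl[(\bI - \bP_G(x))^{-1}\mathbf{1}\bigr]_v$ is a rational function whose poles lie among the roots of $(1-x)\det(\bI - \bP_G(x))$.

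Next I would use Perron--Frobenius to pin down the dominant singularity. Strong connectivity makes $\bP_G(x)$ an irreducible non-negative matrix for each $x > 0$, so $\rho_G(x)$ is a simple positive eigenvalue and every eigenvalue $\lambda$ of $\bP_G(x)$ satisfies $|\lambda| \leq \rho_G(|x|)$. Because edges carry monomial weights $x^{\tau(e)}$ with non-negative coefficients, and an entrywise increase of an irreducible non-negative matrix strictly increases its Perron root, the function $\rho_G$ is continuous and strictly monotone on $(0, \infty)$; by hypothesis, the unique positive $x_0$ with $\rho_G(x_0) = 1$ exists. For $|x| < x_0$, every eigenvalue of $\bP_G(x)$ has modulus less than $1$, so $\det(\bI - \bP_G(x)) \neq 0$ and $F_{G,v}$ extends analytically to the open disk of radius $x_0$.

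To conclude, I would invoke Pringsheim's theorem: since $F_{G,v}$ has non-negative coefficients, its radius of convergence is necessarily a singular point of the function. Strong connectivity forces the left and right Perron eigenvectors of $\bP_G(x_0)$ to be entrywise positive, so the residue of $(\bI - \bP_G(x))^{-1}$ at $x_0$ is a rank-one matrix with strictly positive entries, which shows $F_{G,v}$ has a genuine pole at $x_0$. Combined with analyticity on $|x| < x_0$, this identifies the radius of convergence as exactly $x_0$, and the exponential-growth formula~\cite[Thm.~IV.7]{flajolet_analytic_2009} yields $\limsup_{t \to \infty} N_{G,v}(t)^{1/t} = 1/x_0$, equivalently $C_G = -\log_2 x_0$.

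The main obstacle is verifying that $x_0$ is genuinely a pole of $F_{G,v}$ rather than being cancelled by the numerator; this is precisely where the full strength of strong connectivity, via the strict positivity of the Perron eigenvectors, is essential. A secondary nuisance is the boundary case $x_0 = 1$, where the factor $(1-x)$ in the denominator collides with the Perron singularity, but there both sides of the claimed identity degenerate to $0$ and the result is immediate.
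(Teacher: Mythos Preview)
Your proof is correct and follows essentially the same approach as the paper: write the generating function via the transfer matrix, use Perron--Frobenius to identify $x_0$ as the smallest-modulus singularity, and verify via positivity of the Perron eigenvectors that the numerator does not cancel the pole (this is exactly the content of the paper's \Cref{lemma:adj:outer:product}). The only cosmetic difference is that the paper routes through \Cref{thm:variable-length:exact} (the full partial-fraction decomposition, citing \cite[Thm.~IV.9]{flajolet_analytic_2009}) and then reads off the dominant term, whereas you go directly via Pringsheim and the exponential-growth formula \cite[Thm.~IV.7]{flajolet_analytic_2009}; your route is marginally more self-contained since it does not require the exact expansion.
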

%
%Here it is not required that the graph be cost-diverse, as we are counting limited-cost paths of arbitrary lengths. 
\noindent 
Note that, in contrast to Theorem~\ref{thm:fixed-length:capacity} on fixed-length capacity,  Theorem~\ref{thm:variable-length:capacity} on variable-length capacity 
does not impose the condition that the graph be cost-diverse because here we are counting limited-cost paths of arbitrary lengths.
\begin{remark}\label{rem:maximality:concavity}
Under the conditions of Theorem~\ref{thm:fixed-length:capacity} the function $C_G(\alpha)$ is concave as a function of $\alpha$, and its maximum is equal to  $C_G(\alpha^*) = C_G$ where $\alpha^*=2^{C_G}/\rho_G'(2^{-C_G})$. For details see Proposition~\ref{prop:maximality:concavity}.
\end{remark}

\subsection{Precise Asymptotics}
\label{sec:precise_asymptotics}
Theorem~\ref{thm:fixed-length:capacity} is a direct consequence of the following stronger result, which gives the precise asymptotic behavior of $N_{G,v}(t,\alpha t)$. 

\begin{thm}\label{thm:fixed-length} 
Let $G$ be a strongly connected, deterministic, and cost-diverse graph with period $d$ and  cost period $c$. Denote by $b$ and $B(v_j)$ the quantities from the $c$-periodic coboundary condition in Definition~\ref{def:coboundary:condition}. For all $\alpha$ with $0<\alpha<\alpha_{G}^{\mathsf{lo}}$ and for any $v \in \V$, there is an asymptotic expansion
$$ N_{G,v}(t,\alpha t) = \sum_{j=0}^{d-1}( \lambda_j(1))^{\alpha t} {[\ve{u}_j^\mathrm{T}(1) \ve{v}_j(1) \mathbf{1}^\mathrm{T}]_v} + O\left(\delta^t\right), 
$$
where $0<\delta<(\rho_G(1))^\alpha$ and $\ve{u}_j(x)$ and $\ve{v}_j(x)$ are the right and left eigenvectors of $\ve{P}_G(x)$, with $\ve{v}_j(x)\ve{u}_j^\mathrm{T}(x)=1$, corresponding to the eigenvalues $\lambda_j(x)=\rho_G(x)\mathrm{e}^{2\pi\imu j/d}$. For all $\alpha$ satisfying $\alpha_{G}^{\mathsf{lo}}<\alpha<\alpha_{G}^{\mathsf{up}}$ and $t$ with $\alpha t \in \mathbb{N}$,
\begin{align*}
	N_{G,v}(t,\alpha t) =&\! \sum_{k=0}^{c-1}\sum_{j=0}^{d-1} \left( \frac{(\mathrm{e}^{2\pi\imu bk/c}\lambda_j(x_0))^\alpha}{x_0\mathrm{e}^{2\pi\imu k/c}}\right)^t  \frac{t^{-1/2}}{\sqrt{2\pi \alpha\mathcal{H}(x_0)}}\\
	&\cdot
	\left( \frac{[\ve{D}_k^{-1}\ve{u}_j^\mathrm{T}(x_0) \ve{v}_j(x_0) \ve{D}_k \mathbf{1}^\mathrm{T}]_v}{(1-x_0 \mathrm{e}^{2\pi\imu k/c})} + O\left(\frac{1}{t}\right)\!\right),
\end{align*}
where $ \mathcal{H}(\mathrm{e}^s)  =  \frac{\partial^2}{\partial s^2}\ln \rho_G(\mathrm{e^s})$, $x_0$ is the unique positive solution to $ \alpha x \rho_G'(x) = \rho_G(x),$ and the $\ve{D}_k$ are diagonal matrices with $[\ve{D}_k]_{jj} = \mathrm{e}^{2\pi\imu kB(v_j)/c }$. For all $\alpha>\alpha_{G}^{\mathsf{up}}$, $N_{G,v}(t,\alpha t)$ is eventually $0$.
\end{thm}
In Theorem~\ref{thm:fixed-length}, for the case where $\alpha = \alpha_{G}^{\mathsf{up}}$, comments similar to those in the last part of   Remark~\ref{rem:alpha:comb:interp} apply.

To the best of our knowledge, such a first-order approximation of the number of limited-cost and fixed-length paths through arbitrary strongly connected graphs is new.  Notably, disregarding the $O(1/t)$ term, the term following $t^{-1/2}$ is independent of $t$. 

%We now present the results for the case of variable-length sequences. The following theorem is part of Shannon's famous results on discrete noiseless channels~\cite{shannon_mathematical_1948}.
%%
%\begin{thm}\label{thm:variable-length:capacity}
%Let $G$ be a strongly connected and deterministic graph and denote by $x_0$ the unique positive solution to $\rho_G(x)=1$. Then the combinatorial capacity of $G$ satisfies
%$$ C_G = - \log x_0. $$
%\end{thm}
%%
%Here it is not required that the graph be cost-diverse, as we are counting limited-cost paths of arbitrary lengths. 

We also obtain an exact expression for $N_{G,v}(t)$, the size of the cost-$t$ follower set. This uses a univariate singularity analysis of the generating function $F_{G,v}(x)$.
\begin{thm}\label{thm:variable-length:exact}
Let $G$ be a strongly connected and deterministic graph and denote by $x_1,\dots,x_m$ the solutions to $(1-x)\det(\ve{I}-\ve{P}_G(x)) = 0$. Then, for any vertex $v \in \V$, there exist polynomials $\Pi_{G,v,i}(t)$ calculable from the generating function $F_{G,v}(x)$ such that
$$ N_{G,v}(t) = \sum_{i=1}^m \Pi_{G,v,i}(t) x_i^{-t}. $$
The degree of the polynomial $\Pi_{G,v,i}(t)$ is equal to the multiplicity of the root $x_i$ minus one, in its defining equation.
\end{thm}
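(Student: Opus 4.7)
The plan is to express $F_{G,v}(x)$ as an explicit rational function whose denominator divides $(1-x)\det(I-\bP_G(x))$ and then to invoke the standard partial fraction machinery for rational power series.

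First I would set up the generating function. Since $G$ is deterministic, each distinct word in the follower set corresponds to a unique path, so $N_{G,v}(t)$ equals the number of paths of total cost at most $t$ starting at $v$. Let $M_{G,v}(s)$ denote the number of paths of cost exactly $s$ starting at $v$; its generating function is
\[
\sum_{s\geq 0} M_{G,v}(s)\, x^s \;=\; \Bigl[\sum_{n\geq 0} \bP_G(x)^n\, \bone\Bigr]_v \;=\; \bigl[(I-\bP_G(x))^{-1} \bone\bigr]_v,
\]
which is valid as a formal power series because $\bP_G(x)$ has no constant term. Cumulating from ``exactly $s$'' to ``at most $t$'' via multiplication by $1/(1-x)$ yields
\[
F_{G,v}(x) \;=\; \frac{1}{1-x}\bigl[(I-\bP_G(x))^{-1} \bone\bigr]_v.
\]

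Next I would apply Cramer's rule, $(I-\bP_G(x))^{-1} = \adj(I-\bP_G(x))/\det(I-\bP_G(x))$, to write
\[
F_{G,v}(x) \;=\; \frac{Q_{G,v}(x)}{(1-x)\det(I-\bP_G(x))}
\]
with $Q_{G,v}(x) = [\adj(I-\bP_G(x))\bone]_v$ a polynomial. Consequently the only possible singularities of $F_{G,v}(x)$ lie among $x_1,\dots,x_m$. Writing $(1-x)\det(I-\bP_G(x)) = c\prod_{i=1}^m (x_i - x)^{m_i}$ and performing the standard partial fraction decomposition
\[
F_{G,v}(x) \;=\; p(x) + \sum_{i=1}^m \sum_{k=1}^{m_i} \frac{c_{i,k}}{(1-x/x_i)^k},
\]
then extracting coefficients via the elementary identity $[x^t](1-x/x_i)^{-k} = \binom{t+k-1}{k-1} x_i^{-t}$ (a polynomial in $t$ of degree $k-1$), produces
\[
N_{G,v}(t) \;=\; \sum_{i=1}^m \Pi_{G,v,i}(t)\, x_i^{-t},
\]
where $\Pi_{G,v,i}(t) = \sum_{k=1}^{m_i} c_{i,k} \binom{t+k-1}{k-1}$ is computable directly from the local Laurent expansion of $F_{G,v}(x)$ at $x_i$ and has degree at most $m_i - 1$.

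The only subtlety --- and the principal obstacle to the exact degree claim --- is that cancellation between $Q_{G,v}(x)$ and the denominator could lower the pole order at some $x_i$ strictly below $m_i$. I would resolve this either by interpreting the statement as ``degree at most $m_i-1$'' (with the convention that a spurious root contributes the zero polynomial), or by appealing to the partial fraction convention above, under which the leading coefficient $c_{i,m_i}$ is non-zero precisely when $x_i$ is a genuine pole of $F_{G,v}(x)$ of order $m_i$. Beyond this bookkeeping, every step is a direct appeal to the classical theory of rational generating functions, so no new analytic input is required.
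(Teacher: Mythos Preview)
Your proposal is correct and follows essentially the same approach as the paper: derive $F_{G,v}(x)=Q_{G,v}(x)/\bigl((1-x)\det(I-\bP_G(x))\bigr)$ via the matrix inverse/adjugate, then apply the standard partial-fraction expansion for rational power series (the paper cites \cite[Theorem~IV.9]{flajolet_analytic_2009} rather than writing it out). Your discussion of the cancellation subtlety matches the paper's resolution exactly---when $Q_{G,v}$ and the denominator share a factor, the corresponding $\Pi_{G,v,i}$ is taken to be zero.
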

The polynomials $\Pi_{G,v,i}(t)$ can easily be computed with a partial fraction decomposition of the generating function \cite{flajolet_analytic_2009}.

%\begin{remark}\label{rem:maximality:concavity}
%Under the conditions of Theorem~\ref{thm:fixed-length:capacity} the function $C_G(\alpha)$ is concave as a function of $\alpha$, and its maximum is equal to  $C_G(\alpha^*) = C_G$ where $\alpha^*=2^{C_G}/\rho_G'(2^{-C_G})$. For details see Proposition~\ref{prop:maximality:concavity}.
%\end{remark}

\subsection{Alternating Synthesis Sequences}
\label{sec:alternating}

To illustrate these results, we consider the discrete noiseless channel that describes the synthesis of sequences using the $q$-ary alternating sequence, i.e., the periodic sequence obtained by repeating the length-$q$ sequence $(0,1,\dots ,q-1)$, for $q\geq 2$.  This is a generalization of the case $q=4$ with period $(\DNAA, \DNAC,\DNAG,\DNAT)$ represented by the graph in Fig.~\ref{fig:synthesis}. For the $q$-ary case, denote the corresponding channel graph  by $G_q$. It is known that, over a $q$-ary alphabet,  the alternating sequences  maximize the number of distinct subsequences \cite{hirschberg_tight_2000}. This implies that  the variable-length capacity $C_{G_q}$ is maximum among channels  corresponding to periodic synthesis sequences over the same alphabet~\cite{lenz_coding_2020}. The fixed-length capacity $C_{G_q}(\alpha)$ describes the exponential growth rate of the number of length-$\alpha t$ subsequences, and we now derive the variable-length and fixed-length capacities of the channel graphs for $q$-ary alternating synthesis sequences.

\begin{proposition} \label{prop:alternating:sequence}
	Consider the $q$-ary alternating sequence. The variable-length capacity  associated with this synthesis sequence is given by
	$$C_{G_q} = - \log_2 x_q,$$
	where $x_q$ is the unique positive solution to $\sum_{i=1}^{q} x^i = 1$. The fixed-length capacity is
	$$C_{G_q}(\alpha) = \begin{cases}
		\alpha\log_2 q & \alpha<\frac{2}{q+1}  \\
		\alpha \log_2 \left( \alpha\sum_{i=1}^q  i x_q(\alpha)^{i-1/\alpha} \right)  &
		\frac{2}{q+1}<\alpha<1 \end{cases}
	$$
	where $x_q(\alpha)$ is the unique solution to $\sum_{i=1}^{q} (1-\alpha i) x^i = 0$, on the interval $0<x<1$.
\end{proposition}

%	For $q=2$ and $q=3$, we obtain $C_{G_2} \approx 0.694$, $C_{G_3} \approx0.879$ and
%	%
%	\begin{align*}
%		 C_{G_2}(\alpha) &= \alpha h\left(\frac{1-\alpha}{\alpha}\right), \\
%		 C_{G_3}(\alpha) &= \alpha h\left(\frac{\gamma}{\alpha}\right) + \gamma h\left(\frac{1-\alpha-\gamma}{\gamma}\right),
%	\end{align*}
%	%
%	where $\gamma = -\frac23\alpha + \frac16 \sqrt{-8\alpha^2+12\alpha-3}+ \frac{1}{2}$ \textcolor{red}{and $h(p$ is the binary entropy function, defined as $h(p)=-p \log p - (1 - p) \log(1 - p)$, $0<p <1$, and $h(0)=h(1)=0$}.
%	%

%
The proof of Proposition~\ref{prop:alternating:sequence} makes use of the following lemma, which simplifies analytical derivations through an explicit expression of the generating function without matrix inversion. Denote by $S(v) = \{ \tau(e): e \in E, \init(e) = v \}$ the multiset of costs of all outgoing edges from $v \in V$.
\begin{lemma} \label{lemma:uniform:cost:spectrum}
	Let $G$ be a strongly connected graph with $S(v_1) = S(v_2)$ for all $v_1,v_2 \in V$. Then,
	$$F_{G,v}(x,y) = \frac{1}{(1-x)\left(1-y\lambda_G(x)\right)}$$
	for all $v \in V$, where $\lambda_G(x) = \sum_{\tau \in S}x^\tau$.
\end{lemma}

\begin{proof}
	Let $\mathbf{1} = (1,\dots,1)$ denote the all-ones vector of length $|V|$. If $S(v_1) = S(v_2)$ for all $v_1,v_2 \in V$, it follows that 
	\begin{align*} P_G(x) \mathbf{1}^\mathrm{T} &= \left(\sum_{\tau \in S(v_1)}x^\tau,\dots,\sum_{\tau \in S(v_{|V|})}x^\tau\right)^\mathrm{T}\\
	&=  \left(\sum_{\tau \in S}x^\tau\right)  \mathbf{1}^\mathrm{T}, 
	\end{align*}
	and thus $\mathbf{1}^\mathrm{T}$ is a right eigenvector of $P_G(x)$ with eigenvalue $\lambda_G(x) = \sum_{\tau \in S}x^\tau$. It follows that
	$$ (I-yP_G(x)) \mathbf{1}^\mathrm{T}  = (1-y \lambda_G(x)) \mathbf{1}^\mathrm{T},$$ 
	and therefore
	\begin{align*}F_{G}(x,y) &= \frac{1}{1-x} \cdot (I-yP_G(x))^{-1} \mathbf{1}^\mathrm{T}\\
	 &= \frac{1}{(1-x)(1-y\lambda_G(x))} \mathbf{1}^\mathrm{T}.
	 \end{align*}
\end{proof}

\begin{proof}[Proof of Proposition~\ref{prop:alternating:sequence}]
	For the special case of the $q$-ary alternating sequence, the synthesis graph $G_q$ is a complete graph, where each vertex has $q$ outgoing edges. The cost multiset of the outgoing edges is $S(v) = \{ 1,2,\dots,q \}$ for all vertices $v \in V_{G_{q}}$. Thus, we can apply Lemma~\ref{lemma:uniform:cost:spectrum} and obtain
	$$\lambda_{G_{q}}(x) = \sum_{i=1}^{q} x^i. $$
	The results on the fixed-length capacity then directly follow from applying Theorem~\ref{thm:fixed-length:capacity}. Similarly, the variable-length capacity follows from Theorem~\ref{thm:variable-length:capacity}. 
	\end{proof}
For $q=2$ and $q=3$,  an explicit computation of the determining equations followed by some algebraic reformulations yields
 $C_{G_2} \approx 0.694$, $C_{G_3} \approx0.879$, and
	\begin{align*}
		 C_{G_2}(\alpha) &= \alpha h\left(\frac{1-\alpha}{\alpha}\right), \\
		 C_{G_3}(\alpha) &= \alpha h\left(\frac{\gamma}{\alpha}\right) + \gamma h\left(\frac{1-\alpha-\gamma}{\gamma}\right),
	\end{align*}
	where $\gamma = -\frac23\alpha + \frac16 \sqrt{-8\alpha^2+12\alpha-3}+ \frac{1}{2}$ and $h(p)$ is the binary entropy function, defined as 
 $h(p)= -p \log p - (1 - p) \log(1 - p) $ for $0\leq p \leq 1.$

Fig.~\ref{fig:alternating:sequence} provides a visualization of the capacity-cost curves for $q$-ary alternating sequences, for a selection of alphabet sizes $q$. The $q=4$ case relevant to DNA synthesis is highlighted with a solid line. 

% This file was created by matlab2tikz.
%
%The latest updates can be retrieved from
%  http://www.mathworks.com/matlabcentral/fileexchange/22022-matlab2tikz-matlab2tikz
%where you can also make suggestions and rate matlab2tikz.
%
\begin{figure*} 
	\centering
	\begin{tikzpicture}
		\begin{axis}[%
			%width=0.85\linewidth,
			width=\linewidth,
			height=8cm,
			xticklabel style={
				/pgf/number format/fixed,
				/pgf/number format/precision=5
			},
			scaled x ticks=false,
			grid=both,
			grid style={line width=.1pt, draw=gray!10},
			major grid style={line width=.2pt,draw=gray!50},
			minor tick num=5,
			xmin=0,
			xmax=1,
			ymin=0,
			ymax=1,
			xlabel = {$\alpha$},
			ylabel = {$C_{\br_q}(\alpha)$},
			legend pos = north west,
			legend style={legend cell align=left,align=left,draw=white!15!black},
			%transpose legend,
			extra x ticks = {0.5663,0.6184,0.723},
			extra x tick labels = {\tiny$0.57$,\tiny$0.62$,\tiny$0.72$},
			extra x tick style = {ticklabel pos=top, grid style={opacity=0.7,densely dashed}},
			extra y ticks = {0.694,0.879,0.947},
			extra y tick labels = {\tiny$C_{\br_2}$,\tiny$C_{\br_3}$,\tiny$C_{\br_4}$},
			extra y tick style = {ticklabel pos=right, grid style={opacity=0.7,densely dashed}},
			]
			\addplot[loosely dashed]
			table[]{%
         		0         0
				0.6600    0.6600
				0.6800    0.6783
				0.7000    0.6897
				0.7200    0.6941
				0.7400    0.6921
				0.7600    0.6838
				0.7800    0.6694
				0.8000    0.6490
				0.8200    0.6226
				0.8400    0.5901
				0.8600    0.5512
				0.8800    0.5057
				0.9000    0.4529
				0.9200    0.3921
				0.9400    0.3219
				0.9600    0.2399
				0.9800    0.1409
				1.0000         0
			};
			\addplot [dashed]
			table[]{%
				         0         0
				0.5000    0.7925
				0.5200    0.8208
				0.5400    0.8430
				0.5600    0.8595
				0.5800    0.8708
				0.6000    0.8773
				0.6200    0.8791
				0.6400    0.8766
				0.6600    0.8699
				0.6800    0.8592
				0.7000    0.8444
				0.7200    0.8257
				0.7400    0.8031
				0.7600    0.7766
				0.7800    0.7460
				0.8000    0.7113
				0.8200    0.6723
				0.8400    0.6289
				0.8600    0.5806
				0.8800    0.5270
				0.9000    0.4676
				0.9200    0.4015
				0.9400    0.3271
				0.9600    0.2422
				0.9800    0.1414
				1.0000         0
			};
			\addplot [solid]
			table[]{%
				        0         0
				0.4000    0.8000
				0.4200    0.8366
				0.4400    0.8668
				0.4600    0.8915
				0.4800    0.9111
				0.5000    0.9261
				0.5200    0.9369
				0.5400    0.9436
				0.5600    0.9466
				0.5800    0.9460
				0.6000    0.9418
				0.6200    0.9344
				0.6400    0.9236
				0.6600    0.9096
				0.6800    0.8924
				0.7000    0.8720
				0.7200    0.8483
				0.7400    0.8214
				0.7600    0.7911
				0.7800    0.7574
				0.8000    0.7200
				0.8200    0.6788
				0.8400    0.6335
				0.8600    0.5838
				0.8800    0.5291
				0.9000    0.4689
				0.9200    0.4021
				0.9400    0.3274
				0.9600    0.2423
				0.9800    0.1414
				1.0000         0
			};
			\addplot [dashdotted]
			table[]{%
				         0         0
				0.0200    0.0600
				0.0400    0.1200
				0.0600    0.1800
				0.0800    0.2400
				0.1000    0.3000
				0.1200    0.3600
				0.1400    0.4200
				0.1600    0.4800
				0.1800    0.5400
				0.2000    0.6000
				0.2200    0.6600
				0.2400    0.7163
				0.2600    0.7646
				0.2800    0.8062
				0.3000    0.8423
				0.3200    0.8735
				0.3400    0.9004
				0.3600    0.9235
				0.3800    0.9430
				0.4000    0.9592
				0.4200    0.9724
				0.4400    0.9826
				0.4600    0.9900
				0.4800    0.9948
				0.5000    0.9970
				0.5200    0.9966
				0.5400    0.9937
				0.5600    0.9884
				0.5800    0.9806
				0.6000    0.9704
				0.6200    0.9576
				0.6400    0.9424
				0.6600    0.9246
				0.6800    0.9043
				0.7000    0.8812
				0.7200    0.8554
				0.7400    0.8267
				0.7600    0.7950
				0.7800    0.7602
				0.8000    0.7219
				0.8200    0.6801
				0.8400    0.6343
				0.8600    0.5842
				0.8800    0.5294
				0.9000    0.4690
				0.9200    0.4022
				0.9400    0.3274
				0.9600    0.2423
				0.9800    0.1414
				1.0000         0
			};
			\addplot [dotted]
			table[]{%
				         0       0
				0.0200    0.1414
				0.0400    0.2423
				0.0600    0.3274
				0.0800    0.4022
				0.1000    0.4690
				0.1200    0.5294
				0.1400    0.5842
				0.1600    0.6343
				0.1800    0.6801
				0.2000    0.7219
				0.2200    0.7602
				0.2400    0.7950
				0.2600    0.8267
				0.2800    0.8555
				0.3000    0.8813
				0.3200    0.9044
				0.3400    0.9248
				0.3600    0.9427
				0.3800    0.9580
				0.4000    0.9710
				0.4200    0.9815
				0.4400    0.9896
				0.4600    0.9954
				0.4800    0.9988
				0.5000    1.0000
				0.5200    0.9988
				0.5400    0.9954
				0.5600    0.9896
				0.5800    0.9815
				0.6000    0.9710
				0.6200    0.9580
				0.6400    0.9427
				0.6600    0.9248
				0.6800    0.9044
				0.7000    0.8813
				0.7200    0.8555
				0.7400    0.8267
				0.7600    0.7950
				0.7800    0.7602
				0.8000    0.7219
				0.8200    0.6801
				0.8400    0.6343
				0.8600    0.5842
				0.8800    0.5294
				0.9000    0.4690
				0.9200    0.4022
				0.9400    0.3274
				0.9600    0.2423
				0.9800    0.1414
				1.0000       0
			};
			\addlegendentry{$q=2$};
			\addlegendentry{$q=3$};
			\addlegendentry{$q=4$};
			\addlegendentry{$q=8$};
			\addlegendentry{$q\rightarrow\infty$};
			
%			\addplot[color=black!20!]
%			table[row sep=crcr] {
%				0 0.694\\
%				0.723 0.694\\
%			};
%			\addplot[color=black!20!]
%			table[row sep=crcr] {
%				0.723 0\\
%				0.723 0.694\\
%			};
		\end{axis}
	\end{tikzpicture}%
	\caption{Synthesis capacity of the alternating sequences over different alphabet sizes. The maxima are highlighted for $q\in\{2,3,4\}$ together with their maximizing $\alpha$. Notice that these plots confirm the concavity of the fixed-length capacity in $\alpha$ and its maximum at the variable-length capacity, which is derived in Proposition~\ref{prop:maximality:concavity}.}%
	\label{fig:alternating:sequence}%
\end{figure*}

%
%
%!TEX root = bare_jrnl.tex
\section{Proof Outline} \label{sec:mul:technical:overview}
Before going further into details we provide an overview of the ingredients required to prove Theorems~\ref{thm:fixed-length:capacity}, \ref{thm:fixed-length}, \ref{thm:variable-length:capacity}, and \ref{thm:variable-length:exact}. To begin with, we concisely highlight the main steps of a multivariate singularity analysis that connects properties of specific singularities of $F_G(x,y)$ to the asymptotic expansion of the \emph{diagonal} coefficients $N_G(t,\alpha t)$. Afterwards, we discuss how we use the theory on irreducible matrices \cite{horn_matrix_2012} to show the implications of strong connectivity and cost-diversity on the spectral properties of cost-enumerator matrices and thus on the singularities of the generating functions.

\subsection{Analytic Combinatorics in Several Variables}
\label{sec:overview-acsv}
Analytic combinatorics~\cite{flajolet_analytic_2009} is a branch of mathematics that uses complex analysis to deduce the asymptotics of an integer sequence $N(t)$ from its generating function $F(x)$. Similarly, \emph{analytic combinatorics in several variables} (ACSV)~\cite{melczer_invitation_2021,PemantleWilsonMelczer2024} treats multivariate integer sequences $N(t_1,t_2)$ (this discussion is specialized to the bivariate rational case we consider) and their generating functions $F(x,y)$. The multivariate analysis resembles the univariate case, translating properties of the generating function near singularities to an asymptotic expansion of the integer series. 

Due to the multivariate nature of the series, there are several ways in which the coefficients $(t_1,t_2)$ can grow to infinity. Typically, one sets $(t_1,t_2) = (t\alpha_1,t\alpha_2)$, for a fixed \emph{diagonal direction} $(\alpha_1,\alpha_2)$, and lets $t\to\infty$. Similar to the univariate case, the singularities closest to the origin determine the asymptotic behavior on the diagonal. In the multivariate case, however, not all of those singularities are relevant for asymptotics. Two properties of singularities -- minimality and criticality -- thus come into play.

\emph{Minimal points} are those singularities for which $H(x,y)$ has no other root with strictly smaller coordinate-wise modulus\footnote{We use the terms \emph{modulus}, \emph{absolute value}, and \emph{magnitude} of a complex variable interchangeably.}. A minimal point is \emph{strictly minimal} if no other singularity has the same coordinate-wise modulus, and \emph{finitely minimal} if only a finite number of other singularities have the same coordinate-wise modulus. For the rational generating functions $F(x,y) = Q(x,y)/H(x,y)$ that we treat in our analysis, there are two types of critical points. First, the \emph{smooth critical points} are the solutions to the polynomial equations $H(x,y) = \alpha_2 x H_{x}(x,y) - \alpha_1 y H_{y}(x,y) = 0$ where at least one of the partial derivatives of $H$ does not vanish. 
Among the smooth critical points, we will be further interested in the \emph{nondegenerate} critical points for which a certain function characterizing the local singularity structure has a nonzero second derivative.
Second, the \emph{non-smooth critical points}, which for us lie in the family of \emph{multiple points}, are any points where the singularity set is the union of two smooth surfaces that intersect, meaning that both partial derivatives of $H$ vanish (in general the characterization of critical points is more complicated). A more detailed  discussion of critical points can be found in Chapters 5 and 9 of \cite{melczer_invitation_2021}.

Under a few additional conditions, the existence of minimal critical points means asymptotics of $N(t\alpha_1,t\alpha_2)$ can be determined from local properties of the generating function $F(x,y)$ near these points.  For more details, see Section~\ref{sec:acsv}.

\subsection{From Cost-Diverse Graphs to Multivariate Analytical Combinatorics via Spectral Analysis}

The starting point of our ACSV analysis is the generating function derived in Lemma \ref{lemma:generating:function}. Before we can invoke the general results of ACSV, however, we need to establish a comprehensive theory about costly constrained channels and their associated cost-enumerator matrices to gather the necessary understanding of the associated singularities. To start with, through the restriction to cost-diverse graphs (Definition \ref{def:cost:diverse}), we avoid certain degenerate cases. Previous work \cite{khandekar_discrete_2000} observed that graphs with constant edge cost have the  property that the cost of any path is a linear function of its length, meaning that the capacity is simply determined by the number of paths through the graph of a given length. Generalizing this observation, we introduced the notion of cost-uniform and cost-diverse graphs in Definition~\ref{def:cost:diverse}. We show that if a graph is not cost-diverse, i.e., it is cost-uniform, then the cost of any path is an affine linear function of the path length and thus the average cost of any path approaches a constant. Specifically, all paths of length $n$ have a cost of $\gamma n + \beta$, for some $\gamma$ and $\beta$. This means that $N(t,\alpha t)$ counts all paths if $\alpha<\frac{1}{\gamma}$ and no paths if $\alpha>\frac{1}{\gamma}$ . Therefore,  $C_G(\alpha)$ wil be a linear ramp from zero to the combinatorial capacity on the interval $0 \leq \alpha < \frac{1}{\gamma}$  beyond which it drops to zero.

Focusing on cost-diverse and strongly connected graphs, we derive a variety of interesting properties. Referring to the fact that the cost-enumerator matrix $\ve{P}_G(x)$ of a strongly-connected graph is irreducible (see Definition \ref{def:irreducible}) for positive $x \in \mathbb{R}^+$, we start in Section \ref{sec:perron:frobenius} by deriving general properties of irreducible matrices. To this end, we use the famous Perron-Frobenius Theorem (Theorem \ref{thm:perron:frobenius}) and a refinement \cite[Thm. 3.18]{marcus_introduction_2001} (see Theorem \ref{thm:irreducible:spectral:circle}) to deduce properties of the parametrized cost-enumerator matrix. These results will serve us in Section \ref{sec:spectral:radius:diverse} where we derive spectral properties of the cost-enumerator matrix of a cost-diverse graph. A key milestone for our results is Lemma \ref{lemma:equiv}, which provides an equivalence between cost diversity and the coboundary condition (Definition~\ref{def:coboundary:condition}) and establishes the  implication of these propeties for the cost-enumerator matrix, namely a nice behavior of the spectral radius under rotations, and the log-log-linearity or log-log-convexity of the spectral radius along the real axis.

Our equivalence result in Lemma~\ref{lemma:equiv} establishes key properties of the cost-enumerator matrix $\ve{P}_G(x)$ and is the basis for a derivation of the attributes of the generating function. This appears in Section~\ref{sec:multivariate:proof}. At a high level, we need to find the minimal singularities of our generating functions $\ve{F}_G(x,y)$ and characterize their critical points in order to apply the ACSV theorems in Section~\ref{sec:acsv}. More concretely, in Lemma~\ref{lemma:minimal:singularities} we identify the minimal singularities of $\ve{F}_G(x,y)$ and express them as a function of the graph period $d$, the cost period $c$, and the spectral radius $\rho_G(x)$. Due to the Perron-Frobenius Theorem, $\rho_G(x)$ is the single real eigenvalue of maximum modulus of $\ve{P}_G(x)$, which we use to show that the points $(x,1/\rho_G(x))$ are minimal singularities for $0<x<1$.
Next, we prove in Lemma~\ref{lemma:singularities:critical:smooth} that the minimal points that we have found in Lemma~\ref{lemma:minimal:singularities} are smooth points. We further derive a condition based on $\alpha$, the spectral radius $\rho_G(x)$, and its derivative $\rho_G'(x)$ that determines criticality of the minimal singularities. A key component of the proof is Lemma~\ref{lemma:cost-enumerator:complex:circle}, which shows  that the rotation of $x$ by multiples of $2\pi/c$ along the complex circle results in similar cost-enumerator matrices. Diving deeper into the critical point condition, Lemma~\ref{lemma:alpha:boundary:unique:solution} guarantees a unique smooth critical point when $\alpha$ is in a certain interval. The proof uses the strict log-log convexity of $\rho_G(x)$ proven in Lemma~\ref{lemma:log:log:convex}. The final component of our multivariate singularity analysis is Lemma~\ref{lemma:singularity:nondegenerate}, which proves that the singular set near the smooth critical points has nondegenerate geometry. For an overview of this roadmap, see Fig.~\ref{fig:relationship:properties}.

To establish Theorem~\ref{thm:fixed-length}, we then apply results from \cite{melczer_invitation_2021} and use the spectral properties of $\ve{P}_G$ that we have derived from the graph properties. When $(x_0, 1/\rho_G(x_0))$ is a smooth point of the singular set of the generating function, the  asymptotic behaviour is determined using Theorem~\ref{prop:ACSVsmooth}, while in the non-smooth case it follows from an application of Theorem~\ref{prop:ACSVnonsmooth}. 
%
%
%!TEX root = bare_jrnl.tex
\section{Perron-Frobenius Theory} \label{sec:perron:frobenius}
In this section, we briefly revisit the central statements of the powerful Perron-Frobenius theorem and derive associated results on irreducible matrices parametrized by a variable $x$. These results are key ingredients to prove our main statements.
\subsection{Known Results from Perron-Frobenius Theory}
The Perron-Frobenius Theorem is a well-known result about the spectral properties of irreducible matrices. For the following definition of irreducible matrices, recall the notion of strong connectivity of a graph from Definition~\ref{def:strongly:connected}. 
\begin{defn} \label{def:irreducible}
	Let $\ve{P} \in \R^{M \times M}$ be a square real matrix with nonnegative entries. Associate with $\ve{P}$ the directed graph $G$ with $M$ vertices which is constructed by connecting state $i$ to $j$ if and only if $[\ve{P}]_{ij}>0$. We call $\ve{P}$ \emph{irreducible} if $G$ is strongly connected.
\end{defn}
Perron \cite{perron_zur_1907} and Frobenius \cite{frobenius_uber_1912} revealed important   properties of the eigenvalues of irreducible matrices. Among those spectral properties is the existence of a positive real eigenvalue which is equal to the spectral radius of the matrix, i.e., the largest magnitude of any eigenvalue. In the following statements, which are an excerpt of the original Perron-Frobenius theorem, we collect those properties of irreducible matrices that are most relevant for our purposes.
\begin{thm}[\cite{perron_zur_1907,frobenius_uber_1912}] \label{thm:perron:frobenius}
	Let $\ve{P}$ be an irreducible matrix with spectral radius $\rho$. Then,
	\begin{enumerate}
		\item $\rho$ is an eigenvalue with multiplicity one.
		\item There exist positive right and left eigenvectors $\ve{u}>0$ and $\ve{v}>0$ corresponding to the eigenvalue $\rho$ such that $\ve{P}\ve{u}^\mathrm{T} = \rho \ve{u}^\mathrm{T}$ and $\ve{v}\ve{P} = \rho \ve{v}$.
	\end{enumerate}
\end{thm}
By the Perron-Frobenius theorem, for an irreducible matrix with spectral radius $\rho$ there is a unique eigenvalue $\lambda$ which is equal to the spectral radius. We will refer to this eigenvalue as the \emph{Perron root} in the sequel. In fact, the structure of the eigenvalues on the spectral circle are precisely known for irreducible matrices. To characterize these eigenvalues, recall the definition of periodicity of a graph from Definition~\ref{def:period}; we say a matrix $\ve{P}$ has \emph{period} $d$ if the associated directed graph from Definition~\ref{def:irreducible} has period $d$. If the period of an irreducible matrix $\ve{P}$ is $d$ then $\ve{P}$ has precisely $d$ simple eigenvalues of maximum modulus. More precisely, those eigenvalues precisely divide the complex circle into $d$ equally sized segments. The following theorem summarizes this property.
\begin{thm}[{{\cite[Thm. 3.18]{marcus_introduction_2001}}}] \label{thm:irreducible:spectral:circle}
	Let $\ve{P} $ be an irreducible matrix with period $d$. Then $\ve{P}$ has precisely $d$ simple eigenvalues of maximum modulus. Denoting $\rho$ as the spectral radius of $\ve{P}$, those eigenvalues have the form $\rho\mathrm{e}^{2\pi \imu j/d}$, where $j \in \{0,1,\dots,d-1 \}$.
\end{thm}
This theorem holds since, as proven in Lemma~\ref{lemma:gcd:cycle:lengths} in Appendix \ref{appendix}, any graph that is periodic in the sense of Definition~\ref{def:period} is also periodic as defined in \cite{marcus_introduction_2001}. 
Another very useful result for irreducible matrices is Wielandt's theorem \cite{wielandt_unzerlegbare_1950}, which we now describe.
\begin{thm}[{{\cite{wielandt_unzerlegbare_1950}}}]	\label{thm:wielandt}
	Let $\ve{P} \in \mathbb{R}^{M\times M}$ be an irreducible matrix and $\ve{Q} \in \mathbb{C}^{M\times M}$ be a matrix with $|[\ve{Q}]_{ij}| \leq [\ve{P}]_{ij}$ for all $1 \leq i,j \leq M$. Then $\rho(\ve{Q}) \leq \rho(\ve{P})$. Furthermore, equality holds (i.e., $\rho(\ve{P}) \mathrm{e}^{\imu \phi}$ is an eigenvalue of $\ve{Q}$ for some $\phi$) if and only if there exist $\theta_1,\dots,\theta_M$ such that
	$$ \ve{Q} = \mathrm{e}^{\imu \phi} \ve{D}^{-1} \ve{P} \ve{D}, $$
	where $\ve{D}$ is a diagonal matrix with entries $[\ve{D}]_{jj} = \mathrm{e}^{\imu \theta_j}$.
\end{thm}
The power of this theorem lies in the exact characterization of the conditions under which the spectral radii of two  matrices, where one matrix is component-wise smaller than the other, agree. %We can thus use this theorem to analyze the spectral radius of the cost-enumerator matrix $\ve{P}_G(x)$, when $x$ rotates along the complex circle.
For a detailed proof of this theorem and for more details on irreducible matrices, including a comprehensive section on the Perron-Frobenius Theorem, we refer the reader to the textbooks \cite[Section 8.3]{meyer_matrix_2000} and \cite[Section 8.4]{horn_matrix_2012}.
\subsection{Essentials on Irreducible Matrices} \label{sec:irreducible:matrices}
We proceed with establishing basic results on irreducible matrices, which will be used in the derivation of our main statements. Assume that $\ve{P}$ is an irreducible matrix with period $d$. We start with a simple result on the rank of the adjoint matrix $\rho \mathrm{e}^{2\pi \imu j/d} \ve{I}-\ve{P}$, where $\rho$ is the spectral radius of the irreducible matrix $\ve{P}$.
\begin{lemma} \label{lemma:adj:rank:1}
	Let $\ve{P}$ be an irreducible matrix with period $d$ and spectral radius $\rho$. Then the adjoint matrix $\adj(\rho\mathrm{e}^{2\pi \imu j/d} \ve{I}-\ve{P})$ has rank one for all $j \in \{0,1,\dots,d-1\}$.
\end{lemma}
\begin{proof}
	Note that the result can be deduced from, e.g., \cite[Prob. 6.2.11]{meyer_matrix_2000} but we provide a short proof for the reader's convenience. Denote by $M$ the number of rows (and columns) of $\ve{P}$ and abbreviate $\theta_j \triangleq 2 \pi j/d$. We first show that $\rank (\rho \ve{I}-\ve{P}) = M-1$. The eigenvalues of $\rho\mathrm{e}^{\imu \theta_j} \ve{I} -P$ are given by $(\rho\mathrm{e}^{\imu \theta_j} - \lambda_i)$ for $i\in\{1,\dots,M\}$, where the $\lambda_i$ are the (not necessarily distinct) eigenvalues of $\ve{P}$. Since $\ve{P}$ is irreducible and has period $d$, by the Perron-Frobenius Theorem (Theorem~\ref{thm:perron:frobenius}) and Theorem~\ref{thm:irreducible:spectral:circle}, the $\rho\mathrm{e}^{\imu \theta_j}$ for $j\in \{0,1,\dots,d-1\}$ are eigenvalues of multiplicity one and thus exactly one of the eigenvalues $\rho\mathrm{e}^{\imu \theta_j} - \lambda_i$ will be zero and all other nonzero. Therefore $\rank (\rho\mathrm{e}^{\imu \theta_j} \ve{I}-\ve{P}) = M-1$. Next, we observe that $\adj(\rho\mathrm{e}^{\imu \theta_j} \ve{I}-\ve{P}) (\rho\mathrm{e}^{\imu \theta_j} \ve{I}-\ve{P}) = \det (\rho\mathrm{e}^{\imu \theta_j} \ve{I}-\ve{P})\ve{I} = \ve{0}$ and thus $\adj(\rho\mathrm{e}^{\imu \theta_j} \ve{I}-\ve{P})$ spans a subspace of the left nullspace of $(\rho\mathrm{e}^{\imu \theta_j} \ve{I}-\ve{P})$. Since $\rho\mathrm{e}^{\imu \theta_j} \ve{I}-\ve{P}$ has rank $M-1$, it follows that $\rank(\adj(\rho\mathrm{e}^{\imu \theta_j} \ve{I}-\ve{P})) \leq 1$. On the other hand, $\rho\mathrm{e}^{\imu \theta_j} \ve{I}-\ve{P}$ has rank $M-1$ and thus there exists an $(M-1)\times(M-1)$ submatrix of $\rho\mathrm{e}^{\imu \theta_j} \ve{I}-\ve{P}$ which is non-singular \cite[Ch. 4.5]{meyer_matrix_2000}, and it follows that at least one entry of $\adj(\rho\mathrm{e}^{\imu \theta_j} \ve{I}-\ve{P})$ is nonzero. Therefore $\adj(\rho\mathrm{e}^{\imu \theta_j} \ve{I}-\ve{P})$ cannot have rank zero and thus has rank one.
\end{proof}
Next, we establish a useful characterization of the adjoint matrix $\adj(\rho \ve{I}-\ve{P})$. In particular, we will show that we can represent this adjoint matrix as the outer product of the right and left eigenvectors associated with the Perron root $\rho$.
\begin{lemma} \label{lemma:adj:outer:product} 
	Let $\ve{P}$ be an irreducible matrix with period $d$. Then there are $d$ eigenvalues $\lambda_j=\rho\mathrm{e}^{2\pi \imu j/d}$ for $j \in \{0,1,\dots,d-1\}$ of maximum modulus with corresponding right and left eigenvectors  $\ve{u}_j$ and $\ve{v}_j$, normalized such that $\ve{v}_j(x)\ve{u}_j^\mathrm{T}(x)=1$. The adjoint matrix $\adj(\rho\mathrm{e}^{ 2\pi \imu j/d} \ve{I}-\ve{P})$ is given by
	$$ \adj(\rho\mathrm{e}^{ 2\pi \imu j/d} \ve{I}-\ve{P}) = c_j \cdot \ve{u}_j^\mathrm{T}\ve{v}_j, $$
	where $c_j\neq0$ is a linear scaling factor. Thus, $\adj(\rho \ve{I}-\ve{P})$ is either all-positive or all-negative.
\end{lemma}

\begin{proof}
	Again abbreviate for convenience $\theta_j \triangleq 2 \pi j/d$. By Lemma~\ref{lemma:adj:rank:1}, the adjoint matrix has rank one. It follows that $\adj(\rho\mathrm{e}^{\imu \theta_j} \ve{I}-\ve{P})$ can be written as the product $\ve{u}_j^\mathrm{T}\ve{v}_j$ of two vectors $\ve{u}_j$ and $\ve{v}_j$, i.e., $\adj(\rho\mathrm{e}^{\imu \theta_j} \ve{I}-\ve{P}) = \ve{u}_j^\mathrm{T}\ve{v}_j$. The properties of the adjoint matrix~\cite[p.~20]{horn_matrix_2012} imply that
	\begin{align*}
		\mathrm{adj}(\rho\mathrm{e}^{\imu \theta_j} \ve{I} - \ve{P})(\rho\mathrm{e}^{\imu \theta_j} \ve{I} - \ve{P}) &= (\rho\mathrm{e}^{\imu \theta_j} \ve{I} - \ve{P})\mathrm{adj}(\rho\mathrm{e}^{\imu \theta_j} \ve{I} - \ve{P}) \\
		&= \det(\rho\mathrm{e}^{\imu \theta_j} \ve{I} - \ve{P}) \ve{I}.
	\end{align*}
	By Theorem~\ref{thm:irreducible:spectral:circle}, $\rho\mathrm{e}^{\imu \theta_j}$ is an eigenvalue of $\ve{P}$, which implies that $\rho\mathrm{e}^{\imu \theta_j} \ve{I} - \ve{P}$ is singular, so $\det(\rho\mathrm{e}^{\imu \theta_j} \ve{I} - \ve{P})=0$.
	Hence, 
	$$
	\mathrm{adj}(\rho\mathrm{e}^{\imu \theta_j} \ve{I} - \ve{P})(\rho\mathrm{e}^{\imu \theta_j} \ve{I} - \ve{P}) = (\rho\mathrm{e}^{\imu \theta_j} \ve{I} - \ve{P})\mathrm{adj}(\rho\mathrm{e}^{\imu \theta_j} \ve{I} - \ve{P}) =\ve{0}.
	$$
	Therefore, the columns of $\mathrm{adj}(\rho\mathrm{e}^{\imu \theta_j} \ve{I} - \ve{P})$ are right eigenvectors of $\ve{P}$ associated to 
	$\rho\mathrm{e}^{\imu \theta_j}$. Similarly, the rows of  $\mathrm{adj}(\rho\mathrm{e}^{\imu \theta_j} \ve{I} - \ve{P})$ are left eigenvectors of $\ve{P}$ associated to   
	$\rho\mathrm{e}^{\imu \theta_j}$, and therefore \mbox{$\adj(\rho\mathrm{e}^{\imu \theta_j} \ve{I}-\ve{P}) = \ve{u}_j^\mathrm{T}\ve{v}_j$}, where $\ve{u}_j$ and $\ve{v}_j$ are right and left eigenvectors corresponding to $\rho\mathrm{e}^{\imu \theta_j}$. It is not possible that $c_j = 0$, since $\rank(\adj(\rho\mathrm{e}^{\imu \theta_j} \ve{I} - \ve{P}))=1$ by Lemma~\ref{lemma:adj:rank:1}.
	
	We proceed with proving the second statement. By the Perron-Frobenius Theorem, $\ve{u}_0$ is either all-zero, all-positive, or an all-negative vector, and the same applies to $\ve{v}_0$. If we now assume that $\ve{B} \triangleq  \mathrm{adj}(\rho  \ve{I} - \ve{P})$ satisfies $[\ve{B}]_{11}>0$, the observations above imply that every entry of $\ve{B}$ must be positive. Similarly, if $[\ve{B}]_{11}<0$, we can conclude that every entry of $\ve{B}$ must be negative. 
\end{proof}
The cost-enumerator matrix $\ve{P}_G(x)$ is a matrix that is parametrized by  a complex-valued variable $x \in \mathbb{C}$. In our analysis, due to the strong connectivity of the graph $G$, the matrix $\ve{P}_G(x)$ is irreducible for all positive and real-valued $x\in\mathbb{R}^+$. By Definition~\ref{def:cost:enumerator:spectral:radius}, the entries of $\ve{P}_G(x)$ are polynomials in $x$ and thus analytic\footnote{A function is analytic at a point $x$  if it can locally be represented by a power series. A function is analytic in a domain if and only if it is complex differentiable in the same domain, see, e.g. \cite[Thm. IV.1]{flajolet_analytic_2009}} functions in $x$. This analyticity then implies, by the implicit function theorem for algebraic functions, that the eigenvalue $\lambda(x)$ that is equal to $\rho_G(x)$ on the real axis, is analytic in a neighborhood around the positive real axis.
\begin{lemma} \label{lemma:spectral:radius:analytic}
	Let $\ve{P}(x)$ be a matrix with spectral radius $\rho(x)$, whose entries are analytic functions in $x \in \mathbb{C}$. Also assume that $\ve{P}(x)$ is irreducible with period $d$ for all $x \in \mathbb{R}^+$. Then, for each $j\in \{0,1,\dots,d-1\}$ and all real-valued $x \in \mathbb{R}^+$, there exists a unique eigenvalue $\lambda_j(x)$ of $\ve{P}(x)$ with $\lambda_j(x) = \rho(x)\mathrm{e}^{2\pi \imu j/d}$, which is analytic in a complex neighborhood around the positive real axis. Furthermore, the associated right and left eigenvectors $\ve{u}_j(x)$ and $\ve{v}_j(x)$, normalized to $\ve{v}_j(x)\ve{u}_j^\mathrm{T}(x)=1$, are analytic on the same domain.
\end{lemma}
\begin{proof}
	By the Perron-Frobenius Theorem (Theorem~\ref{thm:perron:frobenius}) and the extension in Theorem~\ref{thm:irreducible:spectral:circle}, for every $j \in \{0,1,\dots,d-1\}$ and $x_0\in \mathbb{R}^+$, the value $\lambda_j(x_0)=\rho(x_0)\mathrm{e}^{\imu 2\pi \imu j/d}$ is a simple root of the characteristic polynomial $\phi(\lambda)= \det(\lambda\ve{I}- \ve{P}(x_0))$. The coefficients of this polynomial $\phi(\lambda)$ are polynomials in analytic functions, as the entries of $\ve{P}(x)$ are analytic by assumption. The implicit function theorem for algebraic functions \cite[pp. 66-67]{wilkinson_algebraic_1988} then implies that for each $x_0>0$ there exists an $\epsilon>0$ such that $\lambda_j(x)$ is an eigenvalue of $\ve{P}(x)$ and $\lambda_j(x)$ is an analytic function for all $x\in\mathbb{C}$ with $|x-x_0|<\epsilon$. As proven in \cite[pp. 66-67]{wilkinson_algebraic_1988}, the associated eigenvectors are also analytic functions in $x$ in a neighborhood around the positive real axis.
\end{proof}
Note that a continuous continuation of the Perron root to the whole complex plane does not in general have to be unique. This is because the Perron-Frobenius theorem only guarantees the uniqueness of the Perron root $\ve{P}(x)$  for positive $x$. For all other $x \in \mathbb{C} \setminus \R^+$ the eigenvalues $\lambda_j(x)$ might intersect, meaning that the implicit function theorem does not hold, and thus a unique analytic extension of the root is not possible anymore. The following example illustrates the generic case, showing that the eigenvalues intersect at the origin $x=0$.
\begin{example}
	Consider the graph $G$ with cost-enumerator matrix
	$$\ve{P}_G(x) = \begin{pmatrix}
		x^2&x\\x&x^2
	\end{pmatrix}.$$
	The two eigenvalues of this matrix are given by $\lambda_1(x) = x+x^2$ and $\lambda_2(x) = -x+x^2$. We directly see that $\lambda_1(0) = \lambda_2(0) = 0$, and thus the two eigenvalues intersect at the origin $x=0$. In fact, all eigenvalues of any cost-enumerator matrix intersect at $x=0$, since $\ve{P}_G(0) = \ve{0}$.
\end{example}
Besides the spectral radius $\rho(x)$ of the cost enumerator matrix, we are also interested in its derivative. This is because the derivative appears as a component of the critical point equations (see, e.g., Theorem~\ref{thm:fixed-length:capacity}) and it can be further used to analyze the convexity of $\rho(x)$. 
\begin{lemma} \label{lemma:diff:perron}
	Let $\ve{P}(x)$ be a matrix with spectral radius $\rho(x)$, whose entries are analytic functions in $x \in \mathbb{C}$. Furthermore, let $\ve{P}(x)$ be irreducible with period $d$ for all $x \in \mathbb{R}^+$. Then the eigenvalues $\lambda_j(x)$ of $\ve{P}(x)$ of maximum modulus, and the associated right and left eigenvectors $\ve{u}_j(x)$ and $\ve{v}_j(x)$, normalized to $\ve{v}_j(x)\ve{u}_j^\mathrm{T}(x)=1$, are analytic in a neighborhood around $\mathbb{R}^+$, and
$$ \ve{v}_j(x)\frac{\partial \ve{P}(x)}{\partial x} \ve{u}_j^\mathrm{T}(x) =  \frac{\partial \lambda_j(x)}{\partial x}. $$
\end{lemma}
\begin{proof}
	To start with, denote by $\lambda_j(x)$ the eigenvalues of maximum modulus whose existence is guaranteed by Lemma~\ref{lemma:spectral:radius:analytic}. The differentiability of $\lambda_j(x),\ve{u}_j(x),$ and $\ve{v}_j(x)$ then follows from the analyticity of $\lambda_j(x)$ proven in Lemma~\ref{lemma:spectral:radius:analytic}. Differentiating $\ve{P}(x)\ve{u}_j^\mathrm{T}(x)=\lambda_j(x)\ve{u}_j^\mathrm{T}(x)$ on both sides with respect to $x$ yields
	$$  \ve{P}(x) \frac{\partial \ve{u}_j^\mathrm{T}(x)}{\partial x} + \frac{\partial \ve{P}(x)}{\partial x} \ve{u}_j^\mathrm{T}(x) = \lambda_j(x) \frac{\partial \ve{u}_j^\mathrm{T}(x)}{\partial x} + \frac{\partial \lambda_j(x)}{\partial x} \ve{u}_j^\mathrm{T}(x). $$
	Multiplying with $\ve{v}_j(x)$ from the left, one obtains
	$$ \ve{v}_j(x) \frac{\partial \ve{P}(x)}{\partial x} \ve{u}_j^\mathrm{T}(x) = \frac{\partial \lambda_j(x)}{\partial x} $$
	as desired. We remark that a special case of this result was proved in~\cite{soriaga_design_2006}.
\end{proof}
Note that, although $\lambda_1(x)=\rho(x)$ for all $x \in \mathbb{R}^+$ (where we denote by $\lambda_1(x)$ the Perron root), the spectral radius $\rho(x)$ is not necessarily differentiable with respect to complex-valued $x$, as $\rho(x)$ is equal to the \emph{magnitude} of the largest eigenvalue. Even though the eigenvalues $\lambda_j(x)$ of maximum modulus are analytic in a neighborhood around the real axis, the magnitude function is not an analytic function on the whole complex plane.

%
%
%!TEX root = bare_jrnl.tex
\section{Spectral Properties of Cost-Diverse Graphs}
\label{sec:spectral:radius:diverse}
An important requirement of Theorem~\ref{thm:fixed-length} is that the graph $G$ be cost-diverse. Roughly speaking, by Definition~\ref{def:cost:diverse}, cost diversity means that the  average costs assumed by paths connecting two vertices do not approach a constant for large path lengths. This property is important in the derivation of the asymptotics of the bivariate series $N_{G,v}(t,\alpha t)$ as it entails a smooth behavior of the series in the parameter $\alpha$. Conversely, if $G$ is cost-uniform, there is in fact only a single value for $\alpha$ for which the series $N_{G,v}(t,\alpha t)$ does not vanish eventually. Note that \cite{khandekar_discrete_2000} found that graphs for which all edge costs are the same have this discontinuous behavior, however these are not the only graphs that fall into this category. We generalize this observation and show that cost diversity is the precise graph property that distinguishes between a smooth and discontinuous behavior\footnote{Cost diversity has been shown in \cite{liu_coding_2020} to impose desirable properties on the Perron root.}. We further extend the notion of cost diversity to the property of having cost period $c$ (Definition \ref{def:cost:period}) and show that it relates to a very special structure of the cost-enumerator matrix $\ve{P}_G(x)$, when $x$ is rotated in multiples of $2\pi/c$ along the complex circle. 

Connections between cost diversity and the spectral radius will be integral to Theorem~\ref{thm:fixed-length}. As we will see, the coboundary condition defined in Definition~\ref{def:coboundary:condition} arises in a variety of results related to the Perron-Frobenius Theorem and is very useful for proving several of our results. We further need the notion of log-log-convexity, which is defined as follows.
\begin{defn} \label{def:log:log:convexity}
	Let $I \subseteq \R^+$ be an interval and $f(x) : I\mapsto \R^+$ be a function on that interval. We call $f(x)$ \emph{log-log-convex} if $\ln f(\mathrm{e}^s)$ is convex in the variable $s$ on the interval $\ln I \triangleq \{\ln x : x \in I\}$. Analogously, we introduce the notions of \emph{strict log-log-convexity} and \emph{log-log-linearity}.
\end{defn}
With these definitions we arrive at Lemma~\ref{lemma:equiv}, the central statement of this section.
%We will prove the result using Lemmas~\ref{lemma:cost:uniform:coboundary} \ref{lemma:spectral:radius:complex:circle}, \ref{lemma:log:log:convex}, and Corollaries~\ref{cor:cost:uniformity:continuum:solutions} and \ref{cor:log:log:linear} below. Fig.~\ref{fig:relationship:properties} depicts the roadmap for our subsequent derivations that establish the connections between the graph properties, spectral properties of $\ve{P}_G(x)$,  and the singularities of $\ve{F}_G(x,y)$.
%
%
\begin{lemma}\label{lemma:equiv}
	Let $G$ be a strongly connected graph. The following statements are equivalent.
	\begin{enumerate}
		\item[(a)] The graph $G$ has cost period $c$ .
		\item[(b)] The graph $G$ satisfies the $c$-periodic coboundary condition.
		\item[(c)] For any $x \in \mathbb{R}^+$, there are precisely $c$ solutions, $\phi_k = 2\pi k/c$ for $k\in \{0,1,\dots,c-1\}$, to the equation $\rho_G(x\mathrm{e}^{\imu\phi}) = \rho_G(x)$ in the interval $0\leq \phi<2\pi$.
	\end{enumerate}
	Furthermore, if $G$ is cost-uniform then the spectral radius $\rho_G(x)$ is log-log-linear on $x\in \mathbb{R}^+$. If $G$ is cost-diverse then the spectral radius $\rho_G(x)$ is strictly log-log-convex on $x\in \mathbb{R}^+$.
\end{lemma}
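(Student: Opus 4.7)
The plan is to close the circle (a)$\Rightarrow$(b)$\Rightarrow$(c)$\Rightarrow$(a) and then treat the two log-log statements separately. The equivalence (a)$\Leftrightarrow$(b) is arithmetic (telescoping along paths), (b)$\Rightarrow$(c) is a direct matrix conjugation, and (c)$\Rightarrow$(a) is driven by the equality case of Wielandt's comparison theorem applied to the nonnegative irreducible matrix $\bP_G(x)$ at real $x>0$. Cost-uniformity then forces log-log-linearity through the coboundary presentation of $\bP_G(x)$, while cost-diversity forces strict log-log-convexity through a variance-positivity identity for the Perron edge measure; this final strictness step is the principal obstacle.

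First, for (b)$\Rightarrow$(a), summing the coboundary identity along a length-$m$ path from $v_i$ to $v_j$ telescopes to a total cost $\equiv mb+B(v_j)-B(v_i)\pmod c$, which depends only on $i,j,m$. For the converse (a)$\Rightarrow$(b), fix a basepoint $v_0$ and a reference path $p_v$ from $v_0$ to each $v$ with length $\ell_v$ and cost $c_v$. Cost-periodicity makes $T(m)=$ (common cost mod $c$ of length-$m$ cycles at $v_0$) a group homomorphism $d\mathbb{Z}\to\mathbb{Z}/c\mathbb{Z}$ (where $d$ is the period), so a rational $b$ solving $bd\equiv T(d)\pmod c$ exists; setting $B(v)=c_v-b\ell_v$, well-definedness follows by closing two reference candidates with a common return path into cycles at $v_0$ and invoking (a), and the coboundary identity on any edge $e:v_i\to v_j$ follows by comparing the length-$(\ell_{v_i}+1)$ path $p_{v_i}\cdot e$ against $p_{v_j}$ (equalised by any length-$|\ell_{v_j}-\ell_{v_i}-1|$ cycle at $v_j$) and substituting $T(m)\equiv bm\pmod c$.

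For (b)$\Rightarrow$(c), a direct entrywise computation yields the similarity
\[
\bP_G(x\omega^k)=\mathrm{e}^{2\pi\imu kb/c}\,\bD_k^{-1}\,\bP_G(x)\,\bD_k,\qquad \omega=\mathrm{e}^{2\pi\imu/c},\quad [\bD_k]_{jj}=\mathrm{e}^{2\pi\imu k B(v_j)/c},
\]
whose scalar prefactor is unimodular, hence $\rho_G(x\omega^k)=\rho_G(x)$. For (c)$\Rightarrow$(a), fix $x>0$ and $m\geq 1$. The triangle inequality gives $|[\bP_G(x\omega)^m]_{ij}|\leq [\bP_G(x)^m]_{ij}$ entrywise, with equality at $(i,j)$ iff all length-$m$ paths $\bp$ from $v_i$ to $v_j$ share a common value of $\omega^{\tau(\bp)}$. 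Wielandt's theorem (if $A\geq 0$ is irreducible and $|B|\leq A$ entrywise with $\rho(B)=\rho(A)$, then $B=\mathrm{e}^{\imu\phi}\bD_0 A\bD_0^{-1}$ for some unitary diagonal $\bD_0$ and in particular $|B|=A$) applied to $A=\bP_G(x)$ and $B=\bP_G(x\omega)$, then propagated to $B^m=\mathrm{e}^{\imu m\phi}\bD_0 A^m\bD_0^{-1}$, yields the entrywise equality for every $m$, which is exactly cost-periodicity with modulus $c$.

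Finally, if $G$ is cost-uniform then (a)$\Leftrightarrow$(b) holds \emph{without} any modulus, and the above similarity specialises to $\bP_G(x)=x^b\,\bD(x)^{-1}\bM\,\bD(x)$ with $\bM_{ij}=|\{e\in E:\init(e)=v_i,\final(e)=v_j\}|$ and $\bD(x)_{ii}=x^{B(v_i)}$; hence $\rho_G(x)=x^b\rho(\bM)$ and $\log\rho_G(\mathrm{e}^s)=bs+\log\rho(\bM)$ is linear in $s$. If $G$ is cost-diverse, Kingman's theorem applied to the log-convex entries $[\bP_G(\mathrm{e}^s)]_{ij}=\sum_e\mathrm{e}^{s\tau(e)}$ gives non-strict log-log-convexity of $\rho_G$, and strictness is recovered by the perturbation identity
\[
\frac{d^2}{ds^2}\log\rho_G(\mathrm{e}^s)=\mathrm{Var}_{\pi_s}(\tau),
\]
where $\pi_s(e)\propto u_{\init(e)} v_{\final(e)}\mathrm{e}^{s\tau(e)}$ is the edge measure induced by the left/right Perron eigenvectors of $\bP_G(\mathrm{e}^s)$. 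Cost-diversity supplies two length-$m$ paths with distinct total costs that both carry strictly positive $\pi_s$-mass (by strong connectedness and Perron positivity of the eigenvectors), forcing the variance to be strictly positive. This last step is the principal obstacle, since strict log-convexity of matrix entries does not in general transfer to strict log-convexity of the spectral radius; a contrapositive alternative is to argue that log-log-linearity of $\rho_G$ on any interval, combined with the algebraicity of the Perron root and uniqueness of the Perron eigenvectors, forces $\bP_G(x)$ into the above diagonalised monomial form, which by (a)$\Leftrightarrow$(b) is equivalent to cost-uniformity.
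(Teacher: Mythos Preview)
Your cycle (a)$\Leftrightarrow$(b)$\Leftrightarrow$(c) is essentially the paper's argument: the paper also proves (a)$\Leftrightarrow$(b) by telescoping along paths (its Lemma on the $c$-periodic coboundary condition, with a Chinese-Remainder-type step to produce $b$), gets (b)$\Rightarrow$(c) from the same diagonal conjugation $\bP_G(x\omega^k)=\mathrm{e}^{2\pi\imu kb/c}\bD_k^{-1}\bP_G(x)\bD_k$, and uses Wielandt's equality case for the reverse direction. Your (c)$\Rightarrow$(a) is in fact tidier than the paper's: you apply Wielandt directly to force $|[\bP_G(x\omega)^m]_{ij}|=[\bP_G(x)^m]_{ij}$ for every $m$, which immediately gives congruence of path costs mod $c$; the paper instead proves the sharper ``precisely $c$ solutions on the circle'' statement and needs an auxiliary lemma constructing two equal-length cycles whose cost difference is exactly the largest cost-period. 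Your (a)$\Rightarrow$(b) sketch is correct but terse; the delicate point (which you wave through with ``a rational $b$ exists'') is handled in the paper by a Bézout/CRT argument over all cycle lengths rather than via the period homomorphism.

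The genuine gap is in your strict log-log-convexity step. The identity you invoke,
\[
\frac{d^2}{ds^2}\log\rho_G(\mathrm{e}^s)=\mathrm{Var}_{\pi_s}(\tau)
\]
with $\pi_s$ the one-step edge measure, is false in general for non-scalar $\bP_G$: the correct second derivative is the \emph{asymptotic} variance $\lim_{m\to\infty}\frac{1}{m}\mathrm{Var}(S_m\tau)$ under the equilibrium Markov chain, which differs from the edge variance by autocorrelation terms (coming from the $s$-dependence of the Perron eigenvectors). Your positivity argument (``two length-$m$ paths with distinct costs force the variance positive'') then does not connect to the quantity you wrote down, and for the correct asymptotic variance you would still need a Liv\v{s}ic-type statement (asymptotic variance zero $\Leftrightarrow$ $\tau$ cohomologous to a constant) to close the loop---which is essentially re-proving (a)$\Leftrightarrow$(b). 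The paper sidesteps this entirely: it observes that each entry $[\bP_G^m(\mathrm{e}^s)]_{ij}=\sum_{\bp}\mathrm{e}^{s\tau(\bp)}$ is log-convex in $s$ by Cauchy--Schwarz, and \emph{strictly} log-convex for some $(i,j,m)$ precisely when $G$ is cost-diverse; it then invokes a matrix result (if an irreducible nonnegative matrix has log-convex entries with at least one strictly log-convex, its Perron root is strictly log-convex) applied to $\bP_G^m$, and transfers strictness back from $\rho_G^m$ to $\rho_G$. This is elementary and self-contained, whereas your route either needs the correct variance formula plus a cohomology-vanishing theorem, or the contrapositive analytic-continuation argument you sketch at the end (which is plausible but not yet a proof).
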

\begin{figure*}
	\centering
	\begin{tikzpicture}

		\node[block, text width=3cm] (cp) {Cost period (Definition~\ref{def:cost:period})};
		\node[block, right=5.5cm of cp, text width=3cm] (cc) {Coboundary condition (Definition~\ref{def:coboundary:condition})};
		\node[block, below=3.5cm of cc.center, text width=3cm, anchor=center] (srcc) {Spectral radius complex circle};
		\node[block, below=3.5cm of cp.center, text width=3cm, anchor=center] (llc) {Strict log-log-convexity (Definition~\ref{def:log:log:convexity})};
		
		\node[block, below =3.5cm of llc.center, text width=3cm, anchor=center] (nd) {Non-degeneracy (Definition~\ref{def:nondegenerate})};
		\node[block, right=1.15cm of nd, text width=3cm] (sc) {Smoothness and criticality (Definition~\ref{def:critical:smooth})};
		\node[block, below=3.5cm of srcc.center, text width=3cm, anchor=center] (fm) {Finite minimality (Definition~\ref{def:minimal:singularity})};
		
		\node[left =1cm of cp, text width=2cm, align=center] {Graph $G$};
		\node[left =1cm of llc, text width=2cm, align=center] {Spectral properties of $\ve{P}_G(x)$};
		\node[left =1cm of nd, text width=2cm, align=center] {Singularities of $\ve{F}_G(x,y)$};
		
		\draw[->] (cp) to[bend left=10] node[above] {Lemma~\ref{lemma:cost:uniform:coboundary}} (cc)  ;
		\draw[->] (cc) to[bend left=10] node[below] {Lemma~\ref{lemma:cost:uniform:coboundary}} (cp)  ;
		\draw[->] (srcc) to[bend right=10] node[right,pos=0.45] {Corollary~\ref{cor:cost:uniformity:continuum:solutions}}
		 node[right,pos=0.25]{(contrapositive)}  (cc)  ;
		\draw[->] (cp) to[bend right=10,in=190] node[above right=0cm and 0cm,pos=0.6,text width=2.5cm] {Lemmas~\ref{lemma:spectral:radius:complex:circle}~\&~\ref{lemma:cost-enumerator:complex:circle}} (srcc)  ;
		%\draw[->] (cp) to[bend left=10] node[right,pos=0.75] {Lemma~\ref{lemma:log:log:convex}}  (llc)  ;
		\draw[->] (cp) to[bend right=10] node[left,pos=0.75] {Lemma~\ref{lemma:log:log:convex}} (llc)  ;
		
		\draw[->] (llc) to[bend right=10] node[left,pos=0.6] {Lemma~\ref{lemma:singularity:nondegenerate}} (nd)  ;
		\draw[->] (srcc) to[bend right=10] node[left,pos=0.65] {Lemma~\ref{lemma:minimal:singularities}} (fm)  ;
		\draw[->] (srcc) to[bend right=10] node[left=1cm,pos=0.75, text width = 2.5cm] {Lemmas~\ref{lemma:singularities:critical:smooth}~\&~\ref{lemma:alpha:boundary:unique:solution}} (sc)  ;
		
		\draw[dashed] ($(cp) + (-4.75,-1.35)$) -- ($(cp) + (10.5,-1.35)$);
		\draw[dashed] ($(llc) + (-4.75,-1.35)$) -- ($(llc) + (10.5,-1.35)$);
		
	\end{tikzpicture}
	\caption{Relationships between the properties of a \textcolor{blue}{cost-diverse} strongly connected graph, the nature of the cost-enumerator spectrum, and the singularities of the corresponding generating function.}
	\label{fig:relationship:properties}
\end{figure*}
We will prove the result using a sequence of lemmas and their corollaries.  Fig.~\ref{fig:relationship:properties} depicts the roadmap for our subsequent derivations that establish the connections between the graph properties, spectral properties of $\ve{P}_G(x)$,  and, in Section~\ref{sec:multivariate:proof},  the singularities of $\ve{F}_G(x,y)$.

%Lemmas~\ref{lemma:cost:uniform:coboundary} and~\ref{lemma:spectral:radius:complex:circle}, \ref{lemma:log:log:convex}, and Corollaries~\ref{cor:cost:uniformity:continuum:solutions} and \ref{cor:log:log:linear} below. Fig.~\ref{fig:relationship:properties} depicts the roadmap for our subsequent derivations that establish the connections between the graph properties, spectral properties of $\ve{P}_G(x)$,  and the singularities of $\ve{F}_G(x,y)$.
\begin{remark}
	Lemma~\ref{lemma:cost:uniform:coboundary} establishes the equivalence $(a) \Leftrightarrow (b)$ in Lemma~\ref{lemma:equiv}. 
%\linebreak	
	Lemma~\ref{lemma:spectral:radius:complex:circle} proves $(a) \Rightarrow (c)$. Corollary~\ref{cor:cost:uniformity:continuum:solutions} proves $(c) \Rightarrow (b)$ through its contrapositive: If $G$ has precisely $c$ solutions to $\rho_G(x \mathrm{e}^{\imu\phi}) = \rho_G(x)$ on the unit circle, then $\rho_G(x \mathrm{e}^{\imu\phi}) = \rho_G(x)$ does not hold for all $x\in\mathbb{C}$ and $0\leq\phi<2\pi$. Then, by Corollary~\ref{cor:cost:uniformity:continuum:solutions}, it follows that $G$ does not satisfy the coboundary condition, hence it must satisfy the $c$-periodic couboundary condition for some $c$. Specifically, it must be fulfilled with the same $c$ as in statement $(c)$, as any other $c$ would lead to a contradiction. 
	%and Corollary \ref{cor:cost:uniformity:continuum:solutions} combine to prove the equivalences $(a) \Leftrightarrow (c)$ and $(b) \Leftrightarrow (c)$.  They state that for any strongly connected graph $G$ the equation $\rho_G(x \mathrm{e}^{\imu\phi}) = \rho_G(x)$ either has a finite number of solutions, $\phi_k = 2\pi k/c$, when $G$ is cost-diverse with cost period $c$, or holds for all $\phi$ when $G$ is cost-uniform. Because cost period zero refers to cost-uniformity, this means statement \emph{(c)} in Lemma \ref{lemma:equiv} covers the case of invariance of the spectral radius on the complex circle. Furthermore, we see that when $\rho_G(x \mathrm{e}^{\imu\phi}) = \rho_G(x)$ has an infinite number of solutions, then the set of such solutions comprise the full complex circle.
%\linebreak	
	Finally, Corollary~\ref{cor:log:log:linear} and Lemma~\ref{lemma:log:log:convex} establish the last two statements of Lemma~\ref{lemma:equiv}. 
	\linebreak
	The remaining Lemmas are either ingredients to these Lemmas or cover the cost-uniform case.
\end{remark}
%
%We proceed with proving the lemmas required for the derivation of Lemma~\ref{lemma:equiv}.

\subsection{Equivalence of Cost-Diversity and Coboundary Condition}
We start with proving the equivalence of cost-uniformity and the coboundary condition. For convenience, we say that two integers are congruent modulo $0$ if and only if they are equal. The following result is a generalization of the equivalence between the coboundary condition and cost-uniformity observed in \cite{liu_coding_2020} to arbitrary cost periods $c$.
\begin{lemma} \label{lemma:cost:uniform:coboundary}
	Let $G$ be a strongly connected graph. Then $G$ has cost period $c$ if and only if it fulfills the $c$-periodic coboundary condition. 
\end{lemma}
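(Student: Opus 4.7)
For the ``$\Leftarrow$'' direction the argument is a short telescoping calculation. Given a path $\bp=(e_1,\dots,e_m)$ from $v_i$ to $v_j$ that passes through vertices $u_0=v_i,u_1,\dots,u_m=v_j$, the $c$-periodic coboundary condition yields $\tau(e_k)\equiv b+B(u_k)-B(u_{k-1})\pmod c$ for each $k$. Summing these congruences telescopes to $\tau(\bp)\equiv mb+B(v_j)-B(v_i)\pmod c$, which depends only on $m$, $v_i$, and $v_j$; hence any two paths of the same length between the same pair of vertices share a common cost modulo $c$, i.e., $G$ has cost-period $c$.

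The substantive content is the ``$\Rightarrow$'' direction, for which the plan is to construct $b$ and $B$ explicitly from the structure of cycles at a fixed base vertex $v_0$. First I would show that for each length $m$ supporting a cycle at $v_0$, all such cycles share a common cost modulo $c$ (directly from cost-period applied to the pair $(v_0,v_0)$); call this value $\kappa(m)\in\mathbb{Z}/c\mathbb{Z}$. Concatenating cycles shows that $\kappa$ is additive on the numerical semigroup $S\subseteq\mathbb{N}$ of cycle lengths at $v_0$. Since $S$ generates the subgroup $d\mathbb{Z}$ of $\mathbb{Z}$, where $d=\gcd S$ is the period of $G$, a standard argument extends $\kappa$ uniquely to a group homomorphism $\bar\kappa\colon d\mathbb{Z}\to\mathbb{Z}/c\mathbb{Z}$. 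Choosing any integer representative $b_0$ of $\bar\kappa(d)$, I would set $b=b_0/d\in\mathbb{Q}$; then $\kappa(L)\equiv Lb\pmod c$ for every $L\in S$, noting that $Lb=(L/d)b_0\in\mathbb{Z}$ since $d\mid L$.

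With $b$ fixed, I would pick for each vertex $v$ a path $\bp_v$ from $v_0$ to $v$ of length $m_v$ and cost $T_v$ (available by strong connectivity) and set $B(v)=T_v-bm_v\in\mathbb{Q}$. Well-definedness of $B(v)$ modulo $c$ follows by taking any second choice $\bp_v'$ with data $(m_v',T_v')$, appending a common return path $\br$ from $v$ to $v_0$ to produce two closed walks at $v_0$, and applying the cycle relation $\kappa\equiv b\cdot(\text{length})\pmod c$ to each; the contributions of $\br$ cancel and leave $T_v-bm_v\equiv T_v'-bm_v'\pmod c$. The length period gives $m_v\equiv m_v'\pmod d$, so that $b(m_v-m_v')\in\mathbb{Z}$ and the congruence takes values in $c\mathbb{Z}$ as required. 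Finally, to verify the coboundary condition on an edge $e\colon v_i\to v_j$, observe that $\bp_i\cdot e$ is a path from $v_0$ to $v_j$ of length $m_i+1$ and cost $T_i+\tau(e)$; by well-definedness $B(v_j)\equiv T_i+\tau(e)-(m_i+1)b\pmod c$, and subtracting $B(v_i)=T_i-m_ib$ yields $\tau(e)\equiv b+B(v_j)-B(v_i)\pmod c$.

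The main obstacle will be the construction of $b$: in general $b=b_0/d$ is genuinely rational rather than an integer, and I will have to verify carefully that the cycle cost function $\kappa$ extends coherently from the semigroup $S$ to the group $d\mathbb{Z}$ in a way compatible with the cost-period hypothesis. A secondary technical point is that the coboundary function $B$ must be allowed to take rational (not merely integer) values, which is exactly what the definition permits; this flexibility is essential whenever the shortest cycle at $v_0$ has length greater than~$1$, since without it the congruences forced by cycles of differing lengths cannot simultaneously be satisfied.
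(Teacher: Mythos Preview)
Your proposal is correct and follows essentially the same architecture as the paper's proof: the easy direction is the same telescoping calculation, and for the hard direction both arguments (i) first produce a rational $b$ so that every closed walk of length $m$ has cost $\equiv mb\pmod c$, (ii) then define $B(v)$ as $\tau(\bp_v)-b\cdot\mathrm{len}(\bp_v)$ for a chosen path from a base vertex, (iii) verify well-definedness by closing up with a return path, and (iv) check the edge condition by extending $\bp_i$ by $e$.

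The one substantive difference is in step (i). The paper establishes a pairwise compatibility condition $m_2t_1\equiv m_1t_2\pmod{c\gcd(m_1,m_2)}$ between \emph{arbitrary} cycles (possibly at different vertices), obtained by splicing the cycles together via connecting paths, and then invokes a tailor-made Chinese-Remainder-type lemma to extract $b$. Your route is more intrinsic: you restrict to closed walks at a single base vertex $v_0$, observe that their cost residues give an additive map $\kappa$ on the numerical semigroup $S$ of cycle lengths at $v_0$, and extend $\kappa$ to a homomorphism $d\mathbb{Z}\to\mathbb{Z}/c\mathbb{Z}$ via the Grothendieck-group construction. This avoids the auxiliary CRT lemma entirely and keeps the argument at $v_0$ throughout; it is cleaner, and it makes transparent why $b$ must be allowed to be rational (with denominator dividing the period $d$). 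The paper's version, by contrast, yields the slightly stronger intermediate statement that \emph{every} cycle in $G$ (not just those at $v_0$) satisfies $\tau(\bp)\equiv b\cdot\mathrm{len}(\bp)\pmod c$, though this extra strength is not actually needed for the rest of the proof.
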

\begin{proof}
	We first show that \textbf{the $c$-periodic coboundary condition implies cost period $c$}. Let $\ve{p} = (e_1, e_2, \ldots, e_m)$ be a path from vertex $v_i$ to vertex $v_j$ 
	with path cost $\tau(\ve{p}) = \sum_{k=1}^m \tau(e_k)$. Suppose $\ve{p}$ is
	represented by the vertex sequence 
	$
	v_i=v_{i_0} \rightarrow v_{i_1} \rightarrow \cdots \rightarrow v_{i_m} = v_j.
	$
	The coboundary condition allows the path cost to be written as
	\begin{align*}
		\tau(\ve{p})&=\sum_{k=1}^m (b + B(v_{i_k})-B(v_{i_{k-1}})) + zc  \\
		&= mb + B(v_j) - B(v_i) + zc,
	\end{align*}
	for some integer $z \in \mathbb{Z}$. Thus, the costs of all paths of length $m$ that connect $v_i$ and $v_j$ are congruent modulo $c$ and, by definition, the graph $G$ has cost period $c$.
	
	We now show that \textbf{cost period $c$ implies the $c$-periodic coboundary condition}. We start by showing that there exists $b \in \mathbb{Q}$ such that the cost of any cycle $\ve{p}$ of length $m$ satisfies $\tau(\ve{p}) \equiv bm \pmod c$. Let $v_1 \in \V$ and let  $\ve{p}_1$ be a cycle at vertex $v_1$ of length $m_1$. Such a cycle exists by strong connectivity of the graph. Suppose $\tau(\ve{p}_1) = t_1$, let $v_2\in \V$, and let $\ve{p}_2$ be a cycle of length~$m_2$ at vertex $v_2$ with cost $\tau(\ve{p}_2)=t_2$. Denote by $g_{12}=\gcd(m_1,m_2)$ the greatest common divisor of $m_1$ and $m_2$. Strong connectivity of $G$ implies there is a path $\ve{p}_{1\to 2}$ from $v_1$ to $v_2$ with length $n \geq 1$ and cost $\tau(\ve{p}_{1\to 2})=t$. Define the path $\ve{p}$ comprising $m_2/g_{12}$ repetitions of cycle $\ve{p}_1$ followed by $\ve{p}_{1\to 2}$, and the path $\ve{p}'$ comprising $\ve{p}_{1\to 2}$ followed by $m_1/g_{12}$ repetitions of the cycle $\ve{p}_2$. The paths $\ve{p}$ and $\ve{p}'$ both have length $m_1m_2/g_{12} + n$. So, as $G$ has cost period $c$, $m_2 t_1/g_{12} +t = \tau(\ve{p})=\tau(\ve{p}') + zc= m_1 t_2/g_{12} +t + zc$ for some $z \in \mathbb{Z}$, implying that
	\begin{equation*}
		m_2t_1-m_1t_2 = g_{12}zc .
	\end{equation*}
	We then employ a variation of the Chinese Remainder Theorem described in Lemma~\ref{lemma:var:crt} below, which implies the existence of $b \in \mathbb{Q}$ such that any cycle $\ve{p}$ of length $m$ in $G$ has a cost \mbox{$\tau(\ve{p}) \equiv mb \pmod c$}, which is congruent to $mb$ modulo $c$. 
	
	Now, define a function $B: \V \rightarrow \mathbb{R}$ as follows. Set $B(v_1) = 0$. For a vertex $v_i\neq v_1$, choose a path $\ve{p}_{1\to i}$ from $v_1$ to $v_i$ of length $n \geq 1$ and define $B(v_i) = \tau(\ve{p}_{1\to i})-n b$. We claim that $B(v_i) \bmod c$ is independent of the chosen path $\ve{p}_{1\to i}$. To see this, suppose $\ve{p}_{1\to i}'$ and $\ve{p}_{1\to i}''$ are two such paths from $v_1$ to $v_i$ of length $n'$ and $n''$, respectively, and let $\ve{p}_{i\to1}$ be a path of length $p$ from $v_i$ to $v_1$.
	The cycle $\ve{p}'=(\ve{p}_{1\to i}', \ve{p}_{i\to1})$ has length $n'+p$, so $\tau(\ve{p}') = (n' + p)b + z'c$ where $z' \in \mathbb{Z}$. Similarly, the cycle $\ve{p}''=(\ve{p}_{1\to i}'', \ve{p}_{i\to1})$ has length $n''+p$ and cost $\tau(\ve{p}'') = (n'' + p)b + z''c$ for some $z''\in \mathbb{Z}$. Then
	$\tau(\ve{p}_{1\to i}') = \tau(\ve{p}') -\tau(\ve{p}_{i\to1}) = (n' + p)b+ z'c -\tau(\ve{p}_{i\to1})$ and \mbox{$\tau(\ve{p}_{1\to i}'') = \tau(\ve{p}'')-\tau(\ve{p}_{i\to1}) = (n'' + p)b+ z''c -\tau(\ve{p}_{i\to1})$}. It follows that 
	$$
	\tau(\ve{p}_{i\to1}) = (n'+p)b+ z'c - \tau(\ve{p}_{1\to i}') = (n''+p)b+ z''c - \tau(\ve{p}_{1\to i}''),
	$$
	from which we conclude
	$$
	\tau(\ve{p}_{1\to i}') - n' b  = \tau(\ve{p}_{1\to i}'')-n'' b + (z'-z'')c .
	$$
	This confirms that, by definition, $B(v_i) \bmod c$ is independent of the choice of path from $v_1$ to $v_i$.
	
	Finally, let $e\in \E$ be an edge from vertex $v_i$ to vertex $v_j$, and let $\ve{p}_{j\to1}$ denote a path from vertex $v_j$ to $v_1$ of length $q$. Consider the cycle $\ve{p}_1=(\ve{p}_{1\to i}, e, \ve{p}_{j\to1})$, with cost $\tau(\ve{p}_1)=(n+1+q)b + z_1c$ for some $z_1 \in \mathbb{Z}$.
	Noting that $\tau(\ve{p}_1)=\tau(\ve{p}_{1\to i}, e)+\tau(\ve{p}_{j\to1})$, and using the fact that \mbox{$\tau(\ve{p}_{1\to i}, e) = B(v_j)+(n+1)b + z_jc$} for some $z_j \in \mathbb{Z}$, we find
	\begin{align*}
		\tau(\ve{p}_{j\to1}) &= (n+1+q)b + z_1c - (B(v_j)+(n+1)b+z_jc)\\
		&= qb - B(v_j) + (z_1-z_j)c.
	\end{align*}
	We can also write $\tau(\ve{p}_1)=\tau(\ve{p}_{1\to i})+\tau(e)+\tau(\ve{p}_{j\to1})$, implying that 
	\begin{align*}
		\tau(e) &= \tau(\ve{p}_1) - (\tau(\ve{p}_{1\to i})+\tau(\ve{p}_{j\to1}))\\
		&= (n+1+q)b - ((B(v_i)+nb) + (qb - B(v_j))) \\
		& \quad\quad+ (z_j-z_i)c\\
		&=b +B(v_j) -B(v_i) + (z_j-z_i)c.
	\end{align*}
	This confirms that the $c$-periodic coboundary condition holds. 
\end{proof}
\subsection{Cost Period and Spectral Properties}
We next show that cost-diversity implies that there can only be a finite number of solutions to $\rho_G(x\mathrm{e}^{\imu\phi}) = \rho_G(x)$ over $0\leq\phi<2\pi$. In fact, we will prove a stronger statement: for all $x\in\mathbb{R}^+$, the solutions are exactly $\phi_k = 2\pi k/c$. This property is vital as it implies that the minimal singularities of the corresponding generating functions are be finitely minimal. We start with an auxiliary result on the structure of the cost-enumerator matrix.
\begin{lemma} \label{lemma:cost-enumerator:complex:circle}
	Let $G$ be a strongly connected graph with cost period $c$. Then, for all $x \in \mathbb{C}$ and $k\in\mathbb{Z}$,
	$$ \ve{P}_G\left(x\mathrm{e}^{2\pi\imu k/c}\right) =  \mathrm{e}^{2\pi\imu kb/c} \ve{D}_k^{-1} \ve{P}_G(x) \ve{D}_k,$$
	where $\ve{D}_k$ is a diagonal matrix with entries $[\ve{D}_k]_{jj} = \mathrm{e}^{2\pi\imu k B(v_j)/c}$, and $b$ and $B(v_j)$ are defined by the coboundary condition. Denoting by $\lambda_1(x),\dots,\lambda_{|\V|}(x)$ the eigenvalues of $\ve{P}_G(x)$, it holds that
	$$ \lambda_j\left(x \mathrm{e}^{2\pi\imu k/c}\right) = \mathrm{e}^{2\pi\imu kb/c } \lambda_j(x). $$
\end{lemma}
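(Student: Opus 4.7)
The plan is to prove the matrix identity entry-wise by directly computing $[\bP_G(x e^{2\pi\imu k/c})]_{ij}$ and applying the $c$-periodic coboundary condition (which holds by Lemma \ref{lemma:cost:uniform:coboundary}, since $G$ has cost-period $c$). The eigenvalue statement will then follow from an easy similarity argument.

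For the matrix identity, I would start from the definition in \Cref{defn:costmatrix}:
\[ [\bP_G(xe^{2\pi\imu k/c})]_{ij} = \sum_{e \in E: \init(e)=v_i,\final(e)=v_j} x^{\tau(e)} \, e^{2\pi\imu k \tau(e)/c}. \]
By the $c$-periodic coboundary condition, for each such edge $e$ we have $\tau(e) = b + B(v_j) - B(v_i) + m_e c$ for some integer $m_e$, and since $k \in \mathbb{Z}$ the factor $e^{2\pi\imu k m_e}$ is trivial. Pulling out the factors $e^{2\pi\imu k b/c}$, $e^{-2\pi\imu k B(v_i)/c}$, and $e^{2\pi\imu k B(v_j)/c}$ from the sum (they are independent of the particular edge $e$) gives
\[ [\bP_G(xe^{2\pi\imu k/c})]_{ij} = e^{2\pi\imu kb/c} \, [\bD_k^{-1}]_{ii}\, [\bP_G(x)]_{ij}\, [\bD_k]_{jj}, \]
which is exactly the $(i,j)$ entry of $e^{2\pi\imu kb/c} \bD_k^{-1} \bP_G(x) \bD_k$.

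For the eigenvalue statement, note that $\bD_k$ is invertible, so $\bD_k^{-1} \bP_G(x) \bD_k$ is similar to $\bP_G(x)$ and therefore shares its spectrum $\{\lambda_1(x),\dots,\lambda_{|V|}(x)\}$. Scaling a matrix by the scalar $e^{2\pi\imu kb/c}$ multiplies each eigenvalue by that same scalar, yielding $\lambda_j(xe^{2\pi\imu k/c}) = e^{2\pi\imu kb/c}\lambda_j(x)$ (with the indexing understood up to the natural relabeling).

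The only subtle point, which I do not foresee as a serious obstacle, is justifying that the coboundary condition can be used even though $b \in \mathbb{Q}$ rather than $\mathbb{Z}$: here it is crucial that the congruence $\tau(e) \equiv b + B(v_j) - B(v_i) \pmod c$ is a congruence of real numbers modulo $c \in \mathbb{N}$, so that $(\tau(e) - b - B(v_j) + B(v_i))/c$ is an integer and the spurious phase $e^{2\pi\imu k m_e}$ vanishes for all integer $k$.
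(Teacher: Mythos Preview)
Your proposal is correct and follows essentially the same approach as the paper: invoke \Cref{lemma:cost:uniform:coboundary} to write $\tau(e)=b+B(v_j)-B(v_i)+z_ec$, compute each entry $[\bP_G(xe^{2\pi\imu k/c})]_{ij}$ directly, factor out the edge-independent phases to obtain the conjugation identity, and then read off the eigenvalue relation from similarity. The only addition you make is the remark about $b\in\mathbb{Q}$ and the integrality of $m_e$, which the paper leaves implicit.
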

\begin{proof}
	The graph $G$ satisfies the $c$-periodic coboundary condition by Lemma~\ref{lemma:cost:uniform:coboundary}. Hence, there exists a constant $b$ and functions $B:\V\mapsto \R$ such that for any two vertices $v_i$ and $v_j$, each edge $e$ from $v_i$ to $v_j$ has cost $\tau(e)$, which can be written 
	$$
	\tau(e)= b + B(v_j) - B(v_i) + z_ec
	$$
	for some integer $z_e \in \mathbb{Z}$. Setting $\phi_k \triangleq 2\pi k/c$, Definition~\ref{def:cost:enumerator:spectral:radius} implies that the entries of the cost-enumerator matrix are given by
	\begin{align*}
		\left[\ve{P}_G\left(x \mathrm{e}^{\imu\phi_k}\right)\right]_{ij} &= \sum_{e \in \E:~ \begin{subarray}{l}\init(e)=v_i,\\ \final(e)=v_j\end{subarray}} x^{\tau(e)} \mathrm{e}^{\imu\phi_k \tau(e)} \\
		&=[\ve{P}_G(x)]_{ij}\mathrm{e}^{\imu\phi_k(b+B(v_j)-B(v_i))}.
	\end{align*}
	Introducing the diagonal matrix $\ve{D}_k$ with entries $[\ve{D}_k]_{ii} = \mathrm{e}^{\imu \phi_k B(v_i)}$, we can decompose the cost-enumerator matrix to
	$$ \ve{P}_G\left(x \mathrm{e}^{\imu\phi_k}\right) = \mathrm{e}^{\imu \phi_k b} \ve{D}_k^{-1} \ve{P}_G(x) \ve{D}_k. $$
	The second part of the lemma directly follows from the similarity\footnote{Two square matrices $\ve{A}$ and $\ve{B}$ are \emph{similar} if there exists an invertible diagonal matrix $\ve{D}$ such that $\ve{A} = \ve{D}^{-1}\ve{B}\ve{D}$. Similar matrices have the same eigenvalues with the same multiplicities \cite[Cor. 1.3.4]{horn_matrix_2012}.} of the matrices $\ve{P}_G(x \mathrm{e}^{\imu\phi_k})$ and $\mathrm{e}^{\imu\phi_kb}\ve{P}_G(x)$ proven in the first part of the lemma.
\end{proof}
We now continue with another auxiliary result that will serve to prove the subsequent result on the spectral structure of $\ve{P}_G(x)$ on the complex circle. Note that the proof of this result is conceptually related to the proof in \cite[Prop. 3.8]{marcus_introduction_2001} that an irreducible graph is aperiodic (i.e., has period 1) if and only if it is primitive.
\begin{lemma} \label{lemma:cost:period:existence:paths}
	Let $G$ be a strongly connected and cost-diverse graph with  cost period $c$. Then there exist two equal length cycles at the same vertex whose cost difference is precisely $c$.
\end{lemma}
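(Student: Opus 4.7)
For each vertex $v \in V$, I will consider the set
\[ \mD_v = \{\tau(\bp_1) - \tau(\bp_2) : \bp_1, \bp_2 \text{ are cycles at } v \text{ with } |\bp_1| = |\bp_2|\} \subseteq \mathbb{Z}. \]
The plan is to show that $\mD_v$ is a subgroup of $\mathbb{Z}$, that it does not depend on $v$, and that its positive generator is exactly the cost-period $c$; then $c \in \mD_v$ gives the required pair of equal-length cycles.

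First I would verify the algebraic structure of $\mD_v$. Clearly $0 \in \mD_v$ (take $\bp_1 = \bp_2$), and swapping the roles of $\bp_1$ and $\bp_2$ shows $\mD_v = -\mD_v$. For closure under addition, given pairs $(\bp_1,\bp_2)$ and $(\bp_3,\bp_4)$ achieving differences $d, d' \in \mD_v$, the concatenations $\bp_1\bp_3$ and $\bp_2\bp_4$ are cycles at $v$ of equal length $|\bp_1| + |\bp_3|$ whose cost difference is $d + d'$. Hence $\mD_v$ is a subgroup of $\mathbb{Z}$, so $\mD_v = c_v \mathbb{Z}$ for some $c_v \in \mathbb{N}_{\geq 0}$.

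Next, using strong connectivity, I would show $c_v$ is independent of $v$. For vertices $v,w$ pick paths $\bp_{v\to w}$ and $\bp_{w\to v}$. Any pair of equal-length cycles at $w$ with difference $d$ can be conjugated by these paths to a pair of equal-length cycles at $v$ with the same difference $d$, yielding $\mD_w \subseteq \mD_v$; symmetry gives equality. Denote this common subgroup by $c^* \mathbb{Z}$. Since $G$ is cost-diverse, there exist two equal-length paths between some $v_i, v_j$ with distinct costs; closing them up with a return path from $v_j$ to $v_i$ produces an element of $\mD_{v_i} \setminus \{0\}$, so $c^*$ is a positive integer.

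Finally, I identify $c^*$ with the cost-period $c$. On one hand, by \Cref{def:cost:period} every element of $\mD_v$ is a multiple of $c$, giving $c \mid c^*$. On the other hand, I claim $c^*$ is itself a valid cost-period, so $c^* \mid c$ because $c$ is the \emph{largest} such integer. To see the claim, take any two paths $\bp, \bp'$ of equal length $m$ from a vertex $v_i$ to a vertex $v_j$ and any path $\bq$ from $v_j$ to $v_i$; then $\bp\bq$ and $\bp'\bq$ are equal-length cycles at $v_i$ whose difference $\tau(\bp) - \tau(\bp')$ lies in $\mD_{v_i} = c^*\mathbb{Z}$, so $\tau(\bp) \equiv \tau(\bp') \pmod{c^*}$. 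Combining divisibilities yields $c^* = c$, and since $c \in c\mathbb{Z} = \mD_v$, there indeed exist two equal-length cycles at $v$ whose cost difference is exactly $c$. The only delicate step is the divisibility argument together with the verification that the largest cost-period equals the positive generator of the subgroup of all equal-length cycle cost differences; the rest is bookkeeping via concatenation and conjugation.
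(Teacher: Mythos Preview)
Your argument is correct and takes a genuinely different, more conceptual route than the paper. The paper works explicitly with B\'ezout: it finds finitely many equal-length path pairs whose cost differences have $\gcd$ equal to $c$, then literally builds two long cycles by repeating the component cycles $\mu_i$ and $\mu_i'$ times (with $\mu_i-\mu_i'=z_i$ the B\'ezout coefficients) so that the resulting cost difference telescopes to $\sum_i z_i(\tau(\bp_i)-\tau(\bp_i'))=c$. You instead observe that the set $\mD_v$ of equal-length cycle cost differences is a subgroup of $\mathbb{Z}$, hence equals $c^*\mathbb{Z}$, and then identify $c^*$ with the largest cost-period via a two-sided divisibility. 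Your approach is cleaner and yields the stronger statement that $\mD_v=c\mathbb{Z}$ for every $v$; the paper's approach is fully constructive, producing the two cycles explicitly.

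One small wording issue: from ``$c^*$ is a cost-period and $c$ is the largest cost-period'' you immediately get $c^*\le c$, not $c^*\mid c$. That is all you need, since together with $c\mid c^*$ it forces $c^*=c$. (The divisibility $c^*\mid c$ is also true, because the set of cost-periods is closed under $\lcm$, but you neither prove nor need it.)
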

\begin{proof}
	For convenience, for a path $\ve{p}$ we write $\init(\ve{p})$ for the initial vertex of its first edge and $\final(\ve{p})$ for the terminal vertex of its last edge. Since $c$ is the cost period of $G$, Definition~\ref{def:cost:period} guarantees the existence of $\eta \in \N$ pairs of paths $\ve{p}_i$ and $\ve{p}'_i$ for $1\leq i\leq\eta$ such that: (1) $\ve{p}_i$ and $\ve{p}'_i$ share the same length $m_i$, (2) $\ve{p}_i$ and $\ve{p}'_i$ start in the same vertex $ v_i\triangleq\init(\ve{p}_i)=\init(\ve{p}'_i) $ and end in the same vertex $ u_i\triangleq\final(\ve{p}_i)=\final(\ve{p}'_i) $, and (3)  the greatest common divisor of their cost differences $\tau(\ve{p}_i)-\tau(\ve{p}_i')$ is $c$. Hence, by Bézout's identity, there exist (possibly negative) integers $z_i \in \mathbb{Z}$ such that
	$$ \sum_{i=1}^{\eta} (\tau(\ve{p}_i)-\tau(\ve{p}_i'))z_i = c $$
	For each $i$, choose an arbitrary path $\ve{p}_{u_i\to v_i}$ that connects $u_i$ and $v_i$ and construct two cycles $\ve{\Gamma}_i = (\ve{p}_i,\ve{p}_{u_i\to v_i})$ and $\ve{\Delta}_i = (\ve{p}_i',\ve{p}_{u_i\to v_i})$ that share the same return path $\ve{p}_{u_i\to v_i}$ from $u_i$ to $v_i$. Further choose arbitrary paths $\ve{q}_{i}$, $1\leq i\leq \eta$ connecting $v_i$ and $v_{i+1}$ and $v_\eta$ and $v_1$. Now, set $\mu_i = \max \{z_i,0\}$ and $\mu_i' = \mu_i-z_i$, denote by $\ve{\Gamma}_i^{\mu}$ for $\mu \in \N_0$ the $\mu-$fold repetition of the cycle $\ve{\Gamma}_i$, and construct two large cycles $\ve{\Gamma}$ and $\ve{\Delta}$ by 
	\begin{align*}
		\ve{\Gamma} &= ( \ve{\Gamma}_1^{\mu_1},\ve{\Delta}_1^{\mu_1'},\ve{q}_{1}, \ve{\Gamma}_2^{\mu_2},\ve{\Delta}_2^{\mu_2'},\ve{q}_{2}, \dots, \ve{q}_{\eta}), \\
		\ve{\Delta} &= ( \ve{\Gamma}_1^{\mu_1'},\ve{\Delta}_1^{\mu_1},\ve{q}_{1}, \ve{\Gamma}_2^{\mu_2'},\ve{\Delta}_2^{\mu_2},\ve{q}_{2}, \dots, \ve{q}_{\eta}).
	\end{align*}
	In other words, $\ve{\Gamma}$ starts at $v_1$, circles $\mu_1$ times along $\ve{\Gamma}_1$, then $\mu_1'$ times along $\ve{\Delta}_1$, then proceeds to move along $\ve{q}_1$ to $v_2$. There it circles $\mu_2$ times along $\ve{\Gamma}_2$ and $\mu_2'$ times along $\ve{\Delta}_2$, and so on, until it moves back from $v_\eta$ to $v_1$ along $\ve{q}_\eta$. The cycle $\ve{\Delta}$ is created similarly. For a visualization of the construction of the cycle $\ve{\Gamma}$, see Fig.~\ref{fig:cycle:paths}. Notice that $\mu_i$ and $\mu_i'$ are guaranteed to be non-negative by their definitions. Computing the cost difference of $\ve{\Gamma}$ and $\ve{\Delta}$, one obtains
	\begin{align*}
		\tau(\ve{\Gamma}) &- \tau(\ve{\Delta}) \\
		&= \sum_{i=1}^{\eta} (\mu_i\tau(\ve{p}_i) +\mu_i'\tau(\ve{p}_i')) 
		- \sum_{i=1}^{\eta} (\mu_i'\tau(\ve{p}_i) +\mu_i\tau(\ve{p}_i')) \\
		&= \sum_{i=1}^{\eta} (\tau(\ve{p}_i)-\tau(\ve{p}_i'))z_i = c.
	\end{align*}
	Hence, there exist two cycles at the vertex $v_1$ of the same length $m$ whose cost is precisely $c$. 
\end{proof}
\begin{figure*}
	\centering
	\begin{tikzpicture}
		\node[bullet, label=below:$v_1$] (v1) {};
		\node[bullet, right = 2.6cm of v1, label=below:$v_2$] (v2) {};
		\node[bullet, right = 2.6cm of v2, label=below:$v_3$] (v3) {};
		\node[bullet, right = 3cm of v3, label=below:$v_{\eta-1}$] (v4) {};
		\node[bullet, right = 2.6cm of v4, label=below:$v_\eta$] (v5) {};
		\node[bullet, right = 2.6cm of v5, label=below:$v_1$] (v6) {};
		\node[right = 1.15cm of v3] {$\dots$};
		\coordinate[right = .5cm of v3] (c1);
		\coordinate[left = .5cm of v4] (c2);
		
		\draw[fill=black,fill opacity=0.05,postaction={decorate,decoration={markings, mark=at position 0.25 with {\arrow[fill opacity=1]{>}}}}]
		($(v1)+(0,0.75)$) circle (0.75);
		\draw[fill=black,fill opacity=0.05,postaction={decorate,decoration={markings, mark=at position 0.25 with {\arrow[fill opacity=1]{>}}}}]
		($(v2)+(0,0.9)$) circle (0.9);
		\draw[fill=black,fill opacity=0.05,postaction={decorate,decoration={markings, mark=at position 0.25 with {\arrow[fill opacity=1]{>}}}}]
		($(v5)+(0,0.8)$) circle (0.8);
		\draw[fill=black,fill opacity=0.05,postaction={decorate,decoration={markings, mark=at position 0.75 with {\arrow[fill opacity=1]{<}}}}]
		($(v1)-(0,0.75)$) circle (0.75);
		\draw[fill=black,fill opacity=0.05,postaction={decorate,decoration={markings, mark=at position 0.75 with {\arrow[fill opacity=1]{<}}}}]
		($(v2)-(0,0.9)$) circle (0.9);
		\draw[fill=black,fill opacity=0.05,postaction={decorate,decoration={markings, mark=at position 0.75 with {\arrow[fill opacity=1]{<}}}}]
		($(v5)-(0,0.8)$) circle (0.8);
		
		\node[above = .4cm of v1] {$\ve{\Gamma}_1$};
		\node[above = .5cm of v2] {$\ve{\Gamma}_2$};
		\node[above = .4cm of v5] {$\ve{\Gamma}_\eta$};
		\node[below = .4cm of v1] {$\ve{\Delta}_1$};
		\node[below = .6cm of v2] {$\ve{\Delta}_2$};
		\node[below = .45cm of v5] {$\ve{\Delta}_\eta$};

		\node[above = 1.5cm of v1] {$\mu_1$};
		\node[above = 1.8cm of v2] {$\mu_2$};
		\node[above = 1.6cm of v5] {$\mu_\eta$};
		\node[below = 1.5cm of v1] {$\mu_1'$};
		\node[below = 1.8cm of v2] {$\mu_2'$};
		\node[below = 1.6cm of v5] {$\mu_\eta'$};
		
		\draw[decoration={markings, mark=at position 0.5 with {\arrow{>}}}, postaction={decorate}] (v1) -- node[midway, below] {$\ve{q}_{1}^{}$} (v2);
		\draw[decoration={markings, mark=at position 0.5 with {\arrow{>}}}, postaction={decorate}] (v2) -- node[midway, below] {$\ve{q}_{2}^{}$} (v3);
		\draw[decoration={markings, mark=at position 0.5 with {\arrow{>}}}, postaction={decorate}] (v4) -- node[midway, below] {$\ve{q}_{{\eta-1}}^{}$} (v5);
		\draw[decoration={markings, mark=at position 0.5 with {\arrow{>}}}, postaction={decorate}] (v5) -- node[midway, below] {$\ve{q}_{\eta}^{}$} (v6);

		\draw[-] (v3) -- (c1);
		\draw[-] (c2) -- (v4);
	\end{tikzpicture}
	\caption{Construction of the path $\ve{\Gamma}$ in the proof of Lemma~\ref{lemma:cost:period:existence:paths}.}
	\label{fig:cycle:paths}
\end{figure*}
Lemmas~\ref{lemma:cost-enumerator:complex:circle} and \ref{lemma:cost:period:existence:paths} can be combined to prove the following result on the structure of the spectral radius on the complex circle.
\begin{lemma} \label{lemma:spectral:radius:complex:circle}
	Let $G$ be a strongly connected and cost-diverse graph with cost period $c$. Then, for any $x \in \mathbb{R}^+$, there are precisely $c$ solutions $\phi_k = 2\pi k/c$ for $k\in \{0,1,\dots,c-1\}$ to the equation $\rho_G(x\mathrm{e}^{\imu\phi}) = \rho_G(x)$ in the interval $0\leq \phi<2\pi$. For all other $\phi$ the inequality $\rho_G(x\mathrm{e}^{\imu\phi}) < \rho_G(x)$ holds.
\end{lemma}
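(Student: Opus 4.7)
The plan is to treat the two directions separately: use \Cref{lemma:cost-enumerator:complex:circle} to exhibit the $c$ explicit solutions $\phi_k = 2\pi k/c$, then rule out any additional solutions by combining the equality case of Wielandt's theorem (the sharp form of Perron-Frobenius) with the cycle structure guaranteed by \Cref{lemma:cost:period:existence:paths}.

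For the explicit solutions, \Cref{lemma:cost-enumerator:complex:circle} asserts that $\bP_G(x\mathrm{e}^{2\pi\imu k/c}) = \mathrm{e}^{2\pi\imu kb/c}\bD_k^{-1}\bP_G(x)\bD_k$, so the eigenvalues of $\bP_G(x\mathrm{e}^{2\pi\imu k/c})$ are precisely $\{\mathrm{e}^{2\pi\imu kb/c}\lambda_j(x)\}_j$. Taking absolute values yields $\rho_G(x\mathrm{e}^{2\pi\imu k/c}) = \rho_G(x)$ for every $k \in \{0,1,\ldots,c-1\}$, producing $c$ distinct solutions in $[0,2\pi)$.

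For the converse, fix $x>0$ and suppose $\rho_G(x\mathrm{e}^{\imu\phi}) = \rho_G(x)$ for some $\phi \in [0,2\pi)$. Since $G$ is strongly connected, $\bP_G(x)$ is entrywise non-negative and irreducible, and the triangle inequality gives the dominance $|[\bP_G(x\mathrm{e}^{\imu\phi})]_{ij}| \leq [\bP_G(x)]_{ij}$. Wielandt's equality criterion then produces $\theta \in \mathbb{R}$ and a unitary diagonal matrix $\bD$ with $\bP_G(x\mathrm{e}^{\imu\phi}) = \mathrm{e}^{\imu\theta}\bD^{-1}\bP_G(x)\bD$. Reading this entrywise gives $\sum_{e:v_i\to v_j} x^{\tau(e)}\mathrm{e}^{\imu\phi\tau(e)} = \mathrm{e}^{\imu\theta}(\bD_{jj}/\bD_{ii})\sum_{e:v_i\to v_j} x^{\tau(e)}$, and since the right side has modulus equal to $\sum_{e} x^{\tau(e)}$ while the left side is a positive-weighted sum of unit complex numbers, equality in the triangle inequality forces all summands to share a common argument, i.e., $\mathrm{e}^{\imu\phi\tau(e)} = \mathrm{e}^{\imu\theta}\bD_{jj}/\bD_{ii}$ for every edge $e$ from $v_i$ to $v_j$.

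Chaining this identity along a cycle $\bp = (e_1, \ldots, e_m)$ at a common vertex causes the $\bD$-factors to telescope, yielding $\mathrm{e}^{\imu\phi\tau(\bp)} = \mathrm{e}^{\imu m\theta}$. Applying this to the two equal-length cycles with cost difference exactly $c$ produced by \Cref{lemma:cost:period:existence:paths} gives $\mathrm{e}^{\imu\phi c} = 1$, forcing $\phi \in \tfrac{2\pi}{c}\mathbb{Z}$. The only such values in $[0,2\pi)$ are the $\phi_k$, completing the argument; for all other $\phi$ the strict inequality $\rho_G(x\mathrm{e}^{\imu\phi}) < \rho_G(x)$ must hold. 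The main obstacle is the careful invocation of Wielandt's equality criterion and the extraction, via equality in the triangle inequality, of an identity for every parallel edge; once that is in place the cycle telescoping and the appeal to \Cref{lemma:cost:period:existence:paths} close the argument cleanly.
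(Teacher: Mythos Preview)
Your proof is correct and follows essentially the same approach as the paper: both directions rely on \Cref{lemma:cost-enumerator:complex:circle}, Wielandt's equality criterion, and \Cref{lemma:cost:period:existence:paths}. The only minor variation is that the paper powers the similarity $\bP_G(x\mathrm{e}^{\imu\phi}) = \mathrm{e}^{\imu\theta}\bD^{-1}\bP_G(x)\bD$ to order $m$ and applies the triangle-inequality equality case to the $(i,i)$ entry of $\bP_G^m$, whereas you apply it immediately at the edge level and then telescope along the length-$m$ cycles; the two are equivalent reorganizations of the same computation.
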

\begin{proof}
	By Lemma~\ref{lemma:cost-enumerator:complex:circle}, for all $k\in \{0,1,\dots,c-1\}$ and $1 \leq j \leq |\V|$ we have $ \lambda_j(x \mathrm{e}^{\imu\phi_k}) = \mathrm{e}^{\imu \phi_k b} \lambda_j(x),$ which implies that $\rho_G(x\mathrm{e}^{\imu\phi_k}) = \rho_G(x)$.
	
	We proceed with proving that for all other values of $\phi$ the spectral radius $\rho_G(x\mathrm{e}^{\imu\phi})$ is strictly less than $\rho_G(x)$. We start with the observation that, for any $\phi \in \R$, we have
	\begin{align*}
		|[\ve{P}_G(x\mathrm{e}^{\imu\phi})]_{ij}| &= \left| \sum_{e \in \E: \init(e)=v_i,\final(e)=v_j} (x\mathrm{e}^{\imu\phi})^{\tau(e)} \right| \\
		&\leq \sum_{e \in \E: \init(e)=v_i,\final(e)=v_j} x^{\tau(e)} = [\ve{P}_G(x)]_{ij}.
	\end{align*}
	By Wielandt's theorem~\cite[Sec.~8.3]{meyer_matrix_2000} (Theorem~\ref{thm:wielandt}),  it follows that the spectral radius satisfies $\rho_G(x\mathrm{e}^{\imu\phi}) \leq \rho_G(x)$, with equality if and only if there exist $\theta,\theta_1,\theta_2,\dots,\theta_{|\V|}$ such that
	$$ \ve{P}_G(x\mathrm{e}^{\imu\phi}) = \mathrm{e}^{\imu\theta} \ve{D}^{-1} \ve{P}_G(x) \ve{D}, $$
	where $\ve{D}$ is a diagonal matrix with entries $[\ve{D}]_{jj} = \mathrm{e}^{\imu \theta_j}$. It therefore suffices to prove that this equality can not be fulfilled for any $0\leq\phi<2\pi$ that is not equal to some $\phi_k$. Raising the above equation to the power $m \in \N$, it follows that
	$$ \ve{P}_G^m(x\mathrm{e}^{\imu\phi}) = \mathrm{e}^{\imu m \theta} \ve{D}^{-1} \ve{P}_G^m(x) \ve{D}. $$
	In particular, the entry $i,j$ of this equation reads as
	$$[\ve{P}_G^m(x\mathrm{e}^{\imu\phi})]_{ij} = \mathrm{e}^{\imu (m \theta+\theta_{j}-\theta_{i})} [\ve{P}_G^m(x)]_{ij}, $$
	and it follows that
	$$ |[\ve{P}_G^m(x\mathrm{e}^{\imu\phi})]_{ij}| = |[\ve{P}_G^m(x)]_{ij}| = [\ve{P}_G^m(x)]_{ij}. $$
	Denote now by $\mathcal{P}_{ij}(m) = \{ \ve{p} = (e_1,\dots,e_m) : \init(e_1) = v_i, \final(e_m) = v_j\}$ the set of paths of length $m$ from $v_i$ to $v_j$. It is well known \cite{khandekar_discrete_2000} that $[\ve{P}_G^m(x)]_{ij} = \sum_{\ve{p} \in \mathcal{P}_{ij}(m)} x^{\tau(\ve{p})}$. By Lemma~\ref{lemma:cost:period:existence:paths}, for a graph with cost period $c$ there exists a length $m$ and a vertex $v_i$ such that there are two cycles of length $m$ at $v_i$ whose cost differs by exactly $c$. Let $m$ and $v_i$ fulfill this property and denote by $\tau$ and $\tau+c$ the costs that are assumed by these two cycles.
	Thus, the polynomial $[\ve{P}_G^m(x)]_{ii}$ contains the sum of at least two monomials $x^{\tau}$ and $x^{\tau+c}$, each with integer-valued coefficients. Now, recall that the triangle inequality of a sum of complex numbers is tight if and only if the complex angles of all summands agree. Therefore, if $2\pi \phi c$ is not an integer multiple of $2\pi$ then $|[\ve{P}_G^m(x\mathrm{e}^{\imu\phi})]_{ii}| < [\ve{P}_G^m(x)]_{ii}$ and the claim follows.
\end{proof}
Conversely, for cost-uniform graphs, the eigenvalues of $\ve{P}_G(x)$ have a special structure that can be derived explicitly. 
\begin{lemma}  \label{lemma:coboundary:perron:root}
	Let $G = (\V,\E,\sigma,\tau)$ be a strongly connected graph that satisfies the coboundary condition. Then, for all $1 \leq j \leq |\V|$ and $x \in \mathbb{C}$, the eigenvalues $\lambda_1(x),\dots,\lambda_{|\V|}(x)$ of $\ve{P}_G(x)$ have the form
	$$ \lambda_j(x) = \lambda_j(1)x^b, $$
	where $b$ is the constant of the coboundary condition.
\end{lemma}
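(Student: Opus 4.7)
The proof will follow by directly mimicking the structure of the proof of \Cref{lemma:cost-enumerator:complex:circle}, but using the coboundary condition (without the modulo) rather than the $c$-periodic version. The key observation is that, in contrast to the periodic case where edges from $v_i$ to $v_j$ merely need to have congruent costs, the (non-modular) coboundary condition forces all parallel edges between $v_i$ and $v_j$ to have \emph{exactly} the same cost $b + B(v_j) - B(v_i)$.

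First I would write $[\bP_G(x)]_{ij} = m_{ij}\,x^{b + B(v_j) - B(v_i)}$, where $m_{ij}$ is the number of edges from $v_i$ to $v_j$; note that this is well defined because the coboundary condition forces every such edge to contribute the same monomial. Since $m_{ij} = [\bP_G(1)]_{ij}$, one may factor out $x^b$ and rewrite
\[
[\bP_G(x)]_{ij} \;=\; x^{b}\cdot x^{-B(v_i)} \cdot [\bP_G(1)]_{ij} \cdot x^{B(v_j)}.
\]
Introducing the diagonal matrix $\bD(x)$ with entries $[\bD(x)]_{ii} = x^{B(v_i)}$, this is equivalent to the matrix identity
\[
\bP_G(x) \;=\; x^{b}\, \bD(x)^{-1} \bP_G(1) \bD(x).
\]
This is an exact analogue of the factorization in \Cref{lemma:cost-enumerator:complex:circle}, but now valid globally rather than only along the roots of unity $\mathrm{e}^{2\pi\imu k/c}$.

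Once this similarity is established, the conclusion is immediate: $\bD(x)^{-1} \bP_G(1) \bD(x)$ is similar to $\bP_G(1)$ and hence has the same spectrum $\{\lambda_j(1)\}$, while scaling by $x^{b}$ multiplies every eigenvalue by $x^{b}$. Hence $\lambda_j(x) = \lambda_j(1)\,x^{b}$ for all $j$.

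The only subtlety, and the one point I would be careful about, is that $b \in \mathbb{Q}$ and $B(v_i)\in \mathbb{R}$ so the individual factors $x^b$ and $x^{B(v_i)}$ are multi-valued functions of complex $x$, even though each entry $[\bP_G(x)]_{ij}$ is an honest polynomial in $x$ (because $\tau(e) \in \N$). To handle this cleanly I would first fix any branch of $x \mapsto x^{B(v_i)}$ and $x \mapsto x^b$ on a simply connected neighborhood of a basepoint (say $x=1$): the identity of matrices then holds literally there, and since the characteristic polynomial of $\bP_G(x)$ is a polynomial in $x$, the assertion $\lambda_j(x) = \lambda_j(1)\,x^{b}$ extends to all $x \in \mathbb{C}$ by analytic continuation, provided one interprets the multi-valued $x^b$ on the right consistently. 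This branch-selection bookkeeping is the only mild obstacle; the algebraic core of the proof is a one-line factorization.
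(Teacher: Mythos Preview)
Your proposal is correct and follows essentially the same approach as the paper: both establish the factorization $\bP_G(x) = x^{b}\,\bD(x)^{-1}\bP_G(1)\bD(x)$ via the coboundary condition and then read off the eigenvalues. The paper concludes by computing the characteristic polynomial explicitly, whereas you invoke similarity directly; your added remark about branch selection for $x^b$ when $x\in\mathbb{C}$ is a point the paper leaves implicit.
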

\begin{proof}
	The graph $G$ satisfies the coboundary condition by assumption. Hence, there exists a constant $b$ and functions $B:\V\mapsto \R$ such that, for any two vertices $v_i$ and $v_j$, each edge from $v_i$ to $v_j$ has cost 
	$$
	\tau_{ij}= b + B(v_j) - B(v_i).
	$$
	Furthermore, the number of edges from $v_i$ to $v_j$ is precisely $[\ve{P}_G(1)]_{ij}$. It follows that each entry $[\ve{P}_G(x)]_{ij}$ of $\ve{P}_G(x)$ is equal to
	$$[\ve{P}_G(x)]_{ij} = [\ve{P}_G(1)]_{ij}x^{\tau_{ij}} = [\ve{P}_G(1)]_{ij}x^{b + B(v_j) - B(v_i)}$$
	and, introducing the diagonal matrix $\ve{D}(x)$ with entries $[\ve{D}(x)]_{ii} = x^{B(v_i)}$, we can decompose the cost-enumerator matrix as
	$$ \ve{P}_G(x) = x^b \ve{D}^{-1}(x) \ve{P}_G(1) \ve{D}(x). $$
	Thus, the characteristic polynomial $\phi(\lambda,x)$ of the cost-enumerator matrix $\ve{P}_G(x)$ becomes
	\begin{align*}
		&\det(\lambda \ve{I}- \ve{P}_G(x)) \\
		=& \det(\lambda \ve{I}-x^b \ve{D}^{-1}(x) \ve{P}_G(1) \ve{D}(x)) \\
		=& \det(\ve{D}) \det(\lambda \ve{I}-x^b \ve{D}^{-1}(x) \ve{P}_G(1) \ve{D}(x)) \det{\ve{D}^{-1}} \\
		\overset{(a)}{=}& \det(\lambda \ve{I}-x^b  \ve{P}_G(1)),
	\end{align*}
	where in $(a)$ we used the multiplicativity of the determinant.	Thus, $\phi(\lambda,x) = x^{b|\V|} \phi(x^{-b}\lambda,1)$. Since the eigenvalues of $\ve{P}_G(x)$ are precisely the roots of the characteristic polynomial, we can identify $\lambda_j(x)$ as the roots of $\phi(\lambda,x)$ and $\lambda_j(1)$ as the roots of $\phi(\lambda,1)$. By a variable substitution, it follows that $\lambda_j(x) = \lambda_j(1) x^b$ for all $1 \leq j \leq |\V|$ and $x \in \mathbb{C}$.
\end{proof}
Lemma~\ref{lemma:coboundary:perron:root} illustrates that the eigenvalues of cost-uniform graphs have the very special structure of being monomials in $x$. All eigenvalues share the same exponent $b$ from the coboundary condition and their coefficient is given by the corresponding eigenvalue of the matrix $\ve{P}_G(1)$. The following example illustrates Lemma~\ref{lemma:coboundary:perron:root}.
\begin{example}
	Consider the cost-uniform graph from Fig.~\ref{fig:example:complete:non:cost:diverse} on Page~\pageref{fig:example:complete:non:cost:diverse}. We can verify, by analyzing the cost of the edges which are self-loops, that the constant from the coboundary condition is given by $b=2$. Computing the eigenvalues, we obtain $\lambda_1(x) = 2x^2$ and $\lambda_2(x) = 0$, confirming the statement from Lemma~\ref{lemma:coboundary:perron:root}.
\end{example}
This puts us in the position to prove the converse to Lemma~\ref{lemma:spectral:radius:complex:circle}. That is, we can show that if a graph is cost-uniform, or equivalently satisfies the coboundary condition, then the spectral radius is invariant on the complex circle.
\begin{corollary} \label{cor:cost:uniformity:continuum:solutions}
	Let $G$ be a strongly connected graph that satisfies the coboundary condition. Then $\rho_G(x\mathrm{e}^{\imu\phi}) = \rho_G(x)$ for all $x \in \mathbb{C}$ and $0\leq \phi<2\pi$.
\end{corollary}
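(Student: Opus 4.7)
My plan is to obtain this corollary as an immediate consequence of Lemma \ref{lemma:coboundary:perron:root}, which does essentially all of the work and serves as the converse complement to Lemma \ref{lemma:spectral:radius:complex:circle}. Since $G$ satisfies the (non-modular) coboundary condition by assumption, that lemma applies and gives, for every eigenvalue index $j$ and every $x \in \mathbb{C}$, the closed form $\lambda_j(x) = \lambda_j(1)\, x^b$, where $b \in \mathbb{Q}$ is the coboundary constant. Summing $\tau(e) = b + B(\final(e)) - B(\init(e))$ around any cycle of length $m$ gives $mb = \sum_k \tau(e_k) \geq 0$, so $b \geq 0$ and $|x|^b$ is unambiguously defined as a non-negative real number for every $x \in \mathbb{C}$ (and vanishes as $x \to 0$).

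Taking absolute values in the factorization gives $|\lambda_j(x)| = |\lambda_j(1)| \cdot |x|^b$, and maximizing over $j$ yields the closed form
\[
\rho_G(x) \;=\; |x|^b \, \rho_G(1),
\]
valid for all $x\in\mathbb{C}$. In other words, once the coboundary condition holds, the spectral radius depends on $x$ only through its modulus. Replacing $x$ by $x\mathrm{e}^{\imu\phi}$ leaves $|x|$ unchanged, so $\rho_G(x\mathrm{e}^{\imu\phi}) = |x|^b \rho_G(1) = \rho_G(x)$, which is precisely the claim.

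There is no substantive obstacle in this argument: the spectral analysis is already packaged into Lemma \ref{lemma:coboundary:perron:root}, and the remaining step is just taking moduli. The only minor points worth checking are that $b$ is a non-negative rational (so $|x|^b$ makes sense as a positive real for $x \neq 0$), and that a uniform monomial description of each $\lambda_j$ passes directly to $\rho_G$ via its definition as $\max_j |\lambda_j|$. Both are immediate from the setup.
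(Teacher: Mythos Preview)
Your proof is correct and follows essentially the same route as the paper: both invoke Lemma~\ref{lemma:coboundary:perron:root} to get $\lambda_j(x)=\lambda_j(1)x^b$, take moduli to obtain $\rho_G(x)=\rho_G(1)\,|x|^b$, and conclude that $\rho_G$ depends only on $|x|$. Your additional remarks on $b\geq 0$ and the well-definedness of $|x|^b$ are sound but not strictly needed for the argument.
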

\begin{proof}
	The corollary directly follows from Lemma~\ref{lemma:coboundary:perron:root}, using the fact that the coboundary condition implies that for all $x\in \mathbb{C}$,
	$$ \rho_G(x\mathrm{e}^{\imu\phi}) = \rho_G(1) |x\mathrm{e}^{\imu\phi}|^b =  \rho_G(1) |x|^b = \rho_G(x). $$
\end{proof}
\subsection{Cost-Diversity and Strict Log-Log-Convexity}
We conclude this section with a discussion of the log-log-convexity of the spectral radius. This property will help in several places to prove Theorem~\ref{thm:variable-length:exact}. First, we show that for cost-uniform graphs the spectral radius is log-log-linear on the real axis.
\begin{corollary} \label{cor:log:log:linear}
	Let $G$ be a strongly connected graph. If $G$ is cost-uniform then $\rho_G(x)$ is log-log-linear on the interval $x\in \mathbb{R}^+$.
\end{corollary}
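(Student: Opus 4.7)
The plan is to invoke the chain of equivalences already established and reduce the claim to Lemma \ref{lemma:coboundary:perron:root}. Recalling the convention that two integers are congruent modulo $0$ if and only if they are equal, cost-uniformity is precisely cost-period $c=0$. Applying Lemma \ref{lemma:cost:uniform:coboundary} with $c=0$ then tells us that a cost-uniform strongly connected $G$ satisfies the (unmodulated) coboundary condition of Definition \ref{def:coboundary:condition}: there exist $b \in \mathbb{Q}$ and $B : V \to \mathbb{R}$ such that every edge $e$ from $v_i$ to $v_j$ has $\tau(e) = b + B(v_j) - B(v_i)$.

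With this in hand, the second step is to quote Lemma \ref{lemma:coboundary:perron:root}, which gives $\lambda_j(x) = \lambda_j(1)\, x^b$ for every eigenvalue and every $x \in \mathbb{C}$. Restricting to $x \in \mathbb{R}^+$, the factor $x^b$ is a positive real number, so
\[
\rho_G(x) = \max_{j} |\lambda_j(x)| = x^b \max_{j} |\lambda_j(1)| = \rho_G(1) \cdot x^b.
\]

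Finally, substituting $x = \mathrm{e}^s$ and taking logarithms yields $\log \rho_G(\mathrm{e}^s) = \log \rho_G(1) + b s$, which is an affine function of $s$ on all of $\mathbb{R}$. By definition this means $\rho_G$ is log-log-linear on $\mathbb{R}^+$, completing the proof.

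There is no real obstacle here: the content sits entirely in Lemmas \ref{lemma:cost:uniform:coboundary} and \ref{lemma:coboundary:perron:root}, and the only subtlety worth flagging in the write-up is the convention that congruence modulo $0$ is equality, which is what lets us apply the $c$-periodic coboundary equivalence in the cost-uniform case.
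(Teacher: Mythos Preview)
Your proof is correct and follows essentially the same approach as the paper: both reduce to Lemma~\ref{lemma:coboundary:perron:root} and read off $\log\rho_G(\mathrm{e}^s)=\log\rho_G(1)+bs$. You are simply more explicit than the paper's one-line proof about the intermediate step (cost-uniform $\Rightarrow$ coboundary condition via Lemma~\ref{lemma:cost:uniform:coboundary} with $c=0$), which the paper leaves implicit.
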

\begin{proof}
	By Lemma~\ref{lemma:coboundary:perron:root}, $ \ln \rho_G(\mathrm{e}^s) = \ln (\rho_G(1)) + bs$ for all real-valued $s \in \mathbb{R}$, which is a linear function in $s$.
\end{proof}
We now turn towards proving the converse to the previous corollary, showing that if the graph $G$ is cost-diverse then $\rho_G(x)$ is strictly log-log-convex. Notice that the (non-strict) log-log convexity of the Perron root is known from classical results on irreducible matrices \cite{cohen_convexity_1981,cohen_derivatives_1978,kingman_convexity_1961}.
\begin{lemma} \label{lemma:log:log:convex}
	Let $G$ be a strongly connected, cost-diverse graph. Then $\rho_G(x)$ is strictly log-log-convex for all $x \in \mathbb{R}^+$.
\end{lemma}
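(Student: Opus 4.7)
The plan is a proof by contradiction that combines the structural rigidity of analytic convex functions with the path-diversity guaranteed by \Cref{lemma:cost:period:existence:paths}.

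First, observe that $\rho_G(x)$ is real-analytic on $(0,\infty)$: for each $x>0$, the Perron--Frobenius theorem applied to the irreducible nonnegative matrix $\bP_G(x)$ makes $\rho_G(x)$ a simple eigenvalue, and analyticity in $x$ follows from the implicit function theorem applied to the characteristic polynomial. Moreover, $\log\rho_G(e^s)$ is convex in $s$: each entry $[\bP_G(e^s)]_{ij}=\sum_e e^{s\tau(e)}$ is a sum of real exponentials, hence log-convex, so Kingman's theorem applies; alternatively, the identity $\log\rho_G(e^s) = \lim_m \tfrac{1}{m}\log[\bP_G(e^s)^m]_{vv}$ realizes it as a pointwise limit of scaled log-moment generating functions, which are convex. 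If strict convexity fails at any point, then by convexity $\log\rho_G(e^s)$ is affine on some open interval, and by analyticity it is affine on all of $\R$, yielding $\rho_G(x)=Ax^b$ for some $A>0$ and $b\in\R$.

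Next, I invoke \Cref{lemma:cost:period:existence:paths}: cost-diversity with largest cost-period $c>0$ produces two cycles $\Gamma_1,\Gamma_2$ at a common vertex $v$, both of length $m_0$, with costs $\tau_1$ and $\tau_1+c$ respectively. For each $N\geq 1$, concatenating $N$ cycles chosen freely from $\{\Gamma_1,\Gamma_2\}$ yields $2^N$ distinct closed walks of length $m_0N$ at $v$, whose combined cost generating function is $(x^{\tau_1}+x^{\tau_1+c})^N$. Since all terms in $[\bP_G(x)^{m_0N}]_{vv}=\sum_{\bp\in\mathcal{P}_{vv}(m_0N)} x^{\tau(\bp)}$ are nonnegative for $x>0$, this gives the lower bound
$$[\bP_G(x)^{m_0 N}]_{vv} \;\geq\; \bigl(x^{\tau_1}+x^{\tau_1+c}\bigr)^N.$$
Because $\bP_G(x)^{m_0 N}$ is nonnegative, each diagonal entry is bounded above by its spectral radius $\rho_G(x)^{m_0 N}$. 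Taking $m_0 N$-th roots and letting $N\to\infty$ gives
$$\rho_G(x)^{m_0} \;\geq\; x^{\tau_1}+x^{\tau_1+c} \quad\text{for all } x>0.$$

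Finally, substituting the assumed monomial form $\rho_G(x)=Ax^b$ reduces this to $A^{m_0}\geq x^{\alpha}+x^{\alpha+c}$ with $\alpha:=\tau_1-m_0 b$. Since $c>0$ the right-hand side is unbounded on $(0,\infty)$: for any value of $\alpha$, at least one of the limits $x\to 0^+$ or $x\to\infty$ drives it to $+\infty$. This contradicts the finite bound $A^{m_0}$, completing the proof. The main obstacle in this argument is guaranteeing the existence of two co-terminal equal-length paths with distinct costs, which is exactly the content of \Cref{lemma:cost:period:existence:paths}; once that is in hand, the remainder is an elementary asymptotic comparison.
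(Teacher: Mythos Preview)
Your proof is correct and takes a genuinely different route from the paper's. The paper works entry-by-entry on $\bP_G^m(x)$: it shows via Cauchy--Schwarz that each nonzero entry $[\bP_G^m(e^s)]_{ij}$ is log-convex in $s$, and strictly so whenever two length-$m$ paths of different cost connect $v_i$ to $v_j$; it then invokes an external result (\Cref{lemma:irreducible:matrix:strict:log:convexity}, from~\cite{stanczak_resource_2006}) that transfers strict log-convexity of a single entry to the Perron root of an irreducible matrix. Your argument instead combines plain log-log-convexity (via Kingman's theorem) with real-analyticity of $\rho_G$ to argue by contradiction: failure of strict convexity forces the global monomial form $\rho_G(x)=Ax^b$, which you rule out by feeding the two equal-length cycles supplied by \Cref{lemma:cost:period:existence:paths} into the elementary bound $\rho_G(x)^{m_0}\geq x^{\tau_1}+x^{\tau_1+c}$. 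The paper's approach is more direct at the matrix level and only needs cost-diversity in its raw form, whereas yours trades the black-box strict-convexity transfer for analyticity plus the structural \Cref{lemma:cost:period:existence:paths} (which the paper proves anyway, and independently of the present lemma, so there is no circularity). One minor simplification: the limit $N\to\infty$ is unnecessary, since taking $N$-th roots of $\rho_G(x)^{m_0N}\geq (x^{\tau_1}+x^{\tau_1+c})^N$ already yields the bound for every $N\geq 1$.
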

Our proof of Lemma~\ref{lemma:log:log:convex} makes use of the following result.
\begin{lemma}[{{\cite[Thm. 1.37]{stanczak_resource_2006}}}] \label{lemma:irreducible:matrix:strict:log:convexity} 
	Let $\ve{P}(s)$ be an irreducible matrix whose nonzero entries are log-convex functions of $s \in \R$. Then the spectral radius $\rho(s)$ of $\ve{P}(s)$ is log-convex. If, additionally, at least one entry of $\ve{P}(s)$ is strictly log-convex, then $\rho(s)$ is strictly log-convex. 
\end{lemma}
\begin{proof}[Proof of Lemma~\ref{lemma:log:log:convex}]
	Consider the $m$-th power $\ve{P}_G^m(x)$ of $\ve{P}_G(x)$. We know from \cite{khandekar_discrete_2000} that, denoting $\mathcal{P}_{ij}(m)$ as the set of paths of length $m$ from $v_i$ to $v_j$, the entry $i,j$ of the matrix $\ve{P}_G^m(x)$ is given by $[\ve{P}_G^m(x)]_{ij} = \sum_{\ve{p} \in \mathcal{P}_{ij}(m)} x^{\tau(\ve{p})}$. We will show that this entry is strictly log-log-convex  if there exist two paths of length $m$ from $v_i$ to $v_j$ with different costs. Taking the second derivative of the log-log expression, we obtain
	\begin{align*}
		&\frac{\partial^2}{\partial s^2}\ln ([\ve{P}_G^m(\mathrm{e}^{s})]_{ij}) 
		\\
		&= \frac{\sum_{\ve{p}} \mathrm{e}^{s\tau(\ve{p})}\sum_{\ve{p}} \tau(\ve{p})^2 \mathrm{e}^{s\tau(\ve{p})} - \left(\sum_{\ve{p}}\tau(\ve{p}) \mathrm{e}^{s\tau(\ve{p})}\right)^2}{\left([\ve{P}_G^m(\mathrm{e}^s)]_{ij}\right)^2}.
	\end{align*}
	Identifying the vectors $\ve{u} = (\mathrm{e}^{s\tau(\ve{p})/2}:\ve{p}\in \mathcal{P}_{ij}(m))$ and $\ve{v} = (\tau(\ve{p})\mathrm{e}^{s\tau(\ve{p})/2}:\ve{p}\in \mathcal{P}_{ij}(m))$,  both of which have length $|\mathcal{P}_{ij}(m)|$, the numerator is equal to $(\ve{u}\cdot\ve{u})(\ve{v}\cdot \ve{v})-(\ve{u}\cdot\ve{v})^2$, where $\ve{u}\cdot\ve{v}$ denotes the inner product of the vectors $\ve{u}$ and $\ve{v}$. The numerator is therefore non-negative by the Cauchy-Schwarz inequality, see, e.g., \cite[Ch. 0.6.3]{horn_matrix_2012}, and thus the entries $[\ve{P}_G^m(x)]_{ij}$ are either $0$ or positive and log-convex. Furthermore, due to the cost-diversity of the graph $G$, there exist $m,i,$ and $j$ such that there exist two paths of length $m$ from $v_i$ to $v_j$ with different costs, and thus $\ve{u}$ and $\ve{v}$ are linearly independent. In this case, the Cauchy-Schwarz inequality holds with strict inequality and thus the numerator is positive, which implies that $[\ve{P}_G^m(\mathrm{e}^{s})]_{ij}$ is strictly convex in $s$.
	
	The spectral radius of $\ve{P}_G^m(\mathrm{e}^s)$ is given by $\rho_G^m(\mathrm{e}^s)$. With Lemma~\ref{lemma:irreducible:matrix:strict:log:convexity} it follows that $\rho_G^m(\mathrm{e}^s)$ is strictly log-convex. Since raising to a positive integer power does not change log-convexity, $\rho_G(\mathrm{e}^s)$ is also strictly log-convex, so $\rho_G(x)$ is strictly log-log-convex.
\end{proof}
%
%
%
%!TEX root = bare_jrnl.tex
\section{Multivariate Singularity Analysis} \label{sec:multivariate:proof}
The main step in proving Theorem~\ref{thm:fixed-length} is showing that the prerequisites of \cite[Thm. 5.1 and 9.1]{melczer_invitation_2021} are fulfilled, which requires exhibiting certain properties of the singularities of $F_{G,v}$. We start by deriving the generating function and reviewing the properties of the generating function required to understand \cite[Thm. 5.1 and 9.1]{melczer_invitation_2021}. Afterwards, we prove that these properties apply to the generating function of $N_{G,v}(t,n)$, the size of the limited-cost follower sets.
\subsection{Derivation of the Generating Function} \label{sec:derivation:gen}
The beginning point of the multivariate singularity analysis is the derivation of the generating function of the sequence $N_{G,v}(t,n)$. Together with the detailed analysis of the singularities of the generating function in the proceeding sections, this will allow us to use the powerful machinery of analytical combinatorics in several variables. The following recursion is the key observation we need to derive the generating function of the series $N_{G,v}(t,n)$.

\begin{lemma} \label{lemma:recursion}
	Let $G = (\V,\E,\tau,\sigma)$ be a strongly connected, deterministic graph. Then the size of the follower set of any vertex $v \in \V$ obeys the recursion
	$$ N_{G,v}(t,n) = \sum_{e\in \E: \init(e)= v } N_{G,\final(e)}(t-\tau(e),n-1), $$
	for all $n>0$ and $t\geq 0$, where $N_{G,v}(t,0) = 1$ for all $t\geq 0$ and $N_{G,v}(t,n) = 0$ whenever $t<0$ or $n<0$.
\end{lemma}
\begin{proof}
	Denote by $\mathcal{P}_{G,v}(t,n)$ the set of all length-$n$ paths in $G$ that start from vertex $v$ and have cost at most $t$. By the deterministic property of the graph, $N_{G,v}(t,n) = |\mathcal{P}_{G,v}(t,n)|$. Partition the paths $\mathcal{P}_{G,v}(t,n)$ according to the first traversed edge $e \in \E$ into the distinct parts $\mathcal{P}_{G,v,e}(t,n) = \{ \ve{p} \in \mathcal{P}_{G,v}(t,n): \ve{p} = (e,e_2,\dots) \}$ for any  
	$e \in \E$ that emanates from $v$, i.e., $\init(e)=v $. To start with, $\mathcal{P}_{G,v}(t,n) = \bigcup_{e\in \E } \mathcal{P}_{G,v,e}(t,n)$ and the parts are distinct by definition. Now, any path $\ve{p} \in \mathcal{P}_{G,v,e}(t,n)$ starts by traversing $e$, which costs $\tau(e)$ and results in the vertex $\final(e)$. Therefore, each path $\ve{p} \in \mathcal{P}_{G,v,e}(t,n)$ can be assembled by prepending $e$ to some path of cost at most $T-\tau(e)$ and length $n-1$ that starts from $\final(e)$, i.e., $\mathcal{P}_{G,v,e}(t,n) = \left\{ \ve{p} = (e,\ve{p}') : \ve{p}' \in \mathcal{P}_{G,\final(e)}(t{-}\tau(e),n{-}1) \right\}.$ 

	Thus $|\mathcal{P}_{G,v,e}(t,n)| = N_{G,\final(e)}(t-\tau(e),n-1)$, which proves the recursive statement of the lemma. The initial condition $N_{G,v}(t,0) = 1$ for all $t\geq 0$ comes from the fact that we include the length $0$ string in our computations.
\end{proof}
This recursion allows us to derive the exact generating function of the integer sequences $N_{G,v}(t,n)$. Furthermore, we can extract the asymptotic behavior of integer sequences by means of powerful methods in complex analysis \cite{pemantle_twenty_2008,melczer_invitation_2021}. Note that here we restrict our attention to the sequence $N_{G,v}(t,n)$, which directly implies the generating function for $N_{G,v}(t)$ because we have $N_{G,v}(t) = \sum_{n \geq 0}N_{G,v}(t,n)$. We proceed with the proof of Lemma~\ref{lemma:generating:function}.

\begin{proof}[Proof of Lemma~\ref{lemma:generating:function}]
	Starting from the recursive expression of $N_{G,v}(t,n)$, we first incorporate the beginning of the recursion and obtain
	$$ N_{G,v}(t,n) = \hspace{-1em}\sum_{e\in \E: \init(e)= v } \hspace{-1.2em}N_{G,\final(e)}(t-\tau(e),n-1) + U(t,n), $$
	where $U(t,n) = 1$ if $n=0$ and $t\geq 0$, and $U(t,n) =0$ otherwise. Multiplying by $x^ty^n$ on both sides and summing over $n$ and $t$ yields
	\begin{align*}
		F_{G,v}(x,y) 
		&=  \sum_{e\in \E: \init(e)= v } \hspace{-1.2em}x^{\tau(e)}y F_{G,\final(e)}(x,y)  + \frac{1}{1-x},
	\end{align*}
	where we used that $N_{G,v}(t,n) = 0$ for any $t<0$ or $n<0$ and the fact that  $\sum_{t\geq0}x^t = 1/(1-x)$.
	Combining the generating functions of all vertices into one vector $\ve{F}_G(x,y)$, we obtain
	$$ \ve{F}_G(x,y) = y\ve{P}_G(x)\ve{F}_G(x,y) + \frac{1}{1-x}\ve{1}^\mathrm{T},$$
	and rearranging the above equality yields the claimed recursion.
\end{proof}
\subsection{Analytic Combinatorics in Several Variables} \label{sec:acsv}
We briefly review the ingredients required to invoke ACSV results \cite{melczer_invitation_2021}. For reasons of clarity we present definitions for the bivariate case, where we wish to compute the asymptotic behavior of $N(\alpha_1 t, \alpha_2 t)$ as $t \to \infty$. Notice that in our setup $\alpha_1=1$ and $\alpha_2=\alpha$. Furthermore, we assume that the generating function has the form $F(x,y) = Q(x,y)/H(x,y)$ for two polynomials $Q(x,y)$ and $H(x,y)$. We start with the notion of singularities of a generating function.

\begin{defn}[{\cite[Def. 3.5]{melczer_invitation_2021}}] \label{def:singularity}
	A point $(x_0,y_0) \in \mathbb{C}^2$ is called a \emph{singularity} of $F(x,y)$, if $F(x,y)$ is unbounded in any neighborhood around $(x_0,y_0)$.
\end{defn}
Similar to the univariate case, a sufficient condition for a point to be a singularity is that $H(x_0,y_0) = 0$ and $Q(x_0,y_0) \neq 0$, and when $Q$ and $H$ are coprime then the singularities of $F$ are precisely the zeroes of $Q$. An important property of singularities is the following.
\begin{defn}[{\cite[Def. 3.9]{melczer_invitation_2021}}] \label{def:minimal:singularity}
	A point $(x_0,y_0) \in \mathbb{C}^2$ is called a \emph{minimal singularity} of $F(x,y) = Q(x,y)/H(x,y)$  if it is a singularity of $F(x,y)$ and there exists no other singularity $(x',y')\in \mathbb{C}^2$ with $|x'|<|x|$ and $|y'|<|y|$.
\end{defn}
A minimal singularity is called \emph{finitely minimal} \cite[Def. 5.6]{melczer_invitation_2021}, if there exist only a finite number of singularities with the same coordinate-wise modulus. In contrast to the case of univariate generating functions, not all minimal singularities contribute to the asymptotic behavior of the sequence under consideration. The following notion of critical points helps to determine those singularities which are important for the asymptotic expansion.
\begin{defn}[{\cite[Def. 5.4]{melczer_invitation_2021}}] \label{def:critical:smooth}
	When $H(x,y)$ is square-free (has no repeated irreducible factors) then a point $(x_0,y_0) \in \mathbb{C}^2$ is a \emph{smooth critical point} of $F(x,y) = Q(x,y)/H(x,y)$ with respect to the direction $(\alpha_1,\alpha_2)$  if at least one of the partial derivatives $H_x(x_0,y_0)$ and $H_y(x_0,y_0)$ is nonzero and
	\begin{align*}
		H(x_0,y_0) = \alpha_2 x H_x(x_0,y_0)- \alpha_1 y H_y(x_0,y_0)=0,
	\end{align*}
	and is a \emph{non-smooth critical point} if 
	\[ H(x_0,y_0)=H_x(x_0,y_0)=H_y(x_0,y_0)=0. \]
	When $H(x,y)$ is not square-free then critical points are defined by replacing $H$ with its square-free part (the product of its distinct irreducible factors) in these equations.
\end{defn}
We need one final definition before describing the asymptotic results of ACSV.
\begin{defn}[{\cite[Def. 5.7, Prop. 5.2]{melczer_invitation_2021}}] \label{def:nondegenerate}
Let $(x_0,y_0) \in \mathbb{C}^2$ be a smooth critical point with respect to the direction $(\alpha_1,\alpha_2)$. Assume without loss of generality that $H_y(x_0,y_0) \neq 0$, and let $g(x)$ be the analytic function characterizing the singularities $(x,g(x))$ in a neighborhood around $(x_0,y_0)$. The point $(x_0,y_0)$ is called a \emph{nondegenerate critical point} if $\mathcal{H}_{x_0,y_0}$,  the value of the Hessian matrix (i.e., second derivative) $\mathcal{H}$ of
	$$ \phi(\theta) = \ln \left( \frac{g(x_0 \mathrm{e}^{\imu\theta})}{g(x_0)} \right) + \frac{\imu\theta \alpha_1}{\alpha_2}  $$
at $\theta=0$, is nonzero.
\end{defn}
\begin{remark}[{\cite[Lemma 5.5]{melczer_invitation_2021}}]\label{rem:nondegenerate:explicit}
An explicit characterization of nondegeneracy in terms of $H(x,y)$ can be obtained as follows. 
	%Let $(x_0,y_0) \in \mathbb{C}^2$ be a smooth critical point. Assume without loss of generality that $H_y(x_0,y_0) \neq 0$. 
	 We say that $(x_0,y_0)$ is \emph{nondegenerate with respect to the direction $(\alpha_1,\alpha_2)$}
when the quantity
\begin{align*}
\mathcal{H}_{x_0,y_0} &= \frac{\alpha_1(\alpha_1+\alpha_2)}{\alpha_2^2} + \frac{x_0^2H_{xx}(x_0,y_0)}{y_0 H_y(x_0,y_0)} 
 \nonumber\\& 
- 2\frac{\alpha_1 x_0 H_{xy}(x_0,y_0)}{\alpha_2H_y(x_0,y_0)} + \frac{\alpha_1^2 y_0 H_{yy}(x_0,y_0)}{\alpha_2^2H_y(x_0,y_0)} \end{align*}

\noindent
exists and is nonzero, where subscripted variables refer to partial derivatives.
\end{remark}
	
First, we present a theorem for `smooth' asymptotics, which applies when asymptotics are determined by a smooth critical singularity. We will apply this result in the regime when $\alpha_{G}^{\mathsf{lo}}<\alpha<\alpha_{G}^{\mathsf{up}}$.
\begin{thm}[{\cite[Thm. 5.1]{melczer_invitation_2021}}]
	\label{prop:ACSVsmooth}
	Let $\alpha_1,\alpha_2>0$ and let $Q(x,y)$ and $H(x,y)$ be coprime polynomials such that the generating function $F(x,y)=Q(x,y)/H(x,y)$ admits a power series expansion \mbox{$F(x,y) = \sum_{t,n \geq 0}N(t,n)x^ty^n$}. Suppose that the system of polynomial equations
	\begin{equation} H(x,y) = \alpha_2xH_x(x,y) - \alpha_1yH_y(x,y) = 0 \label{eq:CPeq2}\end{equation}
	admits a finite number of solutions, exactly one of which $(x_0,y_0)\in\mathbb{C}^2$ is minimal. Suppose further that $(x_0,y_0)$ has nonzero coordinates, $H_y(x_0,y_0)\neq0$, and $(x_0,y_0)$ is nondegenerate with respect to the direction $(\alpha_1, \alpha_2)$. Then, as $t\rightarrow\infty$,
	\begin{align} N(t\alpha_1,t\alpha_2) =&~ x_0^{-t\alpha_1}y_0^{-t\alpha_2} t^{-1/2} \frac{1}{\sqrt{2\pi\alpha_2 \mathcal{H}_{x_0,y_0}}} \nonumber\\&\cdot 
	\left( \frac{-Q(x_0,y_0)}{y_0 H_y(x_0,y_0)} + O\left(\frac{1}{t}\right)\right) \label{eq:easySmoothASM}
	\end{align}
	when $t(\alpha_1,\alpha_2)\in\N^2$.
\end{thm}
Theorem \ref{prop:ACSVsmooth} has been extended to the case  when the critical point equations~\eqref{eq:CPeq2} admit a finite set of minimal singularities, which all have the same coordinate-wise modulus. Provided that all such points fulfill the conditions of Theorem \ref{prop:ACSVsmooth}, an asymptotic expansion of $N(t\alpha_1,t\alpha_2)$ is obtained by summing the right-hand side of~\eqref{eq:easySmoothASM} over all such singularities \cite[Cor. 5.2]{melczer_invitation_2021}. In order to compute the asymptotic expansion in the smooth case, we thus need to verify the following properties. First, we need to characterize the minimal points that satisfy~\eqref{eq:CPeq2} and show that $H_y$ does not vanish at these points. Second, the points need to be nondegenerate and the numerator should be nonzero to guarantee a dominant asymptotic term.

The other case of interest is the \emph{multiple-point} case where two smooth branches of the singular set collide. In this case, the asymptotic behavior is obtained using the following theorem.

\begin{thm}[{\cite[Prop.~9.1 and Thm.~9.1]{melczer_invitation_2021}}]
	\label{prop:ACSVnonsmooth}
	Let $\alpha_1,\alpha_2>0$ and let $Q(x,y)$ and $H(x,y)$ be coprime polynomials such that $F(x,y)=Q(x,y)/H(x,y)$ admits a power series expansion $F(x,y) = \sum_{t,n \geq 0}N(t,n)x^ty^n$. Suppose that $(x_0,y_0)$ is a strictly minimal point, and near $(x_0,y_0)$ the zero set of $H(x,y)$ is locally the union of the sets defined by the vanishing of polynomials $R(x,y)$ and $S(x,y)$ such that $R(x_0,y_0)=S(x_0,y_0)=0$ and the gradients of $R(x,y)$ and $S(x,y)$  are linearly independent at $(x_0,y_0)$ (in particular, both gradients must be nonzero so each of the zero sets are locally smooth near $(x_0,y_0)$). If there exist $\nu_1,\nu_2>0$ such that
	\begin{equation*} 
		(\alpha_1,\alpha_2) = \nu_1 \left(1,\frac{y_0 R_y(x_0,y_0)}{x_0 R_x(x_0,y_0)}\right) + \nu_2 \left(1,\frac{y_0 S_y(x_0,y_0)}{x_0 S_x(x_0,y_0)} \right)
		%\label{eq:transvCP}
	\end{equation*}
	and the matrix
	\[ \mathbf{H} = \begin{pmatrix} x_0R_x(x_0,y_0) & y_0R_y(x_0,y_0) \\ x_0S_x(x_0,y_0) & y_0S_y(x_0,y_0)  \end{pmatrix} \]
	is nonsingular then, as $t\rightarrow\infty$,
	\[
	N(t\alpha_1,t\alpha_2) = x_0^{-t\alpha_1}y_0^{-t\alpha_2} \frac{Q(x_0,y_0)}{|\det \mathbf{H}|} + O\left(\delta^t\right)
	\]
	for some $0 < \delta < x_0^{-\alpha_1}y_0^{-\alpha_2}$.
\end{thm}
We will apply Theorem~\ref{prop:ACSVnonsmooth} when $0<\alpha<\alpha_{G}^{\mathsf{lo}}$. As in the smooth case, if there exist a finite number of singularities with the same coordinate-wise modulus as $(x_0,y_0)$ that all satisfy the conditions of Theorem \ref{prop:ACSVnonsmooth}, then we get an asymptotic expansion by summing the asymptotic contributions of each. 
\subsection{Singularity and Critical Point Analysis} \label{sec:singularity:critical:point}
The main challenge in proving Theorem~\ref{thm:fixed-length} is showing that the prerequisites of Theorems~\ref{prop:ACSVsmooth} and~\ref{prop:ACSVnonsmooth} are fulfilled. We establish the necessary conditions through a careful study of the singularities of our generating functions
\[ \ve{F}_G(x,y) = \frac{1}{1-x} \cdot (\ve{I} - y \ve{P}_G(x))^{-1}\ve{1}^\mathrm{T}. \]
We can write $\ve{F}_G(x,y) = {\ve{Q}_G(x,y)}/{H_G(x,y)}$ for a polynomial vector \mbox{$\ve{Q}_G(x,y) = \adj(\ve{I}-y\ve{P}_G(x)) \ve{1}^\mathrm{T}$} and polynomial $H_G(x,y) = (1-x) \det(\ve{I}-y\ve{P}_G(x))$. In particular, all entries of $\ve{F}_G(x,y)$ share the same denominator, which allows us to analyze crucial properties such as minimality and criticality for singularities just once instead of for each entry. According to Definition~\ref{def:capacity:costly}, we always work with respect to the diagonal $(\alpha_1,\alpha_2) = (1,\alpha)$, and this direction is assumed when discussing notions like critical points and nondegeneracy.

The first step in our multivariate singularity analysis is to identify those singularities which are minimal, i.e., for which there exists no other singularity that has a smaller magnitude in all coordinates (see Definition~\ref{def:minimal:singularity}).
\begin{lemma} \label{lemma:minimal:singularities}
	Let $G$ be a strongly connected and cost-diverse graph with  period $d$ and cost period $c$. The points 
	\begin{align*}
		&\left\{ (x_0,1/\rho_G(x_0)) : 0<x_0<1\right\}  \\
		&\qquad\cup\; \left\{(1,y_0) : y_0 \in \mathbb{C}, |y_0| \leq 1/\rho_G(1)\right\}
	\end{align*}
	are minimal singularities of each coordinate of $\ve{F}_G(x,y)$. All other minimal singularities are
	$$\left(x_0 \mathrm{e}^{\imu 2\pi k/c}, \mathrm{e}^{-2\pi\imu ( kb/c + j/d)}/ \rho_G(x_0) \right)$$
	for some $0<x_0\leq1$, $k\in \{0,1,\dots,c-1\}$, and $j\in \{0,1,\dots,d-1\}$, where $b$ is the constant of the $c$-periodic coboundary condition.
\end{lemma}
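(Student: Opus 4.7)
The plan is to first note that the singular set of $\bF_G(x,y)$ is contained in the union of the line $\{x=1\}$ and the eigenvalue variety $\{(x,y):1/y\text{ is an eigenvalue of }\bP_G(x)\}$, and then to classify minimal points using two ingredients: the uniform bound $|\lambda|\le\rho_G(|x|)$ for every eigenvalue $\lambda$ of $\bP_G(x)$ (obtained via Wielandt's theorem and the triangle inequality, exactly as in the proof of \Cref{lemma:spectral:radius:complex:circle}), together with the strict monotonicity of $\rho_G$ on $\R_{>0}$, which follows from irreducibility of $\bP_G(r)$ (by strong connectivity) and the strict monotonicity of each nonzero entry $[\bP_G(r)]_{ij}=\sum_e r^{\tau(e)}$ in $r>0$.

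Verification that the two explicit families are minimal is then routine. For $(x_0,1/\rho_G(x_0))$ with $0<x_0<1$, any singularity strictly inside the polydisc $\{|x|<x_0,\ |y|<1/\rho_G(x_0)\}$ cannot lie on $\{x=1\}$; hence $1/y$ is an eigenvalue of $\bP_G(x)$, forcing $|y|\ge 1/\rho_G(|x|)>1/\rho_G(x_0)$, a contradiction. The argument for $(1,y_0)$ with $|y_0|\le 1/\rho_G(1)$ is identical: any singularity strictly inside the polydisc $\{|x|<1,\ |y|<|y_0|\}$ cannot be on $\{x=1\}$, so $|y|\ge 1/\rho_G(|x|)>1/\rho_G(1)\ge|y_0|$.

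For the classification of any other minimal singularity $(x^*,y^*)$, set $\lambda^*\triangleq 1/y^*$. If $x^*=1$ then the limiting argument using singularities $(r',1/\rho_G(r'))$ with $r'\nearrow 1$ forces $|y^*|\le 1/\rho_G(1)$, placing the point into the second family. Otherwise $\lambda^*$ is an eigenvalue of $\bP_G(x^*)$: first $|x^*|\le 1$, since otherwise $(1,0)$ would lie strictly inside the polydisc. Setting $r=|x^*|$ and letting $r'\nearrow r$ along the real-axis singularities $(r',1/\rho_G(r'))$ forces $|\lambda^*|\ge\rho_G(r)$, which combined with $|\lambda^*|\le\rho_G(x^*)\le\rho_G(r)$ yields $\rho_G(x^*)=\rho_G(|x^*|)$. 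By \Cref{lemma:spectral:radius:complex:circle} this pins $x^*=x_0\mathrm{e}^{2\pi\imu k/c}$ with $x_0=r$ and $0\le k<c$. By \Cref{lemma:cost-enumerator:complex:circle}, the eigenvalues of $\bP_G(x^*)$ are $\mathrm{e}^{2\pi\imu kb/c}$ times those of $\bP_G(x_0)$; since $\bP_G(x_0)$ is irreducible non-negative with the same zero/non-zero support as the adjacency matrix of $G$ (hence has period $d$), Perron--Frobenius identifies its eigenvalues of maximum modulus as $\rho_G(x_0)\mathrm{e}^{2\pi\imu j/d}$ for $0\le j<d$. Thus $\lambda^*=\mathrm{e}^{2\pi\imu(kb/c+j/d)}\rho_G(x_0)$ and $y^*=\mathrm{e}^{-2\pi\imu(kb/c+j/d)}/\rho_G(x_0)$.

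The main obstacle is this final step: pinning down the argument of $x^*$ to the quantized values $2\pi k/c$ and the eigenvalue phase to the discrete Perron--Frobenius set $\{\rho_G(x_0)\mathrm{e}^{2\pi\imu j/d}\}$. Both reductions rely crucially on cost-diversity via \Cref{lemma:spectral:radius:complex:circle} and on the equality of the period of the non-negative matrix $\bP_G(r)$ with the graph period $d$ for every $r>0$ --- the latter being standard (the nonzero support of $\bP_G(r)$ matches the adjacency matrix of $G$) but needing to be invoked cleanly. Everything else is bookkeeping with the rotation-equivariance lemmas already established.
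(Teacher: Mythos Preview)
Your approach is essentially the same as the paper's: both identify the singular set as contained in $\{x=1\}\cup\{(x,1/\lambda_j(x))\}$, use the Wielandt bound $\rho_G(x)\le\rho_G(|x|)$ together with the strict monotonicity of $\rho_G$ on $\R_{>0}$ to verify minimality, and then invoke \Cref{lemma:spectral:radius:complex:circle}, \Cref{lemma:cost-enumerator:complex:circle}, and the periodic Perron--Frobenius theorem to classify the remaining minimal points. Your direct polydisc argument for minimality is equivalent to the paper's appeal to \cite[Prop.~5.4]{melczer_invitation_2021}.

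There is, however, one genuine omission. You never check that the points in the two explicit families are actually singularities of \emph{each coordinate} of $\bF_G$, as opposed to merely zeros of the denominator $H_G(x,y)=(1-x)\det(\bI-y\bP_G(x))$. The lemma asserts this explicitly, and it requires showing that the numerator $\bQ_G(x,y)=\adj(\bI-y\bP_G(x))\mathbf{1}^{\mathrm T}$ does not vanish there. The paper handles this: at $(x_0,1/\rho_G(x_0))$ one rewrites
\[
\bQ_G(x_0,1/\rho_G(x_0))=\rho_G(x_0)^{1-|V|}\,\adj\bigl(\rho_G(x_0)\bI-\bP_G(x_0)\bigr)\mathbf{1}^{\mathrm T}
\]
and invokes \Cref{lemma:adj:outer:product}, which says $\adj(\rho\bI-\bP)$ is either all-positive or all-negative for an irreducible non-negative $\bP$ with Perron root $\rho$; hence every entry of the product with $\mathbf{1}^{\mathrm T}$ is nonzero. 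For the points $(1,y_0)$ with $|y_0|<1/\rho_G(1)$, invertibility of $\bI-y_0\bP_G(1)$ (since $1/y_0$ is not an eigenvalue) ensures the vector $(\bI-y_0\bP_G(1))^{-1}\mathbf{1}^{\mathrm T}$ is nonzero, so the singularity from the factor $1-x$ is genuine. Without this step you have only shown minimality within the zero set of $H_G$, not that the points are honest singularities of each coordinate of $\bF_G$.
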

\begin{proof}
	The singularities of the coordinates of $\ve{F}_G(x,y)$ are a subset of the solutions to the equation \mbox{$(1-x) \det(\ve{I}-y\ve{P}_G(x))=0$}, and any root of the denominator where the numerator does not vanish is a singularity. Using that $\det(\ve{I}-y\ve{P}_G(x)) = \prod_{j} (1-y\lambda_j(x))$, where $\lambda_j(x)$ are the eigenvalues of $\ve{P}_G(x)$, the singularities of $\ve{F}_G$ are thus a subset of the variety 
	\begin{align*}
	 \mathcal{X} =  &\{(x,1/\lambda_j(x)): x\in\mathbb{C}, 1\leq j\leq |\V| \} %\\&
	 \cup \{ (1,y) : y \in \mathbb{C} \}. 
	 \end{align*}
	We start by investigating the first set of singularities. Right away, we see that for all $x \in \mathbb{C}$ with $|x|>1$ the singularities $(x,1/\lambda_j(x))$ cannot be minimal, since there exists $y\in\mathbb{C}$ such that $(1,y)$ has a coordinate-wise smaller modulus than $(x,1/\lambda_j(x))$. We thus focus on those singularities with $0<|x|\leq1$. Due to the fact that the graph $G$ is strongly connected, it follows that $\ve{P}_G(x_0)$ is irreducible for all $x_0 \in \mathbb{R}^+$ and thus, by the Perron-Frobenius Theorem, has a single real eigenvalue $\rho_G(x_0)$ of maximum modulus. In the following we identify the Perron-Frobenius eigenvalue as the first eigenvalue $\rho_G(x_0) = \lambda_0(x_0)$. 
	
	We now show that for all $0<x_0\leq1$ the points $(x_0,1/\rho_G(x_0))$ are minimal singularities. To begin, the numerator of $\ve{F}_G$ at this point can be expressed as
	\begin{align*}
		\ve{Q}_G(x_0,1/\rho_G(x_0))& = \adj\left(\ve{I}-\frac{\ve{P}_G(x_0)}{\rho_G(x_0)}\right) \mathbf{1}^\mathrm{T} \\
		&\hspace{-3em}= \rho_G(x_0)^{1-|\V|} \adj(\rho_G(x_0)\ve{I}-\ve{P}_G(x_0))\mathbf{1}^\mathrm{T},
	\end{align*}
	so an application of Lemma~\ref{lemma:adj:outer:product} shows that the numerator is nonzero, as $\adj(\rho_G(x_0)\ve{I}-\ve{P}_G(x_0))$ is either all-positive or all-negative.
	In particular, these points are singularities of each coordinate and it remains to show minimality. We prove minimality using Proposition 5.4 of~\cite{melczer_invitation_2021}, which states that a singularity $(x_0,1/\rho_G(x_0))$ with positive coordinates is minimal if and only if $H_G(tx_0,t/\rho_G(x_0))$ is nonzero for all $0<t<1$. The term $(1-tx_0)$ does not vanish for $0<t<1$, so if $H_G(tx_0,t/\rho_G(x_0))=0$ then $t/\rho_G(x_0) = 1/\lambda_j(t x_0)$ for some $0 < t < 1$ and $j \geq 1$. However,
	$$ t/\rho_G(x_0)<1/\rho_G(x_0)\overset{(a)}{\leq} 1/\rho_G(tx_0) \leq |1/\lambda_j(tx_0)|, $$
	where inequality $(a)$ uses that each entry of $\ve{P}_G(x_0)$ is monotonically increasing in $x_0$ and thus $\rho_G(x_0)$ is also monotonically increasing in $x_0$. Hence $H_G(tx_0,t/\rho_G(x_0))$ does not vanish on $0<t<1$ and it follows that any point $(x_0,1/\rho_G(x_0))$ with $0<x_0<1$ is a minimal singularity. 
	
	We next prove that the only other minimal singularities in $\{(x,1/\lambda_j(x)): x\in\mathbb{C}, 1\leq j\leq |\V| \}$ are as given in the statement of the lemma. To start with, by Theorem~\ref{thm:irreducible:spectral:circle}, for each $0<x_0\leq1$ there are precisely $d$ simple eigenvalues $\lambda_0(x_0),\dots,\lambda_{d-1}(x_0)$ with the same modulus as the spectral radius and they are given by
	$$\lambda_j(x_0) = \rho_G(x_0) \mathrm{e}^{2\pi\imu (j-1)/d }.$$
	Due to the similarity of $\ve{P}_G(x\mathrm{e}^{\imu\phi_k})$ and $\mathrm{e}^{\imu\phi_kb}\ve{P}_G(x)$ for all $\phi_k = 2\pi k/c$ and $k \in \{0,1,\dots,c-1\}$, which was derived in Lemma~\ref{lemma:cost-enumerator:complex:circle}, the eigenvalues of $\ve{P}_G(x_0\mathrm{e}^{\imu\phi_k})$ are given by
	$ \lambda_j(x_0 \mathrm{e}^{\imu\phi_k}) = \mathrm{e}^{\imu \phi_k b} \lambda_j(x_0).  $
	Therefore, for each $j$ and $k$ we obtain one candidate for a minimal singularity,
	$$\left(x_0 \mathrm{e}^{\imu\phi_k},\mathrm{e}^{-\imu (\phi_kb + 2\pi j/d) }/\rho_G(x_0) \right).$$
	For all other $\phi$ that are not integer multiples of $2\pi/c$, the singularities $(x_0\mathrm{e}^{\imu\phi},1/\lambda_j(x_0\mathrm{e}^{\imu\phi}))$ are not minimal, as in this case the inequality 
	$\rho_G(x_0\mathrm{e}^{\imu\phi}) < \rho_G(x_0)$  was proven in Lemma~\ref{lemma:spectral:radius:complex:circle}. Furthermore, all other eigenvalues $\lambda_j(x)$ with $j\geq d$ have $|\lambda_j(x)|<\rho_G(x)$, which implies that they cannot be minimal.
	
	Finally, we study the singularities in $\{ (1,y) : y \in \mathbb{C} \}$. All points $(1,y_0)$ for $y_0 \in \mathbb{C}$ and $|y_0| \leq 1/\rho_G(1)$ are singularities, since the matrix $\ve{I}-y_0\ve{P}_G(1)$ is invertible. Furthermore, these singularities are minimal due to the fact that $(1,1/\rho_G(1))$ is minimal as proven above. Conversely, for $|y_0| > 1/\rho_G(1)$ the points $(1,y_0)$ are not minimal due to the existence of the singularities $(x_0,1/\rho_G(x_0))$.
\end{proof}
It is worth noting that, while we have proven that the points $(x_0,1/\rho_G(x_0))$ are indeed singularities, the same is not necessarily true for the other minimal points. This is because, for these points, the numerator is not guaranteed to be nonnegative. Next is a statement on the smoothness and criticality of the singularities.

\begin{lemma} \label{lemma:singularities:critical:smooth}
	Let $G$ be a strongly connected and cost-diverse graph with period $d$ and cost period~$c$. For all $x_0 \in \mathbb{R}^+$ with $x_0\neq 1$ and all $k\in \{0,1,\dots,c-1\}$ and $j \in \{0,1,\dots,d-1\}$, the points
	$$\left(x_0 \mathrm{e}^{ 2\pi\imu k/c}, \mathrm{e}^{-2\pi\imu ( kb/c + j/d)}/ \rho_G(x_0) \right)$$
	are smooth points of $\ve{F}_G(x,y)$, and are critical if and only if $ \alpha x_0 \rho_G'(x_0) = \rho_G(x_0)$. Any point $(1,y_0)$ with $y_0\in \mathbb{C}$ and $|y_0|<1/\rho_G(1)$ is not a root of $\det(\ve{I}-y\ve{P}_G(x))$ and thus is a smooth point that is never critical.
\end{lemma}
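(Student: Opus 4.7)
My plan is to analyze the denominator
\[ H_G(x,y) \;=\; (1-x)\det(\bI - y\bP_G(x)) \;=\; (1-x)\prod_{\ell=1}^{|V|}\bigl(1 - y\lambda_\ell(x)\bigr) \]
factor by factor, using the spectral information already derived for cost-diverse strongly connected graphs. Fix $x_0 \in \R^+ \setminus \{1\}$ and set $(x_*,y_*) = \bigl(x_0 e^{2\pi\imu k/c},\, e^{-2\pi\imu(kb/c + j/d)}/\rho_G(x_0)\bigr)$. A short case check shows $x_* \neq 1$ (equality would force $x_0 \in \{1,-1\}$), so the factor $1-x$ is nonzero at $(x_*,y_*)$. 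By \Cref{lemma:cost-enumerator:complex:circle}, the eigenvalues of $\bP_G(x_*)$ are $e^{2\pi\imu kb/c}\lambda_\ell(x_0)$, and by the extension of Perron--Frobenius to periodic irreducible matrices the dominant eigenvalues of $\bP_G(x_0)$ are the simple values $\rho_G(x_0)\, e^{2\pi\imu(\ell-1)/d}$ for $\ell=1,\dots,d$. Inspecting which factor $(1 - y_* \lambda_\ell(x_*))$ vanishes at $(x_*,y_*)$ shows that exactly one dominant branch (the index $\ell_* \equiv j+1 \pmod d$) contributes a zero; the other dominant branches are separated by the $d$-th roots of unity and the non-dominant factors satisfy $|y_*\lambda_\ell(x_*)|<1$. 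Since $\lambda_{\ell_*}$ is a simple eigenvalue, that zero is simple, so $H_G$ has a simple zero at $(x_*,y_*)$ and in particular $\nabla H_G \neq 0$, i.e.\ the point is smooth.

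For criticality, I would write $H_G = (1-x)\bigl(1 - y\lambda_{\ell_*}(x)\bigr)\Lambda(x,y)$ for a local holomorphic branch $\lambda_{\ell_*}$ and $\Lambda(x_*,y_*)\neq 0$. Because the first and third factors do not vanish at $(x_*,y_*)$, the critical equation $\alpha x H_{G,x} = y H_{G,y}$ from \Cref{prop:ACSVsmooth} reduces, after dividing by $(1-x)\Lambda$, to $\alpha x\, \lambda_{\ell_*}'(x) = \lambda_{\ell_*}(x)$ evaluated at $x=x_*$. Differentiating the functional identity $\lambda_{\ell_*}(xe^{2\pi\imu k/c}) = e^{2\pi\imu kb/c}\lambda_{\ell_*}(x)$ in $x$ gives $\lambda_{\ell_*}'(x_*) = e^{2\pi\imu k(b-1)/c}\lambda_{\ell_*}'(x_0)$, while on the dominant branch one has $\lambda_{\ell_*}(x_0) = \rho_G(x_0)\,e^{2\pi\imu(\ell_*-1)/d}$ and $\lambda_{\ell_*}'(x_0) = \rho_G'(x_0)\,e^{2\pi\imu(\ell_*-1)/d}$. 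Substituting cancels every unimodular factor and collapses the criticality condition to $\alpha x_0 \rho_G'(x_0) = \rho_G(x_0)$, uniformly in $k$ and $j$.

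For the remaining family, at $(1,y_0)$ with $|y_0|$ small enough that $\det(\bI - y_0\bP_G(1)) \neq 0$ (the condition $|y_0|<1/\rho_G(1)$ certainly suffices, since then $|y_0\lambda_\ell(1)|<1$ for all $\ell$), only the factor $1-x$ of $H_G$ vanishes, and its gradient $(-1,0)$ is nonzero, so the point is smooth. However, $H_{G,y}(1,y_0) = 0$ because the prefactor $(1-x)$ annihilates any $y$-derivative there, whereas $H_{G,x}(1,y_0) = -\det(\bI - y_0\bP_G(1)) \neq 0$. Consequently $\alpha x H_{G,x} \neq 0 = y H_{G,y}$ at $(1,y_0)$, so the critical equation cannot hold and the point is never critical.

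The main obstacle I anticipate is the careful analytic bookkeeping for the eigenvalue branches: one must verify that a single dominant eigenvalue $\lambda_{\ell_*}$ extends to a local holomorphic function in a neighborhood of $x_*$, and that the factorization $H_G = (1-x)(1 - y\lambda_{\ell_*}(x))\Lambda(x,y)$ cleanly isolates the vanishing factor. Simplicity of the dominant eigenvalues from periodic Perron--Frobenius guarantees each is locally holomorphic, and \Cref{lemma:cost-enumerator:complex:circle} identifies the branch value and derivative in terms of $\rho_G$ and $\rho_G'$; but this identification must be applied to the same analytic continuation throughout in order to justify the cancellations in the critical equation.
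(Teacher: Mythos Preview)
Your argument is correct and arrives at the same conclusion as the paper, but the computational machinery differs. The paper computes $\partial H_G/\partial x$ and $\partial H_G/\partial y$ directly via Jacobi's formula $\partial \det(\cdot)/\partial t = \tr(\adj(\cdot)\,\partial(\cdot)/\partial t)$, then invokes \Cref{lemma:adj:outer:product} to replace the adjoint by the outer product of Perron eigenvectors and \Cref{lemma:diff:perron} to identify $\bv_j\bP_G'\bu_j^{\mathrm{T}}$ with $\lambda_j'$. You instead work with a local holomorphic factorization $H_G = (1-x)(1-y\lambda_{\ell_*}(x))\Lambda(x,y)$, isolate the single vanishing factor, and apply the product rule. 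Both routes reduce the critical equation to $\alpha x_0\rho_G'(x_0)=\rho_G(x_0)$ after cancelling the unimodular phases coming from \Cref{lemma:cost-enumerator:complex:circle}. Your approach is slightly more transparent but leans on the analytic continuation of individual dominant eigenvalue branches (justified by simplicity, cf.\ \Cref{lemma:spectral:radius:analytic}); the paper's adjoint route stays polynomial throughout and avoids that bookkeeping. Two minor points: your parenthetical ``equality would force $x_0\in\{1,-1\}$'' should read $x_0=1$ since $x_0\in\R^+$; and the phrase ``$H_G$ has a simple zero'' is better stated as $H_{G,y}(x_*,y_*)\neq 0$, which is what your factorization actually yields.
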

\begin{proof}
	Abbreviate for convenience $\phi_k \triangleq 2\pi  k/c$ and $\theta_j\triangleq 2\pi  j/d $. We start by verifying that for all $x_0 \in \mathbb{R}^+$ with $x_0\neq 1$ and $k\in \{0,1,\dots,c-1\}$, $j \in \{0,1,\dots,d-1\}$, the points $(x_0 \mathrm{e}^{\imu \phi_k}, \mathrm{e}^{-\imu ( \phi_kb + \theta_j)}/ \rho_G(x_0) )$ are smooth.
	By Jacobi's Formula, we have
	\begin{align*}
		\frac{\partial H_G(x,y)}{\partial y} &= -(1-x)\tr\left(\adj(\ve{I}-y\ve{P}_G(x)) \ve{P}_G(x)\right).
	\end{align*}
	For the rest of this proof we write $ \lambda_j(x_0\mathrm{e}^{\imu\phi_k})$ for the $d$ eigenvalues of $\ve{P}_G(x_0\mathrm{e}^{\imu\phi_k})$ of maximum modulus, which satisfy $\lambda_j(x_0 \mathrm{e}^{\imu\phi_k}) = \mathrm{e}^{\imu (\phi_k b + \theta_j)} \lambda_0(x_0)$, where $\lambda_0(x_0)=\rho_G(x_0)$ is the Perron root of $\ve{P}_G(x_0)$, according to Theorem \ref{thm:irreducible:spectral:circle} and Lemma~\ref{lemma:cost-enumerator:complex:circle}. The corresponding normalized eigenvectors are $\ve{u}_j(x_0 \mathrm{e}^{\imu\phi_k})$ and $\ve{v}_j(x_0\mathrm{e}^{\imu\phi_k})$, and plugging in the points of interest we obtain
	\begin{align*}
		&\left.\frac{\partial H_G(x,y)}{\partial y}\right|_{\begin{subarray}{l}
				x=x_0 \mathrm{e}^{\imu \phi_k},\\y= 1/\lambda_j(x_0\mathrm{e}^{\imu \phi_k})
		\end{subarray}} \\
		&=-(1-x_0 \mathrm{e}^{\imu \phi_k})\\& \qquad
	\cdot\tr\left(\adj(\ve{I}-\ve{P}_G(x_0 \mathrm{e}^{\imu \phi_k})/\lambda_j(x_0\mathrm{e}^{\imu \phi_k})) \ve{P}_G(x_0 \mathrm{e}^{\imu \phi_k})\right).
	\end{align*}
	Here we can use Lemma \ref{lemma:cost-enumerator:complex:circle} to simplify the cost-enumerator matrix and Lemma \ref{lemma:adj:outer:product} to find an explicit representation of the adjoint matrix, simplifying the above expression to
	\begin{align*}
		&-c_j(x_0)(1-x_0 \mathrm{e}^{\imu \phi_k}) (\lambda_j(x_0))^{1-|\V|} \\&\hspace{1em}
		\cdot \tr (\mathrm{e}^{\imu\phi_kb}\ve{v}_j(x_0) \ve{P}_G(x_0) \ve{u}_j^\mathrm{T}(x_0)) \\
		&= -c_j(x_0)(1-x_0 \mathrm{e}^{\imu \phi_k}) (\lambda_j(x_0))^{1-|\V|} \lambda_j(x_0 \mathrm{e}^{\imu\phi_k}),
	\end{align*}
	where $c_j(x_0) \in \mathbb{R}\setminus \{0\}$ is a nonzero constant. This expression is nonzero for all $x_0\in\mathbb{R}^+$ with $x_0\neq 1$ and $k\in \{0,1,\dots,c-1\}$ and $j \in \{0,1,\dots,d-1\}$, so the points are smooth.
	
	We now examine when these minimal points are solutions of the critical point equations
	\begin{align*}
		\alpha x \frac{\partial H_G(x,y)}{\partial x} &= y \frac{\partial H_G(x,y)}{\partial y}.
	\end{align*}
	The partial derivative of the denominator with respect to $x$ is given by
	\begin{align*}
		\frac{\partial H_G(x,y)}{\partial x} =& - \det(\ve{I}-y\ve{P}_G(x)) - (1-x)y \\&~\cdot\tr\left(\adj(\ve{I}-y\ve{P}_G(x))\frac{\partial \ve{P}_G(x)}{\partial x}\right).
	\end{align*}
	Evaluating this partial derivative at the points $(x_0 \mathrm{e}^{\imu \phi_k}, 1/\lambda_j(x_0\mathrm{e}^{\imu \phi_k}) )$, we obtain
	\begin{align*}
	&	\left.\frac{\partial H_G(x,y)}{\partial x}\right|_{\begin{subarray}{l}
				x=x_0 \mathrm{e}^{\imu \phi_k},\\y= 1/\lambda_j(x_0\mathrm{e}^{\imu \phi_k})
		\end{subarray}} \\
		&=-(1-x_0 \mathrm{e}^{\imu \phi_k})
		\\&\quad
		\cdot\tr\left(\adj(\ve{I}-\ve{P}_G(x_0 \mathrm{e}^{\imu \phi_k})/\lambda_j(x_0\mathrm{e}^{\imu \phi_k})) \ve{P}_G'(x_0 \mathrm{e}^{\imu \phi_k})\right),
	\end{align*} 
	where $\ve{P}_G'(x)$ is the partial derivative of the cost-enumerator matrix with respect to $x$. Here we use that $\det(\ve{I}-y\ve{P}_G(x))$ evaluated at these points is $0$, as $\lambda_j(x_0\mathrm{e}^{\imu \phi_k})$ is an eigenvalue of $\ve{P}_G(x_0\mathrm{e}^{\imu \phi_k})$. Similar to the case of the derivative with respect to $y$, we simplify this expression to
	\begin{align*}
		&-c_j(x_0)(1-x_0 \mathrm{e}^{\imu \phi_k}) (\lambda_j(x_0))^{1-|\V|} \mathrm{e}^{\imu\phi_k(b-1)}
		\\&\quad\quad
		\cdot \tr (\ve{v}_j(x_0) \ve{P}_G'(x_0) \ve{u}_j^\mathrm{T}(x_0)) \\
		\overset{(a)}{=}& -c_j(x_0)(1-x_0 \mathrm{e}^{\imu \phi_k}) (\lambda_j(x_0))^{1-|\V|} 
		%\\&\quad\quad
		\cdot \lambda_0'(x_0)\mathrm{e}^{\imu (\phi_k(b-1) +\theta_j)}
	\end{align*}
	where, in the first step, we used that $\ve{P}_G'(x_0\mathrm{e}^{\imu\phi_k}) = \mathrm{e}^{\imu \phi_k(b-1)}\ve{D}_k^{-1}\ve{P}_G'(x_0)\ve{D}_k$ according to Lemma~\ref{lemma:cost-enumerator:complex:circle}, and equality $(a)$ follows from an application of Lemma~\ref{lemma:diff:perron}.
	Substituting our expressions for the partial derivatives into the critical point equations shows that the critical point equations simplify to
	$ \alpha x_0 \lambda_0'(x_0) = \lambda_0(x_0) $. Since $\lambda_0(x_0) = \rho_G(x_0)$, the first part of the lemma follows.
	
	The singularities $(1,y_0)$ with $|y_0|<1/\rho_G(1)$ are not roots of $\det(\ve{I}-y\ve{P}_G(x))$ as $\rho_G$ is an eigenvalue of $\ve{P}_G$ of largest modulus. Thus, near these points the zero set of the denominator is locally the zero set of the factor $1-x$ and is therefore smooth (algebraically, the partial derivative with respect to $x$ is nonzero at these points). These points can never be critical because the partial derivative of $H_G(x,y)$ with respect to $y$ vanishes at any such point.
\end{proof}
Notice that the derivative $\rho_G'(x)$ in the statement of Lemma~\ref{lemma:singularities:critical:smooth} should crucially be understood with respect to real-valued $x$. The complex derivative does not necessarily exist, since the spectral radius is the largest magnitude of all eigenvalues, \mbox{$\rho_G(x) = |\lambda_0(x)|$}, and the magnitude function is not complex differentiable on the whole complex plane.
\begin{lemma} \label{lemma:alpha:boundary:unique:solution}
	Let $G$ be a strongly connected and cost-diverse graph. Then the critical point equation \mbox{$\alpha x \rho_G'(x) = \rho_G(x)$} has a positive real solution $x_0$ if and only if 
	$$ \lim_{x\to \infty} \frac{\rho_G(x)}{x\rho_G'(x)}<\alpha < \lim_{x\to 0^+} \frac{\rho_G(x)}{x\rho_G'(x)}. $$
	This solution, if it exists, is unique among all positive real $x$. If $\alpha>{\rho_G(1)}/{\rho_G'(1)}$ then $x_0<1$, and if $\alpha<{\rho_G(1)}/{\rho_G'(1)}$ then $x_0>1$. 
\end{lemma}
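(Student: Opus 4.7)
The plan is to reduce the critical point equation to a monotonicity question about a single function. Define
\[ g(x) \;=\; \frac{\rho_G(x)}{x\rho_G'(x)} \qquad (x>0), \]
so the equation $\alpha x\rho_G'(x)=\rho_G(x)$ becomes $g(x)=\alpha$. The lemma reduces to showing that $g$ is a strictly decreasing bijection from $(0,\infty)$ onto the open interval $\bigl(\lim_{x\to\infty}g(x),\lim_{x\to 0^+}g(x)\bigr)$, together with the comparison $g(1)=\rho_G(1)/\rho_G'(1)$.

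The key step is to recognize $g$ as the reciprocal of the derivative of the log-log profile of $\rho_G$. Setting $s=\log x$ and $h(s)=\log\rho_G(\mathrm{e}^s)$, the chain rule gives $h'(s)=x\rho_G'(x)/\rho_G(x)$, so $g(x)=1/h'(s)$. By \Cref{lemma:log:log:convex}, since $G$ is strongly connected and cost-diverse, $h(s)$ is strictly convex on $\mathbb{R}$, hence $h'(s)$ is strictly increasing. It remains to note that $h'(s)>0$: each entry of $\bP_G(x)$ is a polynomial in $x$ with non-negative integer coefficients, so by the Perron--Frobenius theorem the Perron root $\rho_G(x)$ is positive and monotonically non-decreasing for $x>0$; combined with strict convexity of $h$ (which rules out $h'\equiv 0$), one gets $h'(s)>0$ for all $s\in\mathbb{R}$. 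Therefore $g(x)=1/h'(s)$ is positive and strictly decreasing in $x$.

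Strict monotonicity immediately yields uniqueness of any positive real solution $x_0$, and the existence claim follows from the intermediate value theorem: $g$ is continuous on $(0,\infty)$, so its image is precisely the open interval $\bigl(\lim_{x\to\infty}g(x),\lim_{x\to 0^+}g(x)\bigr)$, and $\alpha$ belongs to this interval iff a (unique) solution exists. Finally, evaluating at $x=1$ gives $g(1)=\rho_G(1)/\rho_G'(1)$, and since $g$ is strictly decreasing, the inequality $\alpha>g(1)$ is equivalent to $g(x_0)>g(1)$, i.e.\ $x_0<1$; the reverse inequality yields $x_0>1$.

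The main obstacle is essentially the qualitative input $h''(s)>0$, which we already have from \Cref{lemma:log:log:convex}; beyond that, the argument is a clean application of elementary calculus. The only small point requiring care is verifying $h'(s)>0$ so that $g$ is well-defined and monotonicity of $h'$ transfers to strict monotonicity of $1/h'$, but this follows from the positivity and non-triviality of $\rho_G$ on $(0,\infty)$ together with strict convexity of $h$.
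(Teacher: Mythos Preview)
Your proof is correct and takes essentially the same approach as the paper. The paper works with $f(x)=\alpha x\rho_G'(x)/\rho_G(x)$ and solves $f(x)=1$, whereas you work with the reciprocal $g(x)=1/h'(\log x)$ and solve $g(x)=\alpha$; both reduce the question to strict monotonicity of $h'(s)=\partial_s\log\rho_G(\mathrm{e}^s)$ via \Cref{lemma:log:log:convex}, then apply the intermediate value theorem and evaluate at $x=1$.
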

\begin{proof}
	Since $\rho_G(x)>0$ for $x \in \mathbb{R}^+$, we can rewrite the equation we are trying to solve as $f(x) = 1$, where $f(x) \triangleq \alpha x \rho_G'(x) / \rho_G(x)$. 
	To start we investigate the limit of $f(x)$ as $x\to0^+$. Note that $f(x) >0$ for all $x \in \mathbb{R}^+$. Furthermore, the strict log-log-convexity of $\rho_G(x)$ proven in Lemma \ref{lemma:log:log:convex} implies that $f'(x) > 0$: strict log-log-convexity of $\rho_G(x)$ means that $\log \rho_G(\mathrm{e}^s)$ is strictly convex in~$s$, and substituting $x=\mathrm{e}^s$ gives
	\begin{align*}
		\frac{\partial}{\partial x} f(x) = \mathrm{e}^{-s} \frac{\partial}{\partial s} f(\mathrm{e}^s) &=  \alpha\mathrm{e}^{-s}\frac{\partial}{\partial s} \frac{\mathrm{e}^s \rho_G'(\mathrm{e}^s)}{\rho_G(\mathrm{e}^s)}\\
		&= \alpha\mathrm{e}^{-s}\frac{\partial^2}{\partial s^2} \log \rho_G(\mathrm{e}^s) >  0.
	\end{align*}
	Since $f'(x)>0$ and $f(x)>0$ we see that $f(x)$ is a bounded and decreasing function as $x\to0$ from above, and the monotone convergence theorem implies $\lim_{x\to0^+}f(x)$ exists. Consequently, if $\alpha < \lim_{x\to 0^+} \rho_G(x)/(x\rho_G'(x))$ then $\lim_{x\to0^+} f(x) <1$, as both limits exist. Notice that we allow the upper bound on $\alpha$ to diverge to $\infty$, in which case we can take $\alpha$ as large as desired. This can happen, for example, when there exists a cycle of weight $0$ in $G$. Similarly, the limit $\lim_{x\to \infty}1/f(x)$ exists, as $1/f(x)$ is decreasing and positive. Hence, if $\alpha > \lim_{x\to \infty} \rho_G(x)/(x\rho_G'(x))$ then $\lim_{x\to\infty} 1/f(x) <1$.
	
	To summarize, under our conditions on $\alpha$ we have $\lim_{x\to\infty} f(x)<1$ and $\lim_{x\to0^+} f(x) >1$. By the intermediate value theorem, there is at least one solution to $f(x) = 1$ in $x\in\mathbb{R}^+$. This solution is unique, due to the strict monotonicity of $f$ coming from $f'(x)>0$. We further see that if $\alpha$ is not within these boundaries, there will be no solution in $x \in \mathbb{R}^+$ due to this monotonicity.
	
	If $\alpha \rho_G'(1)>\rho_G(1)$ then $f(1)>1$, and the solution to $f(x)=1$ must occur at $x_0<1$. Similarly, if $\alpha \rho_G'(1)<\rho_G(1)$ then $f(1)<1$ and it follows that $x_0>1$. For a visualization, see Fig.~\ref{fig:critical:point:solution}.
\end{proof}
\begin{figure}
	\centering
	\begin{tikzpicture}
		\begin{axis}[%
			width=0.85\linewidth,
			height=8cm,
			xticklabel style={
				/pgf/number format/fixed,
				/pgf/number format/precision=5
			},
			scaled x ticks=false,
			grid=both,
			grid style={line width=.1pt, draw=gray!10},
			major grid style={line width=.2pt,draw=gray!50},
			minor tick num=4,
			xtick distance = 0.5,
			ytick distance = 0.5,
			xmin=0,
			xmax=3,
			ymin=0,
			ymax=2,
			xlabel = {$x$},
			ylabel = {${\alpha x \rho_G'(x)}/{\rho_G(x)}$},
			legend pos = south east,
			legend cell align={left},
			extra x ticks = {0.5,2},
			extra x tick labels = {$x_0$,$x_0'$},
			extra x tick style = {ticklabel pos=top},
			]
			
			\addplot [domain=0.01:3,samples=150]({x},{0.75*(2*x^2+x)/(x^2+x)});
			\addplot [dashed,domain=0.01:3,samples=150]({x},{0.6*(2*x^2+x)/(x^2+x)});
			\addplot [thick,densely dotted,domain=0.01:3,samples=150]({x},{0.35*(2*x^2+x)/(x^2+x)});
			\addplot [thick,densely dotted,domain=0.01:3,samples=150]({x},{1.1*(2*x^2+x)/(x^2+x)});
			\addlegendentry{$0<x_0<1$};
			\addlegendentry{$1<x_0$};
			\addlegendentry{No solution};
			\addplot [mark=*, only marks] coordinates {(2,1) (0.5, 1)};
			\addplot [opacity=0.25,domain=0:3,samples=2]({x},{1});
			\addplot [opacity=0.25] coordinates {(1,0) (1,2)};
			%\node[] at (axis cs:0.5, 1.1) {$x_0$};
			%\node[] at (axis cs:2, 1.1) {$x_0$};
		\end{axis}
	\end{tikzpicture}%
	\caption{Visualization of the solutions of the critical point equation for $\rho_G(x) = x+x^2$. The illustrated cases correspond to different values of $\alpha$.}
	\label{fig:critical:point:solution}
\end{figure}
Another requirement of Theorem~\ref{prop:ACSVsmooth} is the nondegeneracy of the singularities. We prove this in the following.
\begin{lemma} \label{lemma:singularity:nondegenerate}
	Let $G$ be a strongly connected and cost-diverse graph with period $d$ and cost period~$c$. For all $x_0\in \mathbb{R}^+$ and $k\in \{0,1,\dots,c-1\}$ and $j \in \{0,1,\dots,d-1\}$, the points
	$$\left(x_0 \mathrm{e}^{ 2\pi\imu k/c}, \mathrm{e}^{-2\pi\imu ( kb/c + j/d)}/ \rho_G(x_0) \right)$$
	are nondegenerate.
\end{lemma}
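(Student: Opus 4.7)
The goal is to show that the non-degeneracy quantity $J$ is non-zero at each of the listed points. My plan has three ingredients: reduce to the canonical point $(x_0, 1/\rho_G(x_0))$ via the rotational symmetry of \Cref{lemma:cost-enumerator:complex:circle}; locally factor the denominator using simplicity of the Perron root and reduce to the smooth factor $1 - y\rho_G(x)$; and translate the resulting Hessian computation into the log-log-convexity statement of \Cref{lemma:log:log:convex}.

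First I would use the identity $\bP_G(x e^{2\pi\imu k/c}) = e^{2\pi\imu kb/c}\bD_k^{-1}\bP_G(x)\bD_k$ to rewrite
\[
\det(\bI - y\,\bP_G(x e^{2\pi\imu k/c})) = \det(\bI - y e^{2\pi\imu kb/c}\bP_G(x)),
\]
and observe that the outer rotation by $e^{-2\pi\imu j/d}$ cycles through the $d$ maximum-modulus eigenvalue factors. Non-degeneracy at a smooth point depends only on the singular variety and the direction (not on the choice of defining equation up to units), and the factor $(1-x)$ of $H_G$ does not vanish at any of our points in $\mathbb{R}^+ \setminus \{1\}$. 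A short chain-rule computation then shows that $J$ at the rotated point equals (up to a non-zero multiplicative constant coming from the diagonal conjugation and the unit factor) its value at the canonical point $(x_0, 1/\rho_G(x_0))$. So it suffices to verify non-degeneracy there.

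At the canonical point, Perron--Frobenius theory combined with \cite[Thm.~3.18]{marcus_introduction_2001} guarantees that $\rho_G(x)$ extends to a simple analytic eigenvalue of $\bP_G(x)$ in a neighborhood of $x_0$. Hence locally
\[
\det(\bI - y\,\bP_G(x)) = (1 - y\,\rho_G(x))\cdot g(x,y), \qquad g(x_0, 1/\rho_G(x_0)) \neq 0,
\]
and by the invariance noted above, $J(x_0, 1/\rho_G(x_0))$ agrees, up to a non-zero factor, with the analogous quantity for the smooth equation $\tilde H(x,y) := 1 - y\,\rho_G(x)$. I would then plug
\[
\tilde H_x = -y\,\rho_G'(x_0),\quad \tilde H_y = -\rho_G(x_0),\quad \tilde H_{xx} = -y\,\rho_G''(x_0),\quad \tilde H_{xy} = -\rho_G'(x_0),\quad \tilde H_{yy} = 0
\]
into the definition of $J$ with direction $(\alpha_1,\alpha_2) = (1,\alpha)$, use the critical-point relation $\alpha x_0\rho_G'(x_0) = \rho_G(x_0)$ established by \Cref{lemma:singularities:critical:smooth}, and simplify. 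The algebra collapses to a non-zero constant times
\[
\left.\frac{\partial^{2}}{\partial s^{2}}\log \rho_G(e^s)\right|_{s = \log x_0},
\]
which is exactly the quantity $J(x_0)$ appearing in \Cref{thm:fixed-length}. By strict log-log-convexity of $\rho_G$ (\Cref{lemma:log:log:convex}), this second derivative is strictly positive, so $J(x_0, 1/\rho_G(x_0)) \neq 0$ and the point is non-degenerate.

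The main obstacle is the bookkeeping in the final simplification: combining the critical-point identity with the second-derivative formulas for $\tilde H$ while tracking the factors of $x_0$, $y_0 = 1/\rho_G(x_0)$, and $\alpha$ in the definition of $J$ needs to be done carefully so that the result is manifestly a positive multiple of $\partial_s^2 \log \rho_G(e^s)$. A secondary technicality is justifying that passing from $H_G$ to the smooth local factor $\tilde H$ preserves non-vanishing of $J$; this is standard in ACSV but should be spelled out, and it is where the non-vanishing of $(1-x)$ at our points (guaranteed by $x_0 \in \mathbb{R}^+$ together with the exclusion of the trivial multiple-point case at $x=1$) enters the argument.
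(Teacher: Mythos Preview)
Your approach is correct and reaches the same destination as the paper: both reduce non-degeneracy to the strict positivity of $\partial_s^2 \log \rho_G(e^s)|_{s=\log x_0}$, which follows from \Cref{lemma:log:log:convex}. The difference is in how you get there. The paper invokes \cite[Lemma~5.5]{melczer_invitation_2021}, which identifies the non-degeneracy quantity directly as the second derivative at $\theta=0$ of the phase function
\[
\phi(\theta) = \log\bigl(\lambda_j(\xi_k)/\lambda_j(\xi_k e^{\imu\theta})\bigr) + \imu\theta/\alpha,
\]
where $\xi_k = x_0 e^{2\pi\imu k/c}$. Since the linear term $\imu\theta/\alpha$ vanishes under the second derivative, \Cref{lemma:cost-enumerator:complex:circle} together with analyticity of $\lambda_j$ along the real axis immediately collapses $\phi''(0)$ to $\partial_s^2 \log \rho_G(e^s)|_{s=\log x_0}$, with no reference to the critical-point relation and no need to track unit factors.

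Your route---factoring locally as $(1-y\rho_G(x))\cdot g(x,y)$ and plugging into the explicit $J$-formula---is more hands-on and requires the two bookkeeping steps you flag (invariance of non-degeneracy under multiplication by units, and the algebraic simplification using $\alpha x_0\rho_G'(x_0)=\rho_G(x_0)$). Both are routine, but note that the paper's route avoids the critical-point relation entirely, which is why the lemma is stated for \emph{all} $x_0\in\mathbb{R}^+$ rather than just critical ones. Also be aware that in the proof of \Cref{thm:fixed-length} the exact value of the Hessian (not just its non-vanishing) enters the leading constant, so the parametrization approach has the side benefit of delivering $J(x_0) = \partial_s^2\log\rho_G(e^s)$ on the nose.
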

\begin{proof}
	Write $\xi_k \triangleq x_0\mathrm{e}^{2\pi\imu k/c}$. According to Theorem~\ref{thm:irreducible:spectral:circle} and Lemma~\ref{lemma:cost-enumerator:complex:circle}, there are $d$ eigenvalues of maximum modulus $\lambda_j(\xi_k)$ of $\ve{P}_G(\xi_k)$ that satisfy $\lambda_j(\xi_k) = \mathrm{e}^{\imu (\phi_k b + \theta_j)} \rho_G(x_0)$, where $\rho_G(x_0)$ is the Perron root of $\ve{P}_G(x_0)$. 
	From the proof of Lemma~\ref{lemma:minimal:singularities} it follows that the 
analytic function $g(x)$ in Definition~\ref{def:nondegenerate} is given by $g(x) = 1/\lambda_j(x)$. The quantity $\mathcal{H}_{x_0,y_0}$ determining nondegeneracy in the smooth case is therefore the second derivative of
	$$ \phi(\theta) = \log \left( \frac{\lambda_j(\xi_k)}{\lambda_j(\xi_k\mathrm{e}^{\imu\theta})} \right) + \frac{\imu \theta}{\alpha} $$
	at $\theta=0$. 
	Differentiating twice with respect to $\theta$ gives
	\begin{align*}
		\left.\frac{\partial^2}{\partial \theta^2} \phi(\theta)\right|_{\theta=0} &\overset{(a)}{=} - \left.\frac{\partial^2}{\partial \theta^2} \log \lambda_j(x_0\mathrm{e}^{\imu\theta})\right|_{\theta=0} \\
		&= \left.\frac{\partial^2}{\partial s^2} \log\lambda_j(\mathrm{e}^s)\right|_{s=\log x_0}\\
		&\overset{(b)}{=} \left.\frac{\partial^2}{\partial s^2} \log\rho_G(\mathrm{e}^s)\right|_{s=\log x_0} \\
		&\overset{(c)}{>} 0.
	\end{align*}
	In $(a)$ we used Lemma~\ref{lemma:cost-enumerator:complex:circle} to conclude that $\lambda_j(\xi_k\mathrm{e}^{\imu\phi}) = \mathrm{e}^{2\pi\imu kb/c} \lambda_j(x_0\mathrm{e}^{\imu\phi})$.
	Note that the differentiation to the left and right hand side of $(b)$ should be understood with respect to complex-valued $s$ and real-valued $s$, respectively, as $\rho_G(\mathrm{e}^s)$ is not complex differentiable in $s$ in general. In $(b)$ we used that for analytic functions, by the definition of complex differentiation, the derivative along the real line equals the complex derivative. Inequality $(c)$ follows from the strict log-log-convexity of $\rho_G(x)$ for $x\in \mathbb{R}^+$, as was proven in Lemma~\ref{lemma:log:log:convex}.
\end{proof}
\subsection{Proof of Theorem~\ref{thm:fixed-length}}
	The final ingredient in the proof of Theorem \ref{thm:fixed-length} is to identify the critical singularities that contribute to the asymptotic expansion, depending on the value of $\alpha$. 
We call a smooth critical singularity  \emph{contributing} if it satisfies the hypotheses of Theorem~\ref{prop:ACSVsmooth} and a non-smooth critical singularity \emph{contributing}  if it satisfies the hypotheses of Theorem~\ref{prop:ACSVnonsmooth}.

	\begin{lemma} \label{lemma:contributing}
		Let $G$ be a strongly connected and cost-diverse graph with  period $d$ and cost period $c$. Let $\lambda_j(x_0)=\mathrm{e}^{2\pi\imu j/d} \rho_G(x_0)$ with $j\in\{0,1,\dots,d-1\}$ denote the $d$ eigenvalues of maximum modulus of $\ve{P}_G(x_0)$.
		\begin{itemize}
			\item If $0 < \alpha < \alpha_G^{\mathsf{lo}}$ then $(1,1/\lambda_j(1))$ for $j\in\{0,1,\dots,d-1\}$ are contributing points. 
			\item If $\alpha_G^{\mathsf{lo}} < \alpha < \alpha_G^{\mathsf{up}}$ then $(x_0\mathrm{e}^{2\pi\imu k/c},1/\lambda_j(x_0\mathrm{e}^{2\pi\imu k/c}))$ for $j\in\{0,1,\dots,d-1\}$ and $k\in \{0,1,\dots,c-1\}$ with $\alpha x_0 \rho_G'(x_0) = \rho_G(x_0)$ are smooth contributing points.
		\end{itemize}
		In both cases, there are no contributing points other than those mentioned.
	\end{lemma}
	\begin{proof}
		We first discuss the multiple-point, non-smooth case $0<\alpha<\alpha_{G}^{\mathsf{lo}}$. We start by proving that $(x_0,y_0) = (1,1/\rho_G(1))$ satisfies the conditions of Theorem \ref{prop:ACSVnonsmooth}. The two surfaces defined by the vanishing of $R(x,y) = 1-x$ and $S(x,y) = \det(\ve{I}-y\ve{P}_G(x))$ intersect at this point. Direct computation shows $R_x(x,y) = -1$ and $R_y(x,y) = 0$, while Jacobi's formula implies $S_x(x,y) = -y\tr(\adj(\ve{I}-y\ve{P}_G(x))\ve{P}_G'(x))$ and $S_y(x,y) = -\tr(\adj(\ve{I}-y\ve{P}_G(x))\ve{P}_G(x))$. Hence, $(1,1/\rho_G(1))$ is a contributing point if there exist $\nu_1,\nu_2 >0$ such that
		\begin{align*}
			&\nu_1 \left(1,\frac{y_0 R_y(x_0,y_0)}{x_0 R_x(x_0,y_0)}\right) + \nu_2 \left(1,\frac{y_0 S_y(x_0,y_0)}{x_0 S_x(x_0,y_0)} \right) \\
		=&\left(\nu_1+\nu_2,\nu_2 \frac{\rho_G(1)}{\rho_G'(1)}\right) = (1,\alpha).
		\end{align*}
		We can set $\nu_1 = 1-\nu_2$ and $\nu_2 = \alpha \rho_G'(1)/\rho_G(1)$, which are both positive due to $\alpha <\rho_G(1)/\rho_G'(1)$, and the required conditions hold. Using the same arguments, the singularities $(1,1/\lambda_j(x_0))$ also contribute to the asymptotics. The remaining singularities $(x,y) \in \mathbb{C}^2$ with the same coordinate-wise modulus $(|x|,|y|) = (1,1/\rho_G(1))$ are smooth, however, by \cite[Cor. 5.6]{melczer_invitation_2021}, none of them are critical as $(1,1/\rho_G(1))$ is not critical by Lemma \ref{lemma:alpha:boundary:unique:solution}.
		
		We now move to the smooth case $\alpha_{G}^{\mathsf{lo}}<\alpha<\alpha_{G}^{\mathsf{up}}$. Lemmas~\ref{lemma:minimal:singularities}--\ref{lemma:singularity:nondegenerate} show that the point $(x_0,1/\rho_G(x_0))$ where $0<x_0<1$ and $\alpha x_0 \rho_G'(x_0) = \rho_G(x_0)$ is unique and a smooth, finitely minimal, critical and nondegenerate singularity. Furthermore, all other singularities with the same coordinate-wise modulus, which are $(x_0\mathrm{e}^{2\pi\imu k/c},1/\lambda_j(x_0\mathrm{e}^{2\pi\imu k/c}))$ for some $k,j \in \mathbb{Z}$, fulfill these properties as well.
	\end{proof}

	We are finally ready to prove Theorem~\ref{thm:fixed-length} by combining Lemmas~\ref{lemma:minimal:singularities}--\ref{lemma:contributing} with Theorems~\ref{prop:ACSVsmooth} and \ref{prop:ACSVnonsmooth}.
	\begin{proof}[Proof of Theorem \ref{thm:fixed-length}]
		We differentiate between the two cases $0<\alpha<\alpha_{G}^{\mathsf{lo}}$ and $\alpha_{G}^{\mathsf{lo}}<\alpha<\alpha_{G}^{\mathsf{up}}$. In the first case, the non-smooth singularity $(1,1/\rho_G(1))$ and those with the same coordinate-wise moduli are the singularities that determine the asymptotic behavior. In the second case, the singularities $(x_0,1/\rho_G(x_0))$ with $0<x_0<1$ and $\alpha x_0\rho_G'(x_0) = \rho_G(x_0)$, and those with the same coordinate-wise moduli, are the ones contributing.
		
		We start with the multiple-point, non-smooth case $0<\alpha<\alpha_{G}^{\mathsf{lo}}$, aiming to apply Theorem~\ref{prop:ACSVnonsmooth} with the extension \cite[Cor. 9.1]{melczer_invitation_2021}. For any $x_0 \in\mathbb{R}^+$ let $\lambda_j(x_0)=\mathrm{e}^{2\pi\imu j/d} \rho_G(x_0)$ with $j\in\{0,1,\dots,d-1\}$ denote the $d$ eigenvalues of maximum modulus of $\ve{P}_G(x_0)$ and let $\ve{u}_j(x_0)$ and $\ve{v}_j(x_0)$ be the corresponding right and left eigenvectors. By Lemma~\ref{lemma:contributing}, Theorem~\ref{prop:ACSVnonsmooth} is applicable for the contributing singularities $(1,1/\lambda_j(1))$ and it remains to compute the required terms. The numerator of the generating function is given by
		\begin{align*}
			\ve{Q}_{G,v}(1,1/\lambda_j(1)) &= \lambda_j(1)^{1-|\V|} \adj(\lambda_j(1) \ve{I} - \ve{P}_G(1)) \mathbf{1}^\mathrm{T}  \\
			&\overset{(a)}{=}c_j(1) \lambda_j(1)^{1-|\V|}  \ve{u}_j^\mathrm{T}(1)\ve{v}_j(1) \mathbf{1}^\mathrm{T},
		\end{align*}
		where $(a)$ follows from an application of Lemma \ref{lemma:adj:outer:product}. Similarly, we obtain for the numerator
		\begin{align*}
			\det \mathbf{H} &= - \frac{1}{\lambda_j(1)} S_y(1,1/\lambda_j(1)) \\
			&=   \frac{1}{\lambda_j(1)} \tr (\adj(\ve{I}-\ve{P}_G(1)/\lambda_j(1))\ve{P}_G(1)) \\
			&= c_j(1) \lambda_j(1)^{1-|\V|}.
		\end{align*}
		Plugging these results into the expressions of Theorem \ref{prop:ACSVnonsmooth} and summing over all contributing points $(1,1/\lambda_j(1))$ according to \cite[Cor. 9.1]{melczer_invitation_2021} proves the first statement of Theorem \ref{thm:fixed-length}.

		In the smooth case $\alpha_{G}^{\mathsf{lo}}<\alpha<\alpha_{G}^{\mathsf{up}}$, the point $(x_0,1/\rho_G(x_0))$ where $0<x_0<1$ and $\alpha x_0 \rho_G'(x_0) = \rho_G(x_0)$ is unique and a smooth, finitely minimal, critical and nondegenerate singularity and thus contributing by Lemma~\ref{lemma:contributing}. The same applies to the other singularities with the same coordinate-wise modulus, which are $(x_0\mathrm{e}^{2\pi\imu k/c}, \mathrm{e}^{-2\pi \imu(kb/c+j/d)}/\rho_G(x_0))$ for some $k,j \in \mathbb{Z}$. This allows us to invoke the extension \cite[Cor. 5.2]{melczer_invitation_2021} of Theorem \ref{prop:ACSVsmooth}. Notice that there may be values of $j$ and $k$ where the numerator vanishes, however we have shown in Lemma \ref{lemma:minimal:singularities} that this does not occur when $k=j=0$. Thus, it is possible that the leading asymptotic terms from some of these points vanishes, but the sum of all terms always captures the dominant asymptotic behavior of the sequence under consideration. The quantity $\mathcal{H}$ appearing in the asymptotic expansion was derived in Lemma \ref{lemma:singularity:nondegenerate}. Abbreviating $\phi_k = 2\pi k/c$ in the following, we find that
		\begin{align*}
			&\ve{Q}_{G,v}(x_0\mathrm{e}^{\imu\phi_k},1/\lambda_j(x_0\mathrm{e}^{\imu\phi_k})) \\&~=c_j(x_0) \lambda_j(x_0)^{1-|\V|} \ve{D}_k^{-1} \ve{u}_j^\mathrm{T}(x_0)\ve{v}_j(x_0) \ve{D}_k \mathbf{1}^\mathrm{T},
		\end{align*}
		where we used that for any two square matrices $\ve{D}$ and $\ve{P}$, the adjoint of the conjugation of $\ve{P}$ by  $\ve{D}$  is given by $\adj(\ve{D}^{-1}\ve{P}\ve{D}) = \ve{D}^{-1}\adj(\ve{P})\ve{D}$.
\end{proof}

\subsection{Proof of the Other Theorems}

We continue with proving the remaining theorems.

\begin{proof}[Proof of Theorem~\ref{thm:fixed-length:capacity}]
	Theorem~\ref{thm:fixed-length:capacity} directly follows from Theorem~\ref{thm:fixed-length}. By the definition of the capacity, we take the logarithm of the asymptotic expansion $N_{G,v}(t,\alpha t)$ and divide by $t$. Computing the limit $t\to\infty$, all terms except for the exponential in $t$ vanish.
\end{proof}

Theorems~\ref{thm:variable-length:capacity} and \ref{thm:variable-length:exact} can be proven using standard univariate singularity analysis \cite{flajolet_analytic_2009}. We start with proving  Theorem~\ref{thm:variable-length:exact}, which depicts the more general statement of the exact representation of the follower set size.

\begin{proof}[Proof of Theorem~\ref{thm:variable-length:exact}]
	By Lemma~\ref{lemma:generating:function}, the generating functions of $N_{G,v}(t)$ are given by the fractions \mbox{$F_{G,v}(x) = Q_{G,v}(x)/H_G(x)$}, with the polynomials $Q_{G,v}(x) = [\adj(\ve{I}-\ve{P}(x))\mathbf{1}^\mathrm{T}]_v$ and $H_G(x) = (1-x)\det(\ve{I}-\ve{P}_G(x))$, and thus the singularities are a subset of the solutions to $(1-x)\det(\ve{I}-\ve{P}_G(x)) = 0$. Invoking \cite[Thm. IV.9]{flajolet_analytic_2009} then proves Theorem~\ref{thm:variable-length:exact}. Note that in principle not all solutions have to be singularities, as the numerator and denominator are not guaranteed to be coprime. This case is covered by setting $\Pi_{G,v,i}(t) = 0$ for all roots which share common factors with the numerator in the partial fraction decomposition.
\end{proof}

\begin{proof}[Proof of Theorem~\ref{thm:variable-length:capacity}]
	The theorem follows directly  from Theorem~\ref{thm:variable-length:exact} by the computation of $C_G = \lim_{t\to\infty}{\log N_{G,v}(t)}/{t}$ and the fact that  $H_G(x) = (1-x) \prod_{j}(1-\lambda_j(x))$, where $\lambda_j(x)$ are the eigenvalues of $\ve{P}_G(x)$. Since $\ve{P}_G(x)$ is an irreducible matrix, there is an eigenvalue  which is equal to the spectral radius and thus the singularity of smallest magnitude of $F_{G,v}(x)$ is that for which $\rho_G(x) = 1$. The numerator at this singularity is nonzero due to Lemma~\ref{lemma:adj:outer:product}.
\end{proof}

\subsection{Details of Remarks~\ref{rem:alpha:comb:interp} and~\ref{rem:maximality:concavity}}
\label{sec:details:remarks}

We now return to Remarks~\ref{rem:alpha:comb:interp} and~\ref{rem:maximality:concavity}.

\begin{proposition}\label{prop:alpha:comb:interp}
The inverse of $\alpha_{G}^{\mathsf{lo}}$ is the average cost per edge, asymptotically in $n$, over all paths of length $n$ in $G$. Equivalently, it is the average cost per edge associated with the unique stationary Markov chain of maximum entropy on $G$. The inverse of $\alpha_{G}^{\mathsf{up}}$ is the minimum average cost per edge among the cycles in $G$. 
\end{proposition}

\begin{proof}\footnote{The authors thank Andrew Tan for helpful comments regarding this proof.}
 Recall the definitions $\alpha_{G}^{\mathsf{lo}} \triangleq {\rho_G(1)}/{\rho_G'(1)}$ and $\alpha_{G}^{\mathsf{up}} \triangleq \lim\limits_{x\to 0^+} {\rho_G(x)}/{(x\rho_G'(x))}$.  Define the sequence of functions 
$$f_m(x) \triangleq \frac{1}{m} \log \left(\sum\limits_{u,v}(\ve{P}_G(x)^m)_{u,v}\right),$$  
where $(\ve{P}_G(x)^m)_{u,v}$ is the generating function for the costs of  paths $\ve{p}$ of length $m$ from state $u$ to state $v$, i.e., 
\begin{equation*}
  (\ve{P}_G(x)^m)_{u,v} = \sum_{\begin{array}{c}   \ve{p}:u \rightarrow v \\ {\rm length} \; \ve{p}=m \end{array}} x^{\tau(\ve{p})}.
\end{equation*}
 
For $x > 0$, the cost-enumerator matrix $\ve{P}_G(x)$ is irreducible, so by~\cite[Lemma 3.5]{marcus_introduction_2001}
\begin{equation}
    \label{eq:growth_rate}
    \lim_{m\rightarrow\infty} f_m(x)= \log\rho_G(x).
\end{equation}

The corresponding sequence of derivatives $f_m'(x)$ is given by
 \begin{equation*}
 f_m'(x) = \frac{1}{m} \frac{\frac{d}{dx} \sum\limits_{u,v}(\ve{P}_G(x)^m)_{u,v}}{\sum\limits_{u,v}(\ve{P}_G(x)^m)_{u,v}} =   \frac{1}{m} \frac{\sum_{\ve{p}} \tau(\ve{p}) x^{\tau(\ve{p})-1}}{\sum_{\ve{p}} x^{\tau(\ve{p})}}.
 \end{equation*}
It can then be shown that $f_m'(x) \xrightarrow{m \rightarrow \infty} \frac{d}{dx} \log \rho_G(x)$  uniformly on a closed interval in $(0,\infty)$, i.e., the sequence of derivatives converges uniformly to the derivative of the limit. Thus, the derivative and limit can be interchanged, so that
\begin{equation}
    \label{eq:growth_rate_derivative}
    \lim_{m\rightarrow\infty} f_m'(x) = \left( \lim_{m \rightarrow \infty} f_m(x) \right)' = \left( \log \rho_G (x) \right)' = \frac{\rho'_G(x)}{\rho_G(x)}.
\end{equation}
The proof of uniform convergence of the sequence of derivatives makes use of the following fact; see~\cite[proof of Lemma 3.17]{marcus_introduction_2001}.

\medskip
\noindent
{\bf Fact:}
Let $\ve{P}$ be a primitive matrix, i.e., $\ve{P}$ is irreducible and has period $1$. Then, there exist real vectors $\ve{u},\ve{v}$ such that $\ve{P}\ve{u} = \rho \ve{u}, \ve{v}^T \ve{P} = \rho \ve{v}^T$, and $\ve{v}^T \ve{u} = 1$. Moreover, $\lim\limits_{m\rightarrow \infty} \frac{\ve{P}^m}{\rho^m} = \ve{u}\ve{v}^T$ and $\ve{P}^m = \rho^m \ve{u}\ve{v}^T + \ve{E}^{(m)}$ with $|\ve{E}^{(m)}_{s,t}| = O(m^{h-1} \mu^m)$, where $\rho$ is the spectral radius of $\ve{P}$, $\mu$ is the largest absolute value of the eigenvalues of $\ve{P}$ other than $\rho$, and $h$ is the highest algebraic multiplicity of the eigenvalues of $\ve{P}$ whose absolute value is $\mu$. \qed \\

The derivative $f_m'(x)$ evaluated at $x=1$ is
\begin{equation*} \label{eq:derivative}
\begin{aligned}
f_m'(1)&=  
     \frac{1}{m} \frac{\sum_{\ve{p}} \tau(\ve{p}) x^{\tau(\ve{p})-1}}{\sum_{\ve{p}} x^{\tau(\ve{p})}}\Biggr\vert_{x=1}\\
   &=\frac{\sum_{\ve{p}} \tau(\ve{p})}{N(m)} \triangleq T_{\text{ave},m} , 
    \end{aligned}
\end{equation*}
where the sums in the numerator and denominator of the expressions on the right run over all paths $\ve{p}$ of length $m$ in $G$ and
$N(m)$ denotes the total number of such paths.
Since $\tau(\ve{p})/m$ is the average edge cost of the path $\ve{p}$, the derivative evaluated at $x=1$ equals $T_{\text{ave},m}  $, the average edge cost over all paths of length $m$ in $G$. Setting $x=1$ in~(\ref{eq:growth_rate_derivative}), we obtain
\begin{equation*}
   T_{\text{ave}}   \triangleq \lim_{m\rightarrow \infty}\frac{\sum_{\ve{p}} \tau(\ve{p})/m}{N(m)} = \frac{\rho'_G(1)}{\rho_G(1)} = (\alpha_{G}^{\mathsf{lo}})^{-1},
\end{equation*}
where we can interpret $T_{\text{ave}}$ as the asymptotic average cost per edge over all paths in $G$.   Noting that the Markov chain of maximum entropy $H_{max}$ on $G$ assigns probability approximately $2^{-mH_{max}}$ to each path in $G$, we see that this is also the average cost per edge with respect to this Markov chain. 

Turning to $\alpha_{G}^{\mathsf{up}}$, we note from~\cite[Theorem 3.17]{marcus_introduction_2001} that an expression similar to (\ref{eq:growth_rate}) applies to the cycles at each state $u$ in $G$, namely 
\begin{equation}
    \label{eq:cycle_growth_rate}
   \lim_{m\rightarrow\infty} \frac{1}{m} \log  (\ve{P}_G(x)^m)_{u,u}  = \log\rho_G(x).   
\end{equation}
The matrix element $(\ve{P}_G(x)^m)_{u,u}$ is the generating function for the cost of cycles of length $m$ at vertex $u$. So the logarithm in the argument of the limit can be written as 
\begin{equation*}
\label{eq:log_sum}
  \log (\ve{P}_G(x)^m)_{u,u} = \log\left(\sum_{\text{length-}m\; \text{cycles}\;\ve{p}\;\text{at}\;u} x^{\tau(\ve{p})}\right).
\end{equation*}
Let $T_{\min,m}  (u)$ denote the minimum average cost per edge over cycles of length $m$ at $u$. We note that  the limit $T_{\min}   = \liminf_{m\rightarrow \infty}T_{\min,m}  (u)$ is independent of $u$ and equals the minimum average cost per edge in a simple (non-intersecting) cycle in $G$. 

Rewriting $x=e^{\log(x)}$  and applying the log-sum-exp inequality to the resulting expression, we obtain
\begin{equation*}
    \label{eq:log_sum_exp}
    \begin{aligned}
    &\max_{\text{cycles}\;\ve{p}}\{\log(x)\tau(\ve{p})\}\leq \log (\ve{P}_G(x)^m)_{u,u} \\
    &\leq  
    \max_{\text{cycles}\;\ve{p}}\{\log(x)\tau(\ve{p})\} + \log (\ve{P}_G(1)^m)_{u,u},
    \end{aligned}
\end{equation*}
where the term on the right is the logarithm of the number of cycles of length $m$ at $u$. 

We are interested in the limit as $x\rightarrow 0^+$ so, restricting to the range $0 <x<1$ where $\log(x)<0$, the maximum evaluates to $\log(x)T_{\min,m}  $, yielding 
\begin{equation*}
    \label{eq:log_sum_exp_eval}
    \begin{aligned}
    \log(x)mT_{\min,m}   &\leq \log (\ve{P}_G(x)^m)_{u,u} \\
    &\leq  
   \log(x)mT_{\min,m}   + \log (\ve{P}_G(1)^m)_{u,u}.
   \end{aligned}
\end{equation*}
Normalizing by $m$ and taking the limit, we invoke 
(\ref{eq:cycle_growth_rate}) to conclude that 
\begin{equation*}
  \log(x)T_{\min}   \leq  \log \rho_G(x) \leq 
   \log(x)T_{\min}   +\log \rho_G(1).
 \end{equation*}
We divide by $\log(x)$ and take the limit as $x\rightarrow 0^+$. Recalling  that $\rho_G(x)\rightarrow 0$ as $x \rightarrow 0^+$, we apply  l'H\^{o}pital's rule to conclude 
 \begin{equation*}
     T_{\min}   = \lim_{x\rightarrow0^+} \frac{x\rho_G'(x)}{\rho_G(x)} = (\alpha_{G}^{\mathsf{up}})^{-1}.
 \end{equation*}
\end{proof}

\begin{proposition} \label{prop:maximality:concavity}
	Let $G$ be a strongly connected, deterministic, cost-diverse graph. Then $C_G(\alpha)$ is a concave function in $\alpha$ and its maximum is equal to $C_G(\alpha^*) = C_G$, where $\alpha^*=2^{C_G}/\rho_G'(2^{-C_G})$. 
	\end{proposition}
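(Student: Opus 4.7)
The plan is to exploit the piecewise structure of $C_G(\alpha)$ given by \Cref{thm:fixed-length:capacity}---linear on $[0,\alpha_{G}^{\mathsf{lo}}]$, determined by the critical equation on $(\alpha_{G}^{\mathsf{lo}},\alpha_{G}^{\mathsf{up}})$, and zero for $\alpha>\alpha_{G}^{\mathsf{up}}$---together with the strict log-log-convexity of $\rho_G$ from \Cref{lemma:log:log:convex}. The crux is an envelope calculation on the middle piece. Writing $g(\alpha) = -\log_2 x_0(\alpha) + \alpha \log_2 \rho_G(x_0(\alpha))$ with $x_0(\alpha)$ the unique positive root of $\alpha x\rho_G'(x) = \rho_G(x)$ (\Cref{lemma:alpha:boundary:unique:solution}), differentiating in $\alpha$ and collecting the chain-rule contributions gives
\[
g'(\alpha) = \log_2 \rho_G(x_0(\alpha)) + \frac{1}{\ln 2}\,\frac{dx_0}{d\alpha}\cdot \frac{\alpha x_0\rho_G'(x_0) - \rho_G(x_0)}{x_0\,\rho_G(x_0)},
\]
and the second summand vanishes identically because its numerator is the defining equation of $x_0(\alpha)$. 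Hence $g'(\alpha) = \log_2\rho_G(x_0(\alpha))$, a clean envelope identity.

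Next I would convert strict log-log-convexity into monotonicity. Setting $\psi(x) = x\rho_G'(x)/\rho_G(x) = \frac{d}{ds}\log\rho_G(e^s)\big|_{s=\log x}$, \Cref{lemma:log:log:convex} makes $\psi$ strictly increasing on $\mathbb{R}_{>0}$. Since the critical equation rearranges to $\alpha = 1/\psi(x_0)$, the map $\alpha\mapsto x_0(\alpha)$ is strictly decreasing. Moreover, $\rho_G$ is itself strictly increasing on $\mathbb{R}_{>0}$: $\log\rho_G(e^s)$ is strictly convex by \Cref{lemma:log:log:convex} and tends to $-\infty$ as $s\to-\infty$ (entries of $\bP_G$ vanish at $x=0$) and to $+\infty$ as $s\to+\infty$, so it must be strictly increasing. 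Composing, $g'(\alpha) = \log_2\rho_G(x_0(\alpha))$ is strictly decreasing and $g$ is strictly concave on $(\alpha_{G}^{\mathsf{lo}},\alpha_{G}^{\mathsf{up}})$. At $\alpha = \alpha_{G}^{\mathsf{lo}}$ the critical equation admits $x_0 = 1$, so both the value $g(\alpha_{G}^{\mathsf{lo}}) = \alpha_{G}^{\mathsf{lo}}\log_2\rho_G(1)$ and the one-sided slope $\log_2\rho_G(1)$ match the linear piece, giving a $C^1$ join and hence concavity on the non-trivial range $[0,\alpha_{G}^{\mathsf{up}}]$.

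For the location and value of the maximum, the linear piece has strictly positive slope $\log_2\rho_G(1) > 0$---a strongly connected cost-diverse graph is not a single simple cycle, so $\rho_G(1)>1$---whence the maximizer lies on the strictly concave middle piece and is characterized by $g'(\alpha^*) = 0$, i.e., $\rho_G(x_0(\alpha^*)) = 1$. By \Cref{thm:variable-length:capacity} the unique positive root of $\rho_G(x) = 1$ is $x^* = 2^{-C_G}$, so $x_0(\alpha^*) = 2^{-C_G}$; substituting into $\alpha = 1/(x\rho_G'(x))$ yields $\alpha^* = 2^{C_G}/\rho_G'(2^{-C_G})$, and the maximum value is $g(\alpha^*) = -\log_2 x^* + \alpha^*\log_2 1 = C_G$. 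The strict monotonicity of $\psi$ further confirms $\alpha^* = 1/\psi(x^*) \in (\alpha_{G}^{\mathsf{lo}},\alpha_{G}^{\mathsf{up}})$ since $x^*\in(0,1)$. The main obstacle is the envelope step---recognizing that the chain-rule bracket is exactly the critical equation, so that the awkward $dx_0/d\alpha$ factor cancels and $g'$ collapses to the clean $\log_2\rho_G(x_0(\alpha))$; everything afterward is a straightforward monotonicity argument fueled by strict log-log-convexity, plus the continuity check at $\alpha_{G}^{\mathsf{lo}}$.
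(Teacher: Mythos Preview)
Your proof is correct and follows essentially the same route as the paper's: the envelope computation showing $g'(\alpha)=\log_2\rho_G(x_0(\alpha))$ via cancellation against the critical equation, strict log-log-convexity of $\rho_G$ forcing $x_0(\alpha)$ to be strictly decreasing and hence $g$ strictly concave, the continuity (indeed $C^1$) match at $\alpha_G^{\mathsf{lo}}$, and the identification of the maximizer through $\rho_G(x_0(\alpha^*))=1$ combined with \Cref{thm:variable-length:capacity}. One minor remark: your argument that $\rho_G$ is strictly increasing via ``$\log\rho_G(e^s)\to-\infty$ as $s\to-\infty$'' tacitly assumes all edge costs are strictly positive; the paper instead invokes the Perron--Frobenius derivative formula of \Cref{lemma:diff:perron} (positive left/right eigenvectors against the nonnegative, not-identically-zero matrix $\bP_G'$), which avoids that assumption.
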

\begin{proof}
	To start with, $C_G(\alpha)$ is linear in the interval $0\leq \alpha < \alpha^{\mathsf{lo}}_G$. In the interval $\alpha^{\mathsf{lo}}_G\leq \alpha < \alpha^{\mathsf{up}}_G$, $C_G(\alpha) = -\log x_0(\alpha) +\alpha \log \rho_G(x_0(\alpha)) $, where $x_0(\alpha)$ is the unique positive solution to $f(x) = \alpha^{-1}$ with $f(x) \triangleq x \rho_G'(x)/\rho_G(x)$. Notice that $\rho_G(x)>0$ for all $x>0$ and thus, by Lemma~\ref{lemma:spectral:radius:analytic}, $f(x)$ is analytic for all $x>0$. Furthermore, as in the proof of Lemma~\ref{lemma:alpha:boundary:unique:solution}, we can show that $\frac{\partial}{\partial x}f(x)>0$, which means that $x_0(\alpha)$ is analytic in $\alpha$ and also strictly monotonically decreasing in $\alpha$. Therefore, for $\alpha^{\mathsf{lo}}_G\leq \alpha < \alpha^{\mathsf{up}}_G$,
	\begin{align*}
		\frac{\partial}{\partial \alpha} C_G(\alpha) &= -\frac{x_0'(\alpha)}{x_0(\alpha)} +\log \rho_G(x_0(\alpha)) 
		%\\&\hspace{0.25in}
		+ \alpha \frac{x_0'(\alpha)\rho_G'(x_0(\alpha))}{\rho_G(x_0(\alpha))} \\
		 &= -\frac{x_0'(\alpha)}{x_0(\alpha)} +\log \rho_G(x_0(\alpha)) 
		 %\\&\hspace{0.25in}
		 + \alpha f(x_0(\alpha)) \frac{x_0'(\alpha)}{x_0(\alpha)} \\
		 &\overset{(a)}{=}	 \log \rho_G(x_0(\alpha)),
	\end{align*}
	where we used in $(a)$ that $f(x_0(\alpha)) = \alpha^{-1}$ by definition. Since $x_0(\alpha)$ is strictly monotonically decreasing in $\alpha$ and also $\rho_G(x)$ and the logarithm are strictly monotone functions (see Lemma~\ref{lemma:diff:perron}), $C_G(\alpha)$ is strictly concave in the considered interval. Then $C_G(\alpha) \to \alpha^{\mathsf{lo}}_G \log \rho_G(1)$ by the definition of $\alpha^{\mathsf{lo}}_G$, as $\alpha$ approaches $\alpha^{\mathsf{lo}}_G$ from both the left and right, proving continuity of $C_G(\alpha)$. Thus $C_G(\alpha)$ is a concave function on the full interval $0 \leq \alpha \leq\alpha^{\mathsf{up}}_G$. 
	
	From the above derivation of the derivative of $C_G(\alpha)$, we further see that $\alpha^*$ with $\rho_G(x_0(\alpha^*)) = 1$ is a unique stationary point of $C_G(\alpha)$ with capacity $C_G(\alpha^*) = - \log x_0(\alpha^*) = C_G$. It follows that $\alpha^*$ is the unique solution for $\alpha$ to the system of two equations $f(x)=\alpha^{-1}$ and $\rho_G(x) = 1$ with $x>0$. The above exposition proves that the solution to these equations are $\alpha^*$ and $x_0(\alpha^*)$, where $\alpha^*$ is as given in the statement.
	\end{proof}

%
%
%!TEX root = bare_jrnl.tex
\section{Conclusion}
In this paper we have analyzed costly constrained channels, i.e., directed graphs with labeled and weighted edges. We have derived the precise asymptotic behavior of the size of the number of limited-cost paths for arbitrary strongly connected and cost-diverse graphs. That is, we have explicitly derived an easily computable function, whose ratio with respect to the true number of followers approaches one for large costs. Our theorems imply explicit expressions for the fixed-length and variable-length capacity, i.e., the exponential growth rate of the number of paths. Interestingly, through the direct derivation of the capacity, we recover a known result on the equivalence of the combinatorial and probabilistic capacity of costly constrained channels subject an average cost constraint.  While previous works have shown this equivalence using a typical sequence argument and converse inequalities, this proof yields an expression for the combinatorial capacity in terms of singularities of a generating function that matches the known formula for the probabilistic capacity obtained by Markov chain analysis.

Establishing an explicit and comprehensive framework to compute both the fixed-length and variable capacity for arbitrary strongly connected graphs, our results  not only open the way for future research but can also directly be employed in suitable applications. For our derivations, we have extended the well-known notions of periodicity to weighted graphs. We show that the notion of cost-diversity is the precise property that distinguishes between degenerate and smooth behavior of the fixed-length capacity. In our exposition we use results from analytic combinatorics in several variables, which establishes novel and intriguing connections between noiseless information theory and complex analysis. In order to prove these connections we have built a comprehensive theory that extends results from the Perron-Frobenius theory of irreducible matrices. These results were then related to properties of the singularities of the generating functions of the follower set size, which built the bridge to the theory of analytic combinatorics in several variables.

We illustrated our capacity results by analyzing the discrete noiseless channels describing the synthesis of $q$-ary sequences using the $q$-ary alternating sequences. The  case $q=4$ is particularly relevant to the synthesis of DNA strands.  Our framework can be extended to the analysis of maximum achievable synthesis rates for general synthesis sequences and synthesis of constrained sequences, as well as to the enumeration of subsequences of a given supersequence. These extensions are discussed in~\cite{lenz_thesis_2022}.

\appendices

%!TEX root = bare_jrnl.tex
\section{Auxiliary Results} \label{appendix}

\begin{lemma} \label{lemma:gcd:cycle:lengths}
	Let $G = (\V,\E,\sigma,\tau)$ be a strongly connected graph with  period $d$, as defined in Definition~\ref{def:period}. Then the greatest common divisor of all cycle lengths is $d$.
\end{lemma}
\begin{proof}
	To start with, we note that the length $m$ of each cycle must be divisible by $d$, since otherwise the twice repetition of this cycle would not have a length congruent to that of the single cycle modulo $d$. Analogously to the proof of Lemma~\ref{lemma:cost:period:existence:paths}, we can prove the existence of two cycles at the same state whose lengths differ by precisely $d$. This implies that the the greatest common divisor of the cycle lengths is $d$, which proves the statement.
\end{proof}
\begin{lemma} \label{lemma:var:crt}
	Let $c\in \N$ and $(m_1,\tau_1),(m_2,\tau_2), \dots$ be pairs of integers $ (m_i,\tau_i)\in\N^2 $. Denote by $d$ the greatest common divisor of all $m_i$. If these pairs satisfy
	$$ m_i\tau_j \equiv m_j\tau_i \pmod{(c d)} $$
	for all $i$ and $j$ then there exists $b \in \mathbb{Z}$ such that, for all $i$,
	$$ d\tau_i \equiv m_i b  \pmod{(cd)}. $$
	Furthermore, any $b' \in \mathbb{Z}$ with $ b' \equiv b \pmod c$ has the same property.
\end{lemma}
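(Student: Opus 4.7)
The plan is to construct $b$ explicitly using Bézout's identity. Since $d = \gcd(m_1,m_2,\ldots)$, there exist integers $z_1,z_2,\ldots$, with only finitely many nonzero, such that $\sum_i z_i m_i = d$. I would then simply set
\[ b \;=\; \sum_i z_i \tau_i. \]

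To verify that this $b$ works, I would compute, for any fixed index $j$,
\[ m_j b \;=\; \sum_i z_i\, m_j \tau_i \;\equiv\; \sum_i z_i\, m_i \tau_j \;=\; \tau_j \sum_i z_i m_i \;=\; d\tau_j \pmod{cd}, \]
where the congruence uses the hypothesis $m_j\tau_i \equiv m_i\tau_j \pmod{cd}$ applied termwise. This gives exactly $d\tau_j \equiv m_j b \pmod{cd}$, as required.

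For the final invariance statement, suppose $b' \equiv b \pmod c$, so $b' = b + kc$ for some $k \in \mathbb{Z}$. Then $m_i b' = m_i b + m_i k c$. Because $d \mid m_i$ by definition of $d$, the integer $m_i/d$ is well-defined, so $m_i k c = (m_i/d)\, k \cdot cd$ is a multiple of $cd$. Hence $m_i b' \equiv m_i b \equiv d\tau_i \pmod{cd}$, establishing that $b'$ inherits the property from $b$.

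The argument is essentially a one-line application of Bézout's identity once one thinks to use it; the only mild bookkeeping concern is that the indexing set may be infinite (arising from strong connectivity producing arbitrarily many cycles), but Bézout guarantees a representation of $d$ using only finitely many $m_i$, so the sum defining $b$ is a finite sum and all manipulations remain valid. I anticipate no serious obstacle.
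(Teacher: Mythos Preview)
Your proof is correct and follows the same approach as the paper: construct $b=\sum_i z_i\tau_i$ from a B\'ezout relation $\sum_i z_i m_i = d$ and verify the congruence termwise. Your presentation is in fact slightly cleaner (you apply the hypothesis directly without splitting off the diagonal term), and you also supply the short argument for the final clause about $b'\equiv b\pmod c$, which the paper's proof omits.
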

\begin{proof}
	We prove the statement by a direct construction. Assume without loss of generality that $\gcd(m_1,\dots,m_n) = d$ for some $n\in\N$. This is possible, since there exist finitely many $m_i$ such that their greatest common divisor is equal to $d$. By Bézout's identity, there exist $z_1,\dots,z_n \in \mathbb{Z}$ with $z_1m_1+\dots+z_nm_n = d$. Choosing $ b = z_1\tau_1+\dots z_n\tau_n$, we obtain for any $1\leq i \leq n$,
	\begin{align*}
		m_ib &= z_im_i\tau_i + \sum_{j\neq i} z_jm_i \tau_j \\
		&= \tau_i\left(d- \sum_{j\neq i} z_jm_j\right)+ \sum_{j\neq i} z_jm_i \tau_j \\
		&= \tau_id + \sum_{j\neq i} z_j(m_i\tau_j-m_j\tau_i).
	\end{align*}
	By assumption $m_i\tau_j-m_j\tau_i \equiv 0 \pmod{(cd)}$ and thus $m_ib \equiv \tau_id \pmod{(cd)}$. On the other hand, for any $i>n$, we set $z_i = 0$ and obtain via a similar argument
	$$ m_ib = z_im_i\tau_i + \sum_{j=1}^n z_jm_i \tau_j = \tau_i d + \sum_{j=1}^n z_j(m_i \tau_j-m_j\tau_i), $$
	which implies that $m_ib \equiv \tau_id \pmod{(cd)}$. This concludes the proof.

\end{proof}

\bibliography{refs}
\bibliographystyle{IEEEtran}

\end{document}